\documentclass[12pt]{article}
\usepackage{jheppub} 
\usepackage[T1]{fontenc}
\usepackage{lmodern}
\usepackage{microtype}
\usepackage{amsmath,amssymb,amsfonts,mathtools,bbm}
\usepackage{bm}
\usepackage{bbold}
\usepackage{physics}
\usepackage{siunitx}
\usepackage{xcolor}
\usepackage{booktabs}
\usepackage{url}
\usepackage{hyperref}
\usepackage{tikz,pgfplots}
\pgfplotsset{compat=1.18}
\usepgfplotslibrary{groupplots}
\usepgfplotslibrary{fillbetween}

\newtheorem{theorem}{Theorem}[section]

\newtheorem{lemma}[theorem]{Lemma}

\newcommand{\Z}{\mathbb{Z}}
\newcommand{\R}{\mathbb{R}}

\newcommand{\slashed}[1]{\mathrlap{\!\not{\phantom{#1}}}#1}

\title{Discrete \texorpdfstring{$\theta$}{theta} Projection: 
A Gauge-Protected Solution to the Strong CP Problem Without Axions}

\author[a,b]{Sameer Ahmad Mir}
\emailAdd{sameerphst@gmail.com}
\affiliation[a]{Canadian Quantum Research Center, 460 Doyle Ave 106, Kelowna, BC V1Y 0C2, Canada}
\affiliation[b]{Department of Computer Sciences, Asian School of Business, Noida, Uttar Pradesh, 201303, India}
\author[c]{Bobby Eka Gunara}
\emailAdd{bobby@itb.ac.id}
\affiliation[c]{Theoretical High Energy Physics Research Division,
Faculty of Mathematics and Natural Sciences, Institut Teknologi Bandung,
Jl. Ganesha no. 10 Bandung, 40132 Indonesia}
\author[a,d,e,f]{Mir Faizal}
\emailAdd{mirfaizalmir@googlemail.com}
\affiliation[d]{Irving K. Barber School of Arts and Sciences, University of British Columbia Okanagan, Kelowna, BC V1V 1V7, Canada}
\affiliation[e]{Department of Mathematical Sciences, Durham University, Upper Mountjoy, Stockton Road, Durham DH1 3LE, UK}
\affiliation[f]{Computational Mathematics Group, Hasselt University, Agoralaan Gebouw D, Diepenbeek, 3590 Belgium}

\abstract{

We address the strong CP problem: why the physical QCD angle theta-bar must be extraordinarily small given the stringent bounds on the neutron electric dipole moment. Peccei-Quinn axion models can relax theta-bar dynamically, but rely on an approximate global symmetry expected to be violated by quantum gravity and face severe astrophysical and cosmological constraints. We propose Discrete theta Projection, an axionless, gauge-protected resolution obtained by gauging a finite cyclic subgroup $Z_N $of the $2\pi$ shift symmetry of theta. Coupling QCD to a compact, local and gapped topological sector orbifolds the path integral, identifying theta values that differ by $2\pi/N$ and admitting only instanton sectors whose topological charge lies in $Z_N$. In the large four-volume limit the vacuum energy becomes the lower envelope of the orbifold images, so the theory dynamically selects the branch closest to the CP-symmetric point, enforcing $|\bar{\theta}| \le \pi/N$ without assuming any prior smallness. Because the discrete shift is gauged, continuous renormalization of theta is forbidden; the construction can be formulated via higher-form/two-group structure with integer-quantized couplings fixed by anomaly inflow, ensuring radiative and gravitational stability and satisfying mixed gauge-gravity consistency conditions. The framework predicts a neutron EDM suppressed by $1/N$, no axion signatures, no domain-wall/isocurvature issues, and lattice diagnostics: piecewise-analytic theta dependence with cusps at odd fractions of the reduced period and a global curvature scaling as $1/N^2$. We provide the EFT construction, a nonperturbative proof of vacuum projection, a full anomaly analysis, and UV embeddings (including discrete clockwork chains) that generate large effective N while preserving integrality and consistency throughout.

}

\begin{document}
\maketitle
\flushbottom

\section{Introduction}
The physical strong-CP angle is defined as the sum of the bare gauge-theory parameter and the CP-violating phase inherited from the quark mass matrix. Concretely, it is the fundamental angle plus the argument of the determinant of the product of the up- and down-quark Yukawa matrices. Its unnaturally small empirical value is enforced by the nonobservation of a neutron electric dipole moment at current sensitivities~\cite{Pospelov:2005pr,Abel:2020pzs,Crewther:1979pi,Baluni:1978rf}. In a generic low-energy effective field theory one expects an angle of order unity. Such an angle would induce a dipole moment orders of magnitude above the experimental bound. The question of why QCD is so close to CP conservation is therefore both sharp and pressing~\cite{Pospelov:2005pr,Kim:2008hd}. 
The standard approach invokes the Peccei-Quinn mechanism. One posits a spontaneously broken global symmetry whose associated axion dynamically relaxes the effective angle to zero \cite{Peccei:1977hh,Weinberg:1977ma,Wilczek:1977pj,Kim:2008hd}. While elegant in principle, this direction is encumbered by familiar weaknesses that have proven hard to eliminate. These include the axion-quality problem, the expectation that quantum gravity does not tolerate exact global symmetries and thus generically introduces explicit symmetry breaking, and stringent astrophysical and cosmological constraints on light axion-like states \cite{Holman:1992us,Kamionkowski:1992mf,Barr:1992qq,Banks:2010zn,Harlow:2018tng,Graham:2015ouw,Raffelt:2006cw,Wantz:2009it}. These issues motivate a conceptually orthogonal solution that keeps the strong dynamics of QCD intact and replaces a fragile global symmetry with a principle that quantum gravity is expected to respect \cite{Banks:2010zn,Harlow:2018tng}. 
The central idea developed in this work is Discrete $\theta$ Projection. One gauges a finite cyclic subgroup of the periodic shift of the vacuum angle by coupling QCD to a compact topological sector. In the path integral this coupling implements an orbifolding that identifies angles separated by an integer fraction of the fundamental period and sums over the corresponding images \cite{Dvali:2005an}. In the thermodynamic limit the vacuum energy density becomes the lower envelope of the image branches. The theory dynamically selects the branch whose angle is closest to the CP-symmetric point. This selection nonperturbatively confines the physical angle to lie within a principal cell of width set by the order of the gauged subgroup. 
The suppression is not an assumption about initial conditions or a perturbative tuning. It is a consequence of vacuum competition at infinite four-volume and of the convex, even, and periodic structure of the pure-gauge energy anticipated from large-color arguments and the classic analyses of $\theta$ dependence. In those analyses a tower of nearly degenerate branches cross at special angles and produce cusp singularities in the energy that encode level repulsion among metastable vacua \cite{Veneziano:1979ec,DiVecchia:1980yfw,Witten:1998uka}. The conceptual novelty lies in enforcing the small physical angle through a discrete gauge symmetry rather than a continuous global one. 
The compact topological sector can be formulated as a local, gapped effective theory of higher-form gauge fields with quantized couplings. As a result, it carries no propagating degrees of freedom and introduces no axion phenomenology. Its sole role is to impose a gauge identification of $\theta$-configurations and a global selection rule on topological sectors, under which only instanton numbers compatible with the gauged subgroup contribute \cite{Hsin:2020nts,Dvali:2005an,Freed:2014iua}. Because the symmetry is gauged and discrete, quantum-gravity corrections respect it rather than spoil it. The integrality of the topological data is fixed by anomaly inflow and the modern framework of generalized symmetries and their two-group refinements. These structures guarantee that continuous renormalization of the angle is forbidden and that any allowed nonperturbative shift is an integer multiple of the orbifold step \cite{Gaiotto:2014kfa,Cordova:2019uob,Krauss:1988zc}.

The resulting projection replaces a light axion with heavy, nonpropagating topological defects. These are membrane-like excitations that are charged under the discrete gauge sector. Their tensions grow with the order of the subgroup, and their nucleation is exponentially suppressed \cite{Brown:1987dd,Bousso:2000xa,Kaloper:2025wgn,Bandos:2019wgy}. This behavior eliminates the domain-wall and isocurvature problems that complicate axion cosmology \cite{Sikivie:2006ni,Marsh:2015xka,DiLuzio:2020wdo,Kawasaki:2013iha}, while preserving exact gauge protection in the infrared and ultraviolet alike. 
Phenomenologically the mechanism is highly predictive. The chiral relation between the neutron electric dipole moment and the physical angle implies a scaling in which the projected theory yields an upper bound set by a numerical prefactor multiplied by the inverse of the gauged order. Present limits therefore translate directly into the minimal order required for compatibility, and continued experimental progress maps to a clear target range without reference to additional model-dependent structure \cite{DiVecchia:1980yfw,Engel:2013lsa,Chupp:2017rkp,Shindler:2021bcx,Dar:2000tn,Yoon:2017tag}. 
The global $\theta$ curvature that governs the topological susceptibility over a full period is suppressed by the square of the gauged order. The energy differences between adjacent branches shrink accordingly. As a result, the envelope becomes extremely flat near the origin, even though the underlying microscopic branch retains the familiar small-angle curvature of pure Yang-Mills. Together these facts deliver testable predictions for the distribution of topological charge and the shape of the vacuum energy on the lattice, including characteristic cusps at odd multiples of the reduced fundamental spacing and a universal suppression of the curvature about the principal cell \cite{Witten:1998uka,DelDebbio:2004ns,DElia:2003zne,Alles:2004vi,Vicari:2008jw,Bennett:2022gdz,Giusti:2007tu}. 
Ultraviolet completions that generate large effective orders arise naturally in discrete clockwork chains of compact one- and two-form sectors. In these chains, modest microscopic inputs multiply into exponentially large effective identifications while preserving all anomaly constraints and the integrality required by inflow. These architectures demonstrate that the projection can be engineered within controlled local EFTs and embedded consistently into broader gauge-theory frameworks \cite{Gaiotto:2014kfa,Cordova:2019uob,Kaplan:2015fuy,Giudice:2016yja,Ahmed:2016viu,Gushterov:2018spg,Batell:2010bp}. 
It is important to emphasize the scope of what we are and are not claiming. The mechanism proposed here is aimed squarely at the strong-CP problem, i.e. at explaining the smallness of the QCD angle $\bar\theta$. It leaves the Kobayashi-Maskawa phase and weak CP violation in the Standard Model unchanged, and all CKM-induced contributions to observables such as the neutron EDM remain as in the usual theory. 
Moreover, throughout this work we adopt the perspective of a controlled four-dimensional effective field theory for QCD coupled to a compact topological sector. This EFT is supplemented by simple classes of topological UV realizations, such as the discrete clockwork chains of Sec.~\ref{sec:LargeN-UV}, that accommodate large effective $N$ and the required anomaly inflow. We do not attempt to construct a fully unified Standard Model plus gravity completion. Rather, our goal is to demonstrate that, under the usual assumptions about discrete gauge symmetries in quantum gravity, the discrete $\theta$ projection can be implemented consistently, with radiative stability and mixed gauge-gravity anomalies under control.

The present paper develops this gauge-protected solution in detail. We begin from a local effective description of the topological sector coupled to QCD. We then derive the exact projection bound and the envelope structure in the thermodynamic limit. Next, we establish radiative and gravitational stability within the generalized-symmetry and inflow formalism. We analyze anomaly and consistency conditions, including mixed gauge-gravity couplings. We extract the phenomenological and cosmological consequences, with an emphasis on neutron-EDM predictions and lattice diagnostics. We then outline ultraviolet completions, including discrete clockwork realizations of large effective order, and connect these constructions to practical parameter ranges relevant for current and near-future experimental and numerical tests \cite{Witten:1998uka,DiVecchia:1980yfw,Veneziano:1979ec,Gaiotto:2014kfa,Cordova:2019uob,Krauss:1988zc}.

\section{Effective Description: Gauging a Discrete $\theta$-Shift}
In this section, we develop a fully explicit effective description of the theory obtained by gauging a discrete $\theta$-shift symmetry. We consider a gauge theory (for definiteness, a pure $SU(N)$ Yang-Mills theory) with the standard topological $\theta$ term. The action includes 
\begin{equation}
S_{\theta} \;=\; \frac{i\,\theta}{8\pi^2} \int \Tr(F\wedge F)\,,
\label{eq:Stop}
\end{equation} 
where $F$ is the $SU(N)$ field strength and $\Tr(F\wedge F)$ is the color topological charge density. Here 
\begin{equation}
Q \;\equiv\; \frac{1}{8\pi^2}\int \Tr(F\wedge F) \;\in\; \mathbb{Z}
\end{equation} 
is the integer-valued instanton number (topological charge) \cite{Witten:1998uka,Vicari:2008jw}. The exponential of $S_{\theta}$ in the Euclidean path integral is $\exp(i\theta Q)$, which is invariant under the usual $2\pi$ periodicity $\theta\sim\theta+2\pi$ because $\exp(i\,2\pi Q)=1$ for any integer $Q$. We now suppose that the theory has a discrete shift symmetry $\mathbb{Z}_N$ acting on the $\theta$ angle by 
\begin{equation}
\theta \;\mapsto\; \theta + \frac{2\pi}{N}\,. 
\end{equation} 
Such discrete shifts fit naturally into the modern framework of generalized global and higher-form symmetries~\cite{Gaiotto:2014kfa,Hsin:2020nts}. 
Our goal is to gauge this $\mathbb{Z}_N$ symmetry, making $\theta$-shifts by $2\pi/N$ equivalent to the identity. This is implemented by orbifolding the $\theta$-parameter space (the circle $0\le\theta<2\pi$) by identifying $\theta\sim \theta+2\pi/N$~\cite{Cordova:2019uob,Dvali:2005an}. Physically, gauging the discrete shift means that configurations which differ by a $\theta$-increment of $2\pi/N$ (accompanied by an appropriate large gauge transformation on the gauge fields, as discussed below~\cite{Cordova:2019uob,Hsin:2020nts}) are treated as the same physical configuration. The result is that the fundamental domain of $\theta$ is reduced from $[0,2\pi)$ to $[0,2\pi/N)$, i.e. the $\theta$-circle is cut into $N$ equivalent segments. This $\mathbb{Z}_N$ orbifold of the $\theta$-circle is depicted schematically in Fig.~\ref{fig:orbifold}. 

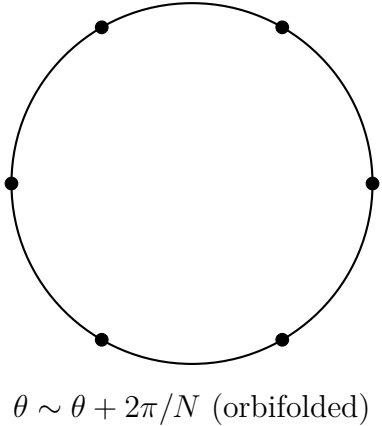
\begin{figure}[h]
\centering
\begin{tikzpicture}[scale=1.2]
\draw[thick] (0,0) circle (2cm);
\foreach \x in {0,60,120,180,240,300}{
  \filldraw[black] ({2*cos(\x)},{2*sin(\x)}) circle (2pt);
}
\node at (0,-2.5){$\theta \sim \theta + 2\pi/N$ (orbifolded)};
\end{tikzpicture}
\caption{Orbifolding of the $\theta$-circle under discrete shift symmetry. Points separated by $2\pi/N$ along the original $\theta$-circle are identified, yielding a smaller fundamental domain.}
\label{fig:orbifold}
\end{figure}

\subsection{Orbifolding the $\theta$-Circle}
To understand the consequences of identifying $\theta\sim\theta+2\pi/N$, let us examine how the path integral and topological charge sectors are affected. 
Under a shift $\theta\to \theta+\frac{2\pi}{N}$, the weight $\exp(i\theta Q)$ in the path integral transforms as 
\begin{equation}
\exp(i\theta Q)\;\;\to\;\;\exp\!\Big[i\Big(\theta+\frac{2\pi}{N}\Big)Q\Big] \;=\; \exp(i\theta Q)\,\exp\!\Big(\frac{i2\pi Q}{N}\Big)\,. 
\end{equation} 
For this to represent a symmetry of the theory, the extra phase $\exp(i2\pi Q/N)$ must not change the physics. 
However, since $Q$ is an integer, $\exp(i2\pi Q/N)$ is not generally unity, except in the special case that $Q$ is a multiple of $N$.\footnote{For classic discussions of instanton number and $\theta$ dependence in Yang-Mills, see for example \cite{Belavin:1975fg,tHooft:1976snw}.} 
This shows that a shift of $\theta$ by $2\pi/N$ is not a symmetry of the original theory for generic configurations. 
Rather, it changes the weight by a nontrivial phase. 
In order to gauge this transformation, one must introduce additional degrees of freedom that compensate for the phase. 
Equivalently, one must sum over contributions related by the $\mathbb{Z}_N$ shift so that the phases interfere and enforce the identification. 
In practice, gauging the $\mathbb{Z}_N$ means we project the path integral onto $\mathbb{Z}_N$-invariant contributions. 

A convenient way to implement this projection is to insert a suitable operator into the path integral that averages over $\theta$-shifts. 
Concretely, we introduce the operator 
\begin{equation}
P_{\mathbb{Z}_N} \;\equiv\; \frac{1}{N}\sum_{m=0}^{N-1} U^m\,,
\end{equation} 
where $U$ is the unitary operator that shifts $\theta \to \theta + 2\pi/N$ (accompanied by the appropriate large gauge transformation on $A$ that we will discuss momentarily). 
Acting on any configuration, $U$ multiplies the path-integral weight by the phase $\exp(i2\pi Q/N)$, so $U^m$ contributes a factor $\exp(i2\pi mQ/N)$. 
Inserting $P_{\mathbb{Z}_N}$ into the partition function and using $\int \mathcal{D}A$ to denote the functional integral over gauge fields, we obtain the gauged partition function as 
\begin{equation}
Z_{\text{gauged}}(\theta) \;=\; \int \mathcal{D}A \; P_{\mathbb{Z}_N} \; e^{-S_{\text{YM}}[A] + i\theta Q[A]}\,,
\label{eq:ZgaugedDef}
\end{equation}
where $S_{\text{YM}}$ is the Yang-Mills action (without the $\theta$ term). 
This sum enforces $\theta$-shift invariance in direct analogy with the standard gauging of discrete and higher-form symmetries \cite{Gaiotto:2014kfa,Hsin:2020nts}. 
Indeed, plugging in the definition of $P_{\mathbb{Z}_N}$, we can rewrite \eqref{eq:ZgaugedDef} as 
\begin{equation}
Z_{\text{gauged}}(\theta) \;=\; \frac{1}{N}\sum_{m=0}^{N-1}\;\int \mathcal{D}A\; \exp\!\Big[-S_{\text{YM}}[A] + i\theta Q[A] + \frac{i\,2\pi m}{N}Q[A]\Big] . 
\label{eq:Zgauged}
\end{equation}
Interchanging the sum and the path integral, the factor $\frac{1}{N}\sum_{m=0}^{N-1}e^{i2\pi mQ/N}$ acts as a projector onto those field configurations for which $e^{i2\pi Q/N}=1$. 
In fact, as a distribution over integer $Q$, we have 
\begin{equation}
\frac{1}{N}\sum_{m=0}^{N-1}\exp\!\Big(\frac{i2\pi m}{N}Q\Big) \;=\; 
\begin{cases}
1 & (Q\;\equiv\;0 \pmod N)\,,\\[6pt]
0 & (Q\not\equiv 0 \pmod N)\,. 
\end{cases}
\end{equation} 
Thus, the only gauge-field configurations that survive in \eqref{eq:Zgauged} are those whose topological charge $Q$ is a multiple of $N$. 
In other words, gauging the $\mathbb{Z}_N$ shift symmetry partitions the path integral into sectors where $Q$ is effectively defined modulo $N$, and it retains only those sectors with $Q/N \in \mathbb{Z}$. 
Configurations whose instanton number $Q$ is not divisible by $N$ cancel out via destructive interference in the sum \eqref{eq:Zgauged}. 

This result has several important implications. 
First, the physical $\theta$ angle of the gauged theory has an effective $2\pi/N$ periodicity. 
In fact, once only $Q$ divisible by $N$ contribute, shifting $\theta$ by $2\pi/N$ changes the weight by $\exp[i(2\pi/N) Q] = \exp(i2\pi k)$ for $Q=N k$, which is unity. 
Therefore $\theta$ and $\theta+2\pi/N$ are now truly equivalent in the gauged theory. 
We can define a new angle $\theta_{\mathrm{eff}}$ to parametrize the orbifolded $\theta$-circle. 
Let 
\begin{equation}
\theta_{\mathrm{eff}} \;\equiv\; N\,\theta\,,
\label{eq:theta_eff}
\end{equation} 
so that when $\theta$ ranges over $[0,2\pi/N)$, the “effective” angle $\theta_{\mathrm{eff}}$ ranges over $[0,2\pi)$. 
In terms of $\theta_{\mathrm{eff}}$, the topological term \eqref{eq:Stop} can be written as 
\begin{equation}
S_{\theta} \;=\; \frac{i\,\theta_{\mathrm{eff}}}{2\pi\,N} \int \Tr(F\wedge F) \;=\; \frac{i\,\theta_{\mathrm{eff}}}{2\pi} \int \Tr(F\wedge F/N)\,.
\end{equation} 
But since $\int \Tr(F\wedge F)=2\pi N k$ on any surviving configuration ($Q=Nk$), we have 
\begin{equation}
S_{\theta} \;=\; \frac{i\,\theta_{\mathrm{eff}}}{2\pi} \cdot 2\pi k \;=\; i\,\theta_{\mathrm{eff}}\,k\,,
\end{equation} 
showing that $\theta_{\mathrm{eff}}$ couples to the integer $k=Q/N$ just like an ordinary $\theta$ angle would couple to $Q$. 
In this sense, $\theta_{\mathrm{eff}}$ is the ``un-orbifolded'' angle (covering $0$ to $2\pi$) while $\theta$ is restricted to the smaller fundamental domain. 
The orbifolding thus does not eliminate the topological term but rather quantizes $\theta$ in units of $2\pi/N$. 
Equivalently, one can say that the original $\theta$ is now effectively an angular variable with only $N$ allowed values. 
When $N$ is large, these values appear quasi-continuous, but only increments of $2\pi/N$ are meaningful. 

Second, because only $Q$ divisible by $N$ appears, instantons effectively come in bundles of $N$. 
If the original theory had an instanton of action $S_{\text{inst}}$ (with $Q=1$), in the gauged theory a configuration carrying one unit of $Q$ is not gauge-invariant by itself. 
Only $N$ instantons taken together (total $Q=N$) form an invariant configuration. 
In a loose sense, the discrete gauging introduces a $\mathbb{Z}_N$ ``selection rule'' that forbids isolated single-instanton events. 
They must either come in multiples of $N$ or be connected to other processes that ensure overall $Q$ is a multiple of $N$. 
We will see this reflected in the low-energy effective descriptions below. 

Finally, let us elaborate on the required accompanying transformation on the gauge fields, which we glossed over above. 
The shift $\theta\to \theta+2\pi/N$ by itself is merely a change of a coupling constant. 
However, in a 4$d$ gauge theory, one may consider a large gauge transformation that is not contractible to the identity. 
Such a transformation can shift the instanton number by an integer. 
In particular, $\pi_3(SU(N))=\mathbb{Z}$, so there exist gauge transformations $g$ of winding number $n$ such that under $A \mapsto A^g$ one finds $Q[A^g] = Q[A] + n$. 
Under such a transformation, the $\theta$ term in the action changes by 
\begin{equation}
\Delta S_{\theta} \;=\; \frac{i\,\theta}{2\pi}\int \Tr(F\wedge F)\Big|_{\Delta Q = n} \;=\; i\,\theta\,n\,.
\end{equation} 
If we now perform a combined operation, we shift $\theta$ by $-\,2\pi/N$ and perform a large gauge transformation of winding $+1$. 
The net change in the action is 
\begin{equation}
\Delta S_{\theta}^{\text{(net)}} \;=\; i\,(\theta - \tfrac{2\pi}{N})(Q+1)\;-\;i\,\theta\,Q \;=\; i\,\theta - \frac{i2\pi}{N}(Q+1)\,,
\end{equation} 
so the path-integral weight is multiplied by $\exp(i\theta)\exp(-i\frac{2\pi}{N}Q)\exp(-i\frac{2\pi}{N})$. 
For generic $\theta$ and $Q$ this is not unity. 
However, if we consider the combined operation repeated $N$ times, then $\theta$ shifts by $-2\pi$ (returning to its original value, since $\theta$ is $2\pi$-periodic), and the winding number adds up to $N$. 
The total phase after $N$ repetitions is $\exp(iN\theta)\exp(-i2\pi Q)\exp(-i2\pi) = \exp(iN\theta)$, which is $1$ if $\theta$ takes one of the discrete values $\theta = 2\pi k/N$ (for some integer $k$). 
In other words, at those quantized values of $\theta$, a $\mathbb{Z}_N$ subgroup of the large gauge transformation group becomes a true symmetry of the theory. 
The combination of a $2\pi/N$ shift in $\theta$ and a unit winding increase in $Q$ leaves the action invariant \cite{Tanizaki:2018wtg}. 
By coupling the theory to a suitable topological background field, we effectively enforce $\theta$ to take those discrete values dynamically. 
This makes the $\mathbb{Z}_N$ shift a gauge symmetry rather than a global one \cite{Cordova:2019uob,Freed:2014iua,Freed:2016rqq}. 
The path-integral construction above using $P_{\mathbb{Z}_N}$ is essentially a direct way to implement this. 
It sums over sectors related by $Q\to Q+1$ (the large gauge transformation) weighted by the appropriate phase to enforce invariance under $\theta\to\theta+2\pi/N$. 
In summary, gauging the discrete $\theta$-shift can be thought of as orbifolding the theory by a combined operation $(\theta\to\theta+2\pi/N,\;Q\to Q+1)$. 

Having established the orbifold projection conditions on the $\theta$-circle and the resultant selection rule $Q\in N\mathbb{Z}$, we proceed to describe two complementary effective field theory (EFT) realizations of this scenario. 
We will refer to these as \emph{Model A} and \emph{Model B} for convenience. 
Both models incorporate the gauged $\mathbb{Z}_N$ shift symmetry but in different ways. 
Model~A encodes the discrete $\theta$-shift through a modification of the gauge theory's global structure and the introduction of discrete topological fields (higher-form gauge fields). 
Model~B introduces a new dynamical axion field to represent the $\theta$ angle and realizes the $\mathbb{Z}_N$ shift as a symmetry of an axion potential. 
We present each model in turn, detailing their assumptions, field content, symmetry structure, topological term derivations, and the implications for anomalies and domain walls.

\subsection{Topological Sector (EFT Realizations)}
\noindent\textbf{Model A: Discrete topological gauging and higher-form symmetry.}\quad 
In Model~A, we realize the gauging of the $\theta$-shift by coupling the gauge theory to a discrete topological field that implements the $\mathbb{Z}_N$ identification of $\theta$. 
One way to think of this is to consider changing the global form of the gauge group. 
Instead of an $SU(N)$ gauge theory, we consider an $SU(N)/\mathbb{Z}_N$ gauge theory~\cite{Aharony:2013hda}. 
The $SU(N)/\mathbb{Z}_N$ theory can be viewed as the $SU(N)$ theory in which the $\mathbb{Z}_N$ center one-form symmetry~\cite{Gaiotto:2014kfa} has been gauged. 
A crucial effect of this is that the instanton number $Q$ is no longer an unconstrained integer. 
Rather, only the congruence class of $Q$ modulo $N$ is invariantly defined once the center is gauged. 
In fact, $SU(N)/\mathbb{Z}_N$ bundles on general 4-manifolds admit fractional topological charge: $Q$ can be an integer multiple of $1/N$ in such a theory, and configurations are classified by two invariants $(Q \bmod 1,\;w_2)$, where $w_2\in H^2(M,\mathbb{Z}_N)$ is a second Stiefel-Whitney class specifying the obstruction to lifting the $SU(N)/\mathbb{Z}_N$ bundle to an $SU(N)$ bundle~\cite{Witten:1997bs,Razamat:2013opa}. 
In our present context (pure Yang-Mills in flat $\mathbb{R}^4$ or $S^4$), $w_2$ may be taken trivial, but the effect of gauging the center remains. 
Adding or removing an $SU(N)$ instanton of charge $Q=1$ is no longer an allowed process by itself. 
One can only add $N$ instantons together in a combination that yields $Q=1$ mod 1 (i.e. an integer $Q$). 
This is entirely consistent with our orbifold analysis above, which forced $Q$ to be a multiple of $N$ for gauge-invariant contributions. 
In the $SU(N)/\mathbb{Z}_N$ description, this constraint can be traced to the presence of a discrete $\theta$-term, often called a discrete $\theta$ angle or Neumann-Rosenzweig-Thorn (NRT) phase in the literature~\cite{Kapustin:2013qsa}. 

One can introduce a discrete $\theta$ angle $p$ (taking values in $\{0,1,\dots,N-1\}$) for an $SU(N)/\mathbb{Z}_N$ theory, which weights different topological sectors differently~\cite{Hsin:2020nts}. 
For example, in an $SU(2)/\mathbb{Z}_2$ (i.e. $SO(3)$) gauge theory, one finds two physically distinct theories. 
The case $p=0$ corresponds to the theory with no discrete $\theta$ term, and $p=1$ corresponds to a theory with a $\mathbb{Z}_2$ topological angle that assigns a $(-1)$ sign to configurations with odd $Q$~\cite{Tanizaki:2018wtg}. 
More generally, the discrete $\theta$ angle $p$ can be thought of as a coupling to the $\mathbb{Z}_N$-valued second Stiefel-Whitney class $w_2$. 
Schematically, the Euclidean action can include a term 
\begin{equation}
S_{p} \;=\; i\pi p\,\frac{(w_2)^2}{N} \,,
\label{eq:Snrt}
\end{equation} 
which is a $\mathbb{Z}_N$-valued topological action (in four dimensions $(w_2)^2$ is related to the second Chern class mod $N$)~\cite{Cordova:2019uob,Kapustin:2013qsa}. 
Although \eqref{eq:Snrt} is a highly abstract way to encode the effect, one can understand its consequence. 
When $p\neq 0$, sectors whose $SU(N)$ instanton number differs by a nonzero amount modulo $N$ acquire a relative phase $\exp(i\pi p/N)$ and hence interfere destructively unless $p$ is tuned so that this phase is trivial. 
In particular, for $p=1$ in an $SU(N)/\mathbb{Z}_N$ theory, an instanton and a would-be “$N$th root” of an instanton have different weights, and the path integral sums over them with a relative phase. 
The end result is that only configurations with $Q$ divisible by $N$ contribute constructively. 
This matches precisely our gauged $\mathbb{Z}_N$ shift symmetry: it is effectively the $p=1$ discrete $\theta$ angle in the $SU(N)/\mathbb{Z}_N$ theory. 
In other words, gauging the $\mathbb{Z}_N$ shift in the $SU(N)$ theory produces the $SU(N)/\mathbb{Z}_N$ theory with a discrete topological angle. 

We can formulate Model~A in a more field-theoretic language as well. 
Instead of changing the global form of the gauge group by hand, we can couple the $SU(N)$ theory to a background two-form gauge field $B_{2}$ with gauge group $\mathbb{Z}_N$ (often called a $B_2$-field or discrete $B_2$-field)~\cite{Gaiotto:2014kfa,Cordova:2019uob,Kapustin:2013qsa,Guo:2017xex,Sulejmanpasic:2019ytl}. 
The $B_{2}$ field couples to the gauge theory through a term 
\begin{multline}
S_{\mathrm{top}}(4D)=\frac{iN}{2\pi}\int_{M_4} B_2\wedge f_2
+\frac{i}{2\pi}\int_{M_4} a_1\wedge\Big[\frac{1}{8\pi^2}\,CS_3(A)+\kappa_G\,CS_3(\omega)\Big]
+\frac{i\theta}{8\pi^2}\int_{M_4}\mathrm{Tr}(F\wedge F),
\label{eq:SBcoupling}
\end{multline}
with
\begin{equation}
d\,CS_3(A)=\mathrm{Tr}(F\wedge F),\quad d\,CS_3(\omega)=\mathrm{Tr}(R\wedge R) ,
\end{equation}
which is allowed if $B_{2}$ is a $\mathbb{Z}_N$-valued 2-form (so that $B_{2}$ effectively has period $2\pi/N$). 
Here we are treating $B_{2}$ as a background field to implement the symmetry gauging; one can also consider dynamical $B_{2}$ in a higher-form gauge theory~\cite{Kapustin:2013uxa}. 
The coupling \eqref{eq:SBcoupling} indicates that whenever an instanton number passes through a given surface in spacetime, it can be absorbed by $B_{2}$ flux through that surface. 
In more concrete terms, the $\mathbb{Z}_N$ two-form field $B_{2}$ can screen fractional topological charge. 
This precisely means that $\Tr(F\wedge F)$ is conserved mod $N$ rather than absolutely conserved. 
Equivalently, $Q$ is defined only mod $N$ once $B_{2}$ is coupled. 
The term \eqref{eq:SBcoupling} is essentially a Lagrange multiplier enforcing that selection rule. 
In the presence of \eqref{eq:SBcoupling}, a large gauge transformation of unit winding $n=1$ can be compensated by a shift of the $B_{2}$ field (since $B_{2}$ appears only through $e^{i\int B_{2}\wedge F\wedge F/2\pi}$, shifting $B_{2}\to B_{2} - \frac{1}{N}\delta^{(2)}$ for some quantized 2-form $\delta^{(2)}$ can cancel the phase of an instanton). 
Thus, what was previously a “Theta” parameter $\theta$ becomes to some extent an emergent dynamical variable, since the discrete $B_{2}$ field can soak up changes in $\theta$. 
In fact, the $\mathbb{Z}_N$ shift symmetry of $\theta$ is now promoted to an ordinary $2\pi$-periodic shift symmetry of the $B_{2}$ field (because shifting $\theta$ by $2\pi/N$ and simultaneously $B_{2}$ by an appropriate flat connection is a gauge symmetry). 
In effect, the role of the $\theta$ parameter can be traded for a background $B_{2}$ field, and the discrete $\theta$-quantization is enforced by $B_{2}$ flux.

\medskip\noindent\textbf{Model B: Axion field and degenerate vacua.}\quad 
Model~B introduces a new dynamical field i.e., an axion $a(x)$ to represent the $\theta$ parameter~\cite{Peccei:1977hh,Weinberg:1977ma,Wilczek:1977pj,Kim:2008hd,DiLuzio:2020wdo}. 
Instead of treating $\theta$ as a fixed coupling, we promote it to a field $a(x)$ with its own kinetic term and potential. 
The axion is a $2\pi$-periodic pseudoscalar, meaning that $a(x)$ has a shift symmetry $a \mapsto a + 2\pi$. 
We couple $a$ to the gauge field through the topological term 
\begin{equation}
S_{aF\tilde{F}} \;=\; \frac{i}{2\pi} \int d^4x\; a(x)\,\Tr(F_{\mu\nu}\tilde{F}^{\mu\nu}) \;=\; \frac{i}{2\pi} \int a\,\Tr(F\wedge F)\,,
\label{eq:axioncoupling}
\end{equation}
which is the standard axion coupling to the instanton density in axion Yang-Mills theories~\cite{Brennan:2020ehu}. 
So that, upon integration by parts, $da$ acts as a Lagrange multiplier enforcing $d \!\left(\Tr(F\wedge F)\right)=0$ (instantons cannot disappear without changing $a$). 
If $a(x)$ were a free massless field, \eqref{eq:axioncoupling} would imply that $a$ is constant in space (since any spacetime variation would create a nonzero $\langle \Tr(F\wedge F)\rangle$ by the equations of motion), and its expectation value would behave exactly like the $\theta$ angle. 
However, to obtain a controlled effective field theory, we assume that the axion acquires a potential $V(a)$, typically due to nonperturbative effects (instantons)~\cite{Kim:2008hd,DiLuzio:2020wdo}. 
We choose the potential such that the shift symmetry $a\to a+2\pi$ is broken down to a discrete $\mathbb{Z}_N$ subgroup:
\begin{equation}
V(a)\;=\; -\,\Lambda^4 \cos(N a)\,,
\label{eq:axionpot}
\end{equation}
for some energy scale $\Lambda$. 
This form of the potential is invariant under $a \mapsto a + \frac{2\pi}{N}$ up to the full $2\pi$ periodicity. 
Specifically, $a\to a+2\pi/N$ changes $\cos(Na)$ to $\cos(Na+2\pi) = \cos(Na)$, so the Lagrangian is invariant under the discrete shift 
\begin{equation}
a(x)\;\mapsto\; a(x) + \frac{2\pi}{N}\,. 
\end{equation} 
This is exactly the $\mathbb{Z}_N$ symmetry that we wish to gauge. 
We have thus built a low-energy model where the discrete $\theta$-shift symmetry is manifest as a symmetry of the axion potential~\cite{Hason:2020yqf}. 

Crucially, in the ungauged Model~B, the discrete symmetry $a\to a+2\pi/N$ is a global symmetry (the potential \eqref{eq:axionpot} picks out $N$ equivalent vacua, related by this symmetry). 
Indeed, the minima of $V(a)$ occur at 
\begin{equation}
a = \frac{2\pi k}{N}\,,\qquad k=0,1,2,\dots,N-1\,,
\end{equation} 
all of which satisfy $\cos(Na)=1$. 
These $N$ vacuum states are physically distinct in the ungauged theory. They are labeled by the integer $k$ or, equivalently, by the expectation value of the operator $\exp(i a)$ which serves as an order parameter for the broken $\mathbb{Z}_N$ symmetry. 
The existence of these $N$ degenerate vacua is a classic consequence of a discrete, spontaneously broken symmetry~\cite{Kim:2008hd,DiLuzio:2020wdo}. 

Now, each vacuum $k$ corresponds to a world where the effective $\theta$ angle (given by the expectation value of $a$) is $\langle a \rangle = 2\pi k/N$. 
In other words, the physics in vacuum $k$ is that of the original gauge theory at $\theta = 2\pi k/N$. 
(This can be seen by substituting $a(x)=2\pi k/N$ in \eqref{eq:axioncoupling}, yielding 
\begin{equation}
    S_{aF\tilde{F}} = \frac{i\,(2\pi k/N)}{2\pi}\int \Tr(F\wedge F) = i\theta_k Q
\end{equation}
with $\theta_k = 2\pi k/N$.) 
Thus, the $N$ vacua of the axion potential precisely mirror the $N$ would be vacua of the original theory at $\theta=0,2\pi/N,4\pi/N,\dots,2\pi(N-1)/N$. 
In the ungauged theory, these were in fact all the same vacuum (since $\theta$ was a continuous parameter we could dial, and only at special $\theta$ like $\pi$ did distinct degenerate vacua arise due to other symmetries such as CP). 
But here, by introducing the axion and its potential, we have made each of those $\theta$ angles into a separate vacuum, separated by potential barriers. 

Because the $N$ vacua are degenerate and related by a discrete symmetry, there must exist domain walls interpolating between them. A domain wall in this context is a soliton solution in which the axion field $a(x)$ varies from one vacuum value to another vacuum value across a spatial slice. For instance, a wall between vacuum $k$ (with $a=2\pi k/N$ far on the left) and vacuum $k+1$ (with $a=2\pi (k+1)/N$ far on the right) can be found by solving the field equation for $a(x)$ with a boundary condition $a(-\infty)=2\pi k/N$ and $a(+\infty)=2\pi(k+1)/N$. Such a solution will have a finite energy per unit area (the wall tension), which can be estimated from the potential \eqref{eq:axionpot}. Importantly, across this domain wall, the axion field (and hence the effective $\theta$) changes by $\Delta a = 2\pi/N$. The topological charge $Q$ stored in the gauge field must change accordingly, since $a$ and $Q$ are coupled by \eqref{eq:axioncoupling}. In fact, one can show that a domain wall between vacua $k$ and $k+1$ carries one unit of $Q$ flux through \cite{Hason:2020yqf}. Intuitively, as $a$ slides by $2\pi/N$, the term \eqref{eq:axioncoupling} in the action effectively shifts $\theta$ by $2\pi/N$ in the region of the wall, which by our earlier analysis means a unit instanton number can be absorbed or emitted. Equivalently, an instanton passing across the wall can convert into a fluctuation of the $a$ field. The wall is thus a locus where the difference in $\theta$ angle is realized. We can think of it as an interface between two $\theta$-vacua. Figure~\ref{fig:vacua_wall} illustrates the $N$ vacua and a domain wall connecting neighboring vacua. 

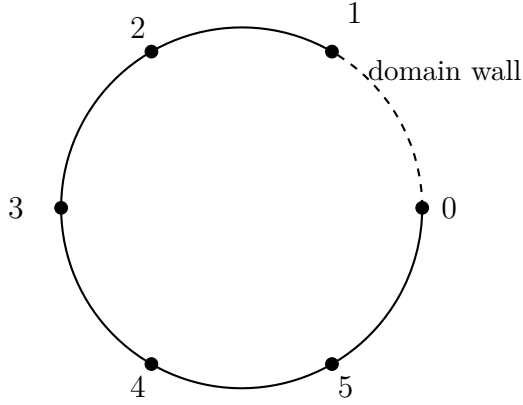
\begin{figure}[h]
\centering
\begin{tikzpicture}[scale=1.2]
  \def\R{2cm} 

  \draw[thick] (60:\R) arc (60:360:\R);
  \draw[thick,dashed] (0:\R) arc (0:60:\R);

  \foreach \x in {0,60,120,180,240,300}{
    \filldraw[black] (\x:\R) circle (2pt);
  }

  \node at (0:2.3cm) {$0$};
  \node at (60:2.5cm) {$1$};
  \node at (120:2.3cm) {$2$};
  \node at (180:2.5cm) {$3$};
  \node at (240:2.3cm) {$4$};
  \node at (300:2.3cm) {$5$};

  \node[anchor=south] at (30:2.6cm) {\small domain wall};

\end{tikzpicture}
\caption{Discrete $\theta$-vacua (labeled by an integer $k$ mod $N$) related by the $\theta \to \theta + 2\pi/N$ shift. In the ungauged theory, these are distinct vacua leading to a $\mathbb{Z}_N$ degeneracy. A domain wall (dashed segment) interpolates between neighboring vacua that differ by a discrete $\theta$-shift. Crossing the domain wall changes the topological angle by $2\pi/N$ and transfers one unit of topological charge. Gauging the $\mathbb{Z}_N$ shift symmetry identifies all these vacua as a single state, and the domain wall becomes unobservable.}
\label{fig:vacua_wall}
\end{figure}

The physical content of Model~B in the ungauged case is therefore an $\mathbb{Z}_N$ spontaneously broken symmetry with $N$ degenerate vacua and stable domain walls connecting them. Now we impose that this discrete shift symmetry is gauged. Gauging the $\mathbb{Z}_N$ means that $a(x)\sim a(x) + 2\pi/N$ is promoted to a gauge equivalence rather than a physical difference. Operationally, we would introduce a $\mathbb{Z}_N$ one-form gauge field (or, since $a$ is a 0-form field, equivalently a $\mathbb{Z}_N$ gauge transformation that can shift $a$ by $2\pi/N$ in patches) and sum over its configurations. When a global discrete symmetry is gauged, any vacuum degeneracy associated with its spontaneous breaking is lifted: all the former vacua become one vacuum, since one can hop between them by a gauge transformation. This is a manifestation of Elitzur’s theorem, which forbids spontaneous breaking of a gauged symmetry. In the present case, once we gauge $a\to a+2\pi/N$, the potential \eqref{eq:axionpot} no longer selects $N$ distinct physical states --- it merely provides one periodic potential for $a$, and all $N$ minima are identified. Consequently, the domain walls that were supported by the difference between vacua are no longer distinct objects; a configuration where $a$ interpolates from $2\pi k/N$ on one side to $2\pi (k+1)/N$ on the other can be continuously deformed (by a gauge transformation that shifts $a$ in one region by $-2\pi/N$) into the trivial configuration $a$ constant. In effect, the domain wall can pairwise annihilate with the ``gauge image'' of itself. Thus, in the gauged theory, there is a single vacuum and no stable domain walls. This agrees perfectly with the conclusion from Model~A. In Model~B language, the $\mathbb{Z}_N$ one-form gauge field introduced in Model~A is analogous to the statement that the axion $a$ is a compact field with periodicity $2\pi$ but we identify shifts of $a$ by $2\pi/N$ as gauge redundancies. One way to implement this is to say that the fundamental group of the field space is $\mathbb{Z}_N$ rather than $\mathbb{Z}$, which again means that only $e^{i a N}$ is a single-valued operator (not $e^{i a}$ by itself) \cite{Hsin:2020nts}. Such a condition can be realized by coupling the axion to a discrete gauge field (sometimes called a $\mathbb{Z}_N$ 0-form gauge field) or by having a multi-valued field with identification $a\sim a+2\pi/N$. In any case, gauging eliminates the distinct $k$ vacua. 

At this stage, one might ask: what if the discrete $\theta$-shift symmetry is not truly exact, or if it has an anomaly? An 't Hooft anomaly for a global symmetry means that the symmetry cannot be consistently gauged without adding new degrees of freedom in a higher dimension \cite{Gaiotto:2014kfa}. In our scenario, a particularly interesting case is $N=2$ (a $\mathbb{Z}_2$ shift symmetry). When $N=2$, the shift $a\to a+\pi$ could be identified with a time-reversal (CP) symmetry if we interpret $a=\theta$ itself. In $SU(N)$ Yang-Mills theory with even $N$, it is known that at $\theta=\pi$ there is a $\mathbb{Z}_2$ time-reversal symmetry $T$ (since $\theta \to -\theta$ leaves $\theta=\pi$ invariant) and this $T$ has a mixed anomaly with the $\mathbb{Z}_N$ one-form center symmetry \cite{Gaiotto:2014kfa,Tanizaki:2018wtg}. This anomaly prevents the $\mathbb{Z}_2$ (CP) from being realized as a trivial gapped symmetry in the infrared -- either it is spontaneously broken (giving two vacua, as Witten first argued for $SU(2)$ Yang--Mills), or if unbroken, the vacuum must carry a topological quantum field theory that cancels the anomaly. For $SU(N)$ with even $N$, one indeed finds $N$ degenerate vacua at $\theta=\pi$ related by certain large center transformations, and a domain wall between these vacua carries a 3d $SU(N)_1$ Chern-Simons theory (or more generally a $\mathbb{Z}_N$ gauge theory) on its worldvolume \cite{Hason:2020yqf}. If we attempted to gauge the $\mathbb{Z}_2$ shift symmetry in this case, we would encounter an inconsistency unless we include an additional 5-dimensional bulk term whose variation cancels the anomaly (this is the higher-form analog of adding a Chern--Simons term when gauging a flavor symmetry with a parity anomaly). In Model~B terms, the discrete symmetry $a\to a+\pi$ at $\theta=\pi$ is anomalous, meaning that the domain wall cannot be simply removed by gauging -instead, if one insists on gauging it, the domain wall must remain as a physical object but now becomes the boundary of a 4+1D invertible topological term (an example of anomaly inflow) \cite{Hason:2020yqf}. In Model~A terms, one could say that the discrete $B_{2}$ field has its own Chern--Simons coupling in 5d that accounts for the anomaly \cite{Cordova:2019uob}. For our purposes, we will assume that either $N$ is such that no anomaly is present or that if an anomaly exists, the required counterterm is included so that the $\Z_N$ symmetry is gaugeable without contradiction. In summary, an anomalous $\Z_N$ shift leads to rich physics (degeneracy and domain wall topological modes) if treated as a global symmetry \cite{Tanizaki:2018wtg,Hason:2020yqf}, whereas once fully gauged (with anomaly inflow), those features are accounted for by higher-dimensional terms rather than by multiplicity of vacua in 4d.

To conclude, Models~A and~B provide consistent pictures of what happens when a discrete $\theta$-shift symmetry is gauged. In both descriptions, the effective $\theta$ angle becomes quantized and the vacuum structure simplifies to a single branch. In Model~A, this is seen through the orbifolding of the $\theta$-circle and the coupling to a discrete topological field, which removes all but the $Q\equiv 0 \pmod N$ sector and eliminates vacuum degeneracy. In Model~B, we see that while a global $\Z_N$ leads to $N$ vacua and domain walls, promoting it to a gauge symmetry identifies all those vacua and renders the domain walls unobservable (or, if an anomaly is present, the domain walls acquire a role in anomaly inflow rather than in multi-vacuum dynamics). Gauging a discrete $\theta$-shift thus imposes a unique $\theta$-sector (up to the choice of discrete $\theta$ angle $p$ in Model~A) and enforces that any would-be change of $\theta$ is accompanied by a quantum (an instanton number) that is pure gauge. This illustrates a general theme: once a $\theta$ angle becomes dynamical or gauged, the space of vacua and topological sectors experiences an extreme reduction, often encoded in a higher-form symmetry or two-group structure \cite{Cordova:2019uob}, and any previous 't Hooft anomalies manifest as demands for topological couplings (such as Chern-Simons terms on domain walls) to preserve consistency \cite{Hason:2020yqf}.

\section{Vacuum Selection and the Bound \texorpdfstring{$|\bar\theta|\le \pi/N$}{|thetabar| <= pi/N}}
\label{sec3}
We now give a mathematically complete derivation of the vacuum selection mechanism in the discrete $\theta$ gauged theory and prove the bound $|\bar\theta|\le \pi/N$ as a property of the exact ground state rather than an externally imposed restriction. Throughout, $\bar\theta$ denotes the physically meaningful CP-violating angle including quark mass phases, $\bar\theta \equiv \theta + \arg\det M$ for a quark mass matrix $M$, consistent with the anomalous $U(1)_A$ Ward identity and chiral rotations \cite{Witten:1979vv,Veneziano:1979ec,DiVecchia:1980yfw}. We begin from the multi-branch structure of the Yang-Mills vacuum energy, then implement the $\mathbb Z_N$ orbifolding of the $\theta$ circle by gauging a discrete shift symmetry. Two complementary effective descriptions i.e., a pure-gauge description (Model~A) and an axion/mass-extended description (Model~B) are treated within the same narrative to establish the same bound.

\subsection{Multi-branch structure and large-$N$ scaling}
For a pure $SU(N_c)$ Yang-Mills theory, the vacuum energy density as a function of an angular parameter $\vartheta$ is $2\pi$-periodic, CP-even around $\vartheta=0$, and convex. Large-$N_c$ arguments \cite{Witten:1998uka} show that this function is multi-branched: there exists a smooth, $2\pi$-periodic and even function $f(x)$, with $f'(0)=0$ and $f''(0)=\chi_t/\Lambda^4>0$, such that the exact energy can be represented as
\begin{equation}
\label{eq:multi-branch}
E_{\text{YM}}(\vartheta) \;=\; \min_{k\in\mathbb Z}\, \mathcal E_k(\vartheta)\,, 
\qquad 
\mathcal E_k(\vartheta) \;\equiv\; \mathcal A_{N_c}\,\Lambda^4\, f\!\left(\frac{\vartheta+2\pi k}{\kappa_{N_c}}\right),
\end{equation}
where $\Lambda$ is a strong scale, $\mathcal A_{N_c}$ is an $N_c$-dependent prefactor that scales as $N_c^2$ in the large-$N_c$ limit, and $\kappa_{N_c}$ scales as $N_c$ so that $f$ has $\mathcal O(1)$ curvature near the origin.\footnote{The canonical large-$N_c$ realization sets $\mathcal A_{N_c}\propto N_c^2$ and $\kappa_{N_c}\propto N_c$, reproducing $E_{\text{YM}}(\vartheta)\sim N_c^2 \min_k F\!\left((\vartheta+2\pi k)/N_c\right)$ \cite{Witten:1998uka}. Our derivations below use only periodicity, evenness, and convexity together with the multi-branch envelope representation.} The topological susceptibility is the curvature at the origin,
\begin{equation}
\label{eq:susceptibility}
\chi_t \;=\; \left.\frac{\partial^2 E_{\text{YM}}(\vartheta)}{\partial\vartheta^2}\right|_{\vartheta=0} 
\;=\; \int d^4x \,\langle q(x)q(0)\rangle_c 
\;=\; \frac{1}{V} \langle Q^2\rangle_c \;>\;0,
\end{equation}
where $q(x)=\tfrac{1}{32\pi^2}\Tr(F\tilde F)$, $Q=\int d^4x\,q(x)\in\mathbb Z$, and $V$ is the spacetime volume; the second equality is the anomalous Ward identity \cite{DiVecchia:1980yfw}. Near $\vartheta=0$, a Taylor expansion yields
\begin{equation}
\label{eq:taylor}
\mathcal E_k(\vartheta) \;=\; \frac{1}{2}\,\chi_t \,(\vartheta+2\pi k)^2 + \mathcal O\!\big((\vartheta+2\pi k)^4\big),
\end{equation}
which is sufficient for the local analysis below but not required for the global, nonperturbative conclusions.

\subsection{Gauging a discrete $\texorpdfstring{\boldsymbol{\theta}}{theta}$-shift and orbifolding}
We now gauge a discrete shift symmetry $\mathbb Z_N$ acting by $\vartheta\mapsto \vartheta + 2\pi/N$. Physically this is the orbifold identification of the $\vartheta$-circle by a subgroup of its $2\pi$-periodicity \cite{Gaiotto:2014kfa,Cordova:2019uob}. The gauged (orbifold) vacuum energy is obtained by allowing the $\vartheta$-argument of the Yang--Mills energy to vary over the $\mathbb Z_N$-orbit of a given $\theta$ and selecting the minimal branch,
\begin{equation}
\label{eq:orbifold-energy}
E(\theta) \;=\; \min_{m\in\mathbb Z}\, E_{\text{YM}}\!\left(\theta+\frac{2\pi m}{N}\right) 
\;=\; \min_{m\in\mathbb Z}\,\min_{k\in\mathbb Z}\, \mathcal A_{N_c}\,\Lambda^4\, f\!\left(\frac{\theta+\frac{2\pi m}{N}+2\pi k}{\kappa_{N_c}}\right).
\end{equation}
Combining the two integer labels into a single integer $\ell\equiv k+ m/N$ is possible at the level of the argument of $f$ but not as an integer because $m/N\notin\mathbb Z$ in general; nonetheless, the minimization in \eqref{eq:orbifold-energy} is elementary because it amounts to choosing the representative of the $\mathbb Z_N$-orbit that places the argument of $f$ closest to the origin modulo $2\pi$. To make this precise, define the principal-value map $\mathrm{PV}:\mathbb R\to (-\pi,\pi]$ by reducing angles modulo $2\pi$. For each $m\in\mathbb Z$, set
\begin{equation}
\label{eq:xm-def}
x_m(\theta)\;\equiv\; \mathrm{PV}\!\left(\theta+\frac{2\pi m}{N}\right)\in(-\pi,\pi]\,.
\end{equation}
Because the $N$ values $\theta+\tfrac{2\pi m}{N}$ ($m=0,1,\dots,N-1$) partition the unit circle into $N$ equal arcs, there exists an $m_*(\theta)\in\{0,\dots,N-1\}$ such that 
\begin{equation}
\label{eq:closest}
|x_{m_*}(\theta)| \;=\; \min_{m\in\mathbb Z} |x_m(\theta)| \;\le\; \frac{\pi}{N}.
\end{equation}
The inequality is sharp and follows from the pigeonhole principle on the circle: among $N$ equally spaced points, one is within angular distance $\le \pi/N$ of the origin. Since $f$ is even and nondecreasing on $[0,\pi]$ by convexity and the minimum at the origin, the minimization in \eqref{eq:orbifold-energy} is achieved by $m_*(\theta)$ and the energy reduces to the envelope
\begin{equation}
\label{eq:Eg-envelope}
E(\theta) \;=\; \mathcal A_{N_c}\,\Lambda^4\, f\!\left(\frac{x_{m_*}(\theta)}{\kappa_{N_c}}\right),
\qquad 
x_{m_*}(\theta)\in\left[-\frac{\pi}{N},\,\frac{\pi}{N}\right].
\end{equation}
It is convenient to define the selected or effective angle
\begin{equation}
\label{eq:thetaeff-def}
\theta_{\rm eff}(\theta) \;\equiv\; x_{m_*}(\theta) \;=\; \mathrm{PV}\!\left(\theta+\frac{2\pi m_*(\theta)}{N}\right),
\end{equation}
so that the gauged energy is simply $E(\theta)=\mathcal A_{N_c}\Lambda^4 f(\theta_{\rm eff}/\kappa_{N_c})$ with $\theta_{\rm eff}\in[-\pi/N,\pi/N]$. Equation \eqref{eq:Eg-envelope} already implies the bound $|\theta_{\rm eff}|\le \pi/N$, below we show that $\theta_{\rm eff}$ coincides with the physical $\bar\theta$ when quark masses and axions are included, thereby establishing $|\bar\theta|\le \pi/N$ as a dynamical consequence.

\subsection{Discrete shift symmetry, branch relabeling, and minimization}
The gauged energy \eqref{eq:Eg-envelope} is invariant under the discrete shift $\theta\to\theta+\tfrac{2\pi}{N}$. Indeed,
\begin{equation}
\label{eq:discrete-shift}
E\!\left(\theta+\frac{2\pi}{N}\right) 
= \min_{m} E_{\text{YM}}\!\left(\theta+\frac{2\pi}{N}+\frac{2\pi m}{N}\right)
= \min_{m} E_{\text{YM}}\!\left(\theta+\frac{2\pi (m+1)}{N}\right)
= E(\theta),
\end{equation}
where the minimization index relabels by $m\mapsto m+1$. Consequently, the branch label transforms as $m_*(\theta+\tfrac{2\pi}{N}) = m_*(\theta)-1$ so that $\theta_{\rm eff}(\theta+\tfrac{2\pi}{N})=\theta_{\rm eff}(\theta)$. To locate the branch-selection transitions, it is efficient to temporarily relax $m$ to a real variable and minimize $\mathcal E(\theta,m)\equiv E_{\text{YM}}(\theta+\tfrac{2\pi m}{N})$ with respect to $m$. Differentiating with respect to $m$ yields
\begin{equation}
\label{eq:dEdm}
\frac{\partial \mathcal E}{\partial m}(\theta,m) \;=\; \frac{2\pi}{N}\,E'_{\text{YM}}\!\left(\theta+\frac{2\pi m}{N}\right),
\end{equation}
so a stationary point satisfies $E'_{\text{YM}}(\theta+ \tfrac{2\pi m_*}{N})=0$. Since $E'_{\text{YM}}(\vartheta)=0$ at $\vartheta=2\pi\ell$ and only there in the fundamental domain by evenness and convexity, the continuous minimizer is $m_*^{\rm cont}(\theta)=-\tfrac{N\theta}{2\pi} +  \ell N$ for some $\ell\in\mathbb Z$. Restricting back to $m\in\mathbb Z$ selects the nearest integer(s) to $m_*^{\rm cont}(\theta)$, which is precisely the integer(s) that minimize $|x_m(\theta)|$. At the special values $\theta=\theta_c^{(\ell)}\equiv \tfrac{(2\ell+1)\pi}{N}$ the continuous minimizer is half-integer and two adjacent integers are tied. Therefore the transition points are
\begin{equation}
\label{eq:transition-thetas}
\theta=\frac{(2\ell+1)\pi}{N}\,,\qquad \ell\in\mathbb Z,
\end{equation}
at which the minimizing label jumps by one unit, $m_*(\theta_c^{(\ell)}\!+0)=m_*(\theta_c^{(\ell)}\!-0)-1$. Since the two competing branches are related by opposite arguments of $f$, namely $x_{m_*(\theta_c^{(\ell)}\!-0)}(\theta_c^{(\ell)})=-\frac{\pi}{N}$ and $x_{m_*(\theta_c^{(\ell)}\!+0)}(\theta_c^{(\ell)})=+\frac{\pi}{N}$, continuity of $E(\theta)$ follows from the evenness of $f$,
\begin{equation}
\label{eq:E-continuity}
\lim_{\epsilon\to 0^+} E\!\left(\theta_c^{(\ell)}-\epsilon\right) 
= \mathcal A_{N_c}\Lambda^4 f\!\left(\frac{\pi}{N\kappa_{N_c}}\right)
= \lim_{\epsilon\to 0^+} E\!\left(\theta_c^{(\ell)}+\epsilon\right),
\end{equation}
whereas the derivative exhibits a finite jump (a cusp) unless $f'(\pi/N\kappa_{N_c})=0$:
\begin{equation}
\label{eq:cusp}
\lim_{\epsilon\to 0^+} \frac{dE}{d\theta}\!\left(\theta_c^{(\ell)}-\epsilon\right)
= -\,\frac{\mathcal A_{N_c}\Lambda^4}{\kappa_{N_c}} f'\!\left(\frac{\pi}{N\kappa_{N_c}}\right), 
\quad
\lim_{\epsilon\to 0^+} \frac{dE}{d\theta}\!\left(\theta_c^{(\ell)}+\epsilon\right)
= +\,\frac{\mathcal A_{N_c}\Lambda^4}{\kappa_{N_c}} f'\!\left(\frac{\pi}{N\kappa_{N_c}}\right).
\end{equation}
Equations \eqref{eq:E-continuity}-\eqref{eq:cusp} display the continuity of $E(\theta)$ and the generic non-analyticity of its first derivative at the transition points, reflecting a first-order transition between neighboring branches, exactly paralleling the familiar cusp at $\vartheta=\pi$ in ordinary Yang-Mills \cite{Witten:1998uka} but now occurring at the finer set \eqref{eq:transition-thetas} due to the $\mathbb Z_N$ orbifolding.

\subsection{Pure gauge saddle and selection of the principal cell}
In the pure gauge description, the partition function at fixed $\theta$ is a sum over topological sectors $Q\in\mathbb Z$,
\begin{equation}
\label{eq:Ztheta}
Z_{\text{YM}}(\vartheta) \;=\; \sum_{Q\in\mathbb Z} e^{i\vartheta Q}\, Z_Q, 
\qquad 
Z_Q \;=\; \int_{Q}\!\mathcal D A\, e^{-S_{\text{YM}}[A]},
\end{equation}
and $E_{\text{YM}}(\vartheta)=-\lim_{V\to\infty}\tfrac{1}{V}\log Z_{\text{YM}}(\vartheta)$. Gauging the $\mathbb Z_N$ shift symmetry amounts to summing over $\vartheta\in\theta+\tfrac{2\pi}{N}\mathbb Z$ and projecting onto $\mathbb Z_N$-invariant sectors \cite{Gaiotto:2014kfa,Cordova:2019uob},
\begin{equation}
\label{eq:ZNgauged}
Z(\theta) \;=\; \frac{1}{N}\sum_{m=0}^{N-1} Z_{\text{YM}}\!\left(\theta+\frac{2\pi m}{N}\right) 
\;=\; \sum_{Q\in\mathbb Z} \Bigg(\frac{1}{N}\sum_{m=0}^{N-1} e^{i\frac{2\pi m}{N}Q}\Bigg) e^{i\theta Q}\, Z_Q 
\;=\; \sum_{Q\in N\mathbb Z} e^{i\theta Q}\, Z_Q.
\end{equation}
Only sectors with $Q=N k$ survive in the gauged theory. A standard saddle-point analysis at large volume $V$ then gives
\begin{equation}
\label{eq:saddle}
E(\theta) \;=\; \min_{k\in\mathbb Z}\, \mathcal F\!\left(\theta; k\right), 
\qquad 
\mathcal F(\theta,k)\;=\; -\,\lim_{V\to\infty}\frac{1}{V}\log\!\left(e^{i\theta Nk} Z_{Nk}\right).
\end{equation}
Since $Z_{Nk}$ depends only on the principal value of $\theta$ modulo $2\pi$ through $\theta Nk$, the minimization over $k$ is equivalent to selecting the representative of the orbit $\theta\mapsto \theta+\tfrac{2\pi m}{N}$ that brings $\theta Nk$ closest to a multiple of $2\pi$, i.e. it selects $\theta_{\rm eff}\in[-\pi/N,\pi/N]$ as in \eqref{eq:thetaeff-def}. Near the origin, expanding $\mathcal F(\theta,k)$ to quadratic order using \eqref{eq:susceptibility} yields
\begin{equation}
\label{eq:quadratic}
\mathcal F(\theta;k) \;=\; \frac{1}{2}\,\chi_t\,\big(\theta+\tfrac{2\pi k}{N}\big)^2 + \mathcal O\!\big((\theta+\tfrac{2\pi k}{N})^4\big),
\end{equation}
so the global minimum is attained by the integer $k$ that minimizes $|\theta+\tfrac{2\pi k}{N}|$, which is guaranteed to be $\le \pi/N$ by \eqref{eq:closest}. Thus, within Model~A the vacuum selection rule is simply that the theory occupies the principal cell $[-\pi/N,\pi/N]$ of the orbifolded $\theta$-circle, and the energy is the envelope of the translated branches $\mathcal F(\theta,k)$, producing cusp intersections precisely at \eqref{eq:transition-thetas}. Metastable vacua correspond to non-minimizing integers $k$. At a transition point two adjacent $k$ are degenerate at equal energy, and away from it the higher-energy one is metastable and can decay via nucleation of domain walls carrying the appropriate unit of topological charge, in keeping with the standard large-$N_c$ domain wall picture \cite{Witten:1998uka,Gaiotto:2017yup}.

\subsection{Model B: Inclusion of masses or axion and dynamical realization of the bound}
When quark masses are included, the physical angle is $\bar\theta = \theta + \arg \det M$, and the partition function respects the anomalous Ward identity $Z(\theta,M)=Z(\theta+\alpha, e^{-i\alpha/N_f}M)$ such that $E$ depends only on $\bar\theta$ \cite{DiVecchia:1980yfw}. In the discrete-$\theta$-gauged theory, the vacuum energy as a function of $\bar\theta$ is therefore
\begin{equation}
\label{eq:Eg-bartheta}
E(\bar\theta) \;=\; \min_{m\in\mathbb Z}\, E_{\text{YM}}\!\left(\bar\theta+\frac{2\pi m}{N}\right)
\;=\; \mathcal A_{N_c}\Lambda^4\, \min_{m\in\mathbb Z}\, f\!\left(\frac{\mathrm{PV}(\bar\theta+\frac{2\pi m}{N})}{\kappa_{N_c}}\right),
\end{equation}
identical in form to \eqref{eq:Eg-envelope} with $\theta\to\bar\theta$. The same geometric argument gives the dynamical bound $|\bar\theta_{\rm eff}|\le \pi/N$, where $\bar\theta_{\rm eff}\equiv \mathrm{PV}(\bar\theta+\tfrac{2\pi m_*(\bar\theta)}{N})$ is the argument that minimizes the energy. This conclusion is independent of the microscopic size or pattern of CP-violating masses. 

If an axion $a$ is present with coupling $\tfrac{i}{8\pi^2}\int a\,\Tr(F\wedge F)$ and a periodic potential $U(a)$ invariant under $a\to a+\tfrac{2\pi}{N}$ (as in a discrete clockwork or any construction that respects the same $\mathbb Z_N$), the total effective potential is $V_{\rm eff}(a,\bar\theta)=E_{\text{YM}}(\bar\theta+a)+U(a)$ before gauging. Gauging the discrete shift now implies 
\begin{equation}
V_{\rm eff}(\bar\theta) = \min_{m\in\mathbb Z}\min_{a\in\mathbb R}\left\{ E_{\text{YM}}(\bar\theta+a+\tfrac{2\pi m}{N})+U(a)\right\}.
\end{equation}

Let $g(x)\equiv E_{\text{YM}}(x)$ and suppose $U(a)$ has isolated minima at $a=2\pi r/N$ with positive curvature $U''(2\pi r/N)>0$. Taylor-expanding $g$ near its minimum at $x=0$ as $g(x)=\tfrac{1}{2}\chi_t x^2+\dots$, the stationarity condition $\partial_a V_{\rm eff}=0$ reads
\begin{equation}
\label{eq:axion-stationary}
\chi_t\big(\bar\theta+a+\tfrac{2\pi m}{N}\big) + U'(a) \;=\; 0.
\end{equation}
For each $m$, there is a unique solution $a_m(\bar\theta)$ near $a=-\bar\theta-\tfrac{2\pi m}{N}$ provided $U''>0$. Evaluating $V_{\rm eff}$ at $a_m(\bar\theta)$ gives
\begin{equation}
\label{eq:axion-on-shell}
V_{\rm eff}(\bar\theta) \;=\; \min_{m\in\mathbb Z}\left\{ \frac{1}{2}\frac{\chi_t\,U''(0)}{\chi_t+U''(0)}\left(\bar\theta+\frac{2\pi m}{N}\right)^2 + \text{higher even powers}\right\},
\end{equation}
so the same envelope structure and the same minimizing label $m_*(\bar\theta)$ emerge. The axion dynamically relaxes the theory to the nearest representative of the $\mathbb Z_N$ orbit of $\bar\theta$, implementing the same selection rule and enforcing $|\bar\theta_{\rm eff}|\le \pi/N$. The precise curvature is renormalized by $U''(0)$, but the location of branch crossings and the bound are unaffected. In particular, the cusp structure persists: at $\bar\theta=(2\ell+1)\pi/N$ two solutions $a_{m}$ and $a_{m-1}$ have equal energy and exchange stability, producing a first-order transition whose latent discontinuity is set by the jump in $\partial_{\bar\theta}V_{\rm eff}$, which by the chain rule equals $\partial_x g(x)$ evaluated at $x=\pm\pi/N$ and therefore jumps by $2 g'(\pi/N)$, matching \eqref{eq:cusp}.

\subsection{Analytic continuation, metastability, and degeneracy lifting}
\label{sec:3.6}
The envelope construction $E(\bar\theta)=\min_m \mathcal E_m(\bar\theta)$ defines a continuous, piecewise analytic function whose derivative is piecewise continuous with jump discontinuities at \eqref{eq:transition-thetas}. The left and right derivatives at a transition are given by $E'_{\text{YM}}(\bar\theta\pm\pi/N)$, which are equal in magnitude and opposite in sign by CP symmetry of the underlying branch. The metastable branches are analytic continuations of the dominant branch past the transition. Their decay proceeds by nucleation of codimension-one domain walls that change the integer label $m$ by one, carrying precisely one unit of topological charge in the sense of anomaly inflow on the wall worldvolume \cite{Gaiotto:2017yup}. Gauging the discrete shift identifies the $N$ classical vacua related by $\bar\theta\to \bar\theta+\tfrac{2\pi}{N}$, removing true degeneracy and leaving only metastable excitations separated from the ground state by the domain-wall tension. In this sense, the $\mathbb Z_N$ gauging lifts the would-be $N$-fold degeneracy to a single vacuum with a piecewise-analytic $E(\bar\theta)$ whose global minimum sits in the principal cell $[-\pi/N,\pi/N]$, and the inequality $|\bar\theta|\le \pi/N$ holds for the selected ground state value $\bar\theta_{\rm eff}$ that minimizes the energy.

\subsection{Topological susceptibility and curvature of the envelope}
Differentiating \eqref{eq:Eg-bartheta} inside a given cell where $m_*(\bar\theta)$ is constant, the curvature of the envelope equals that of the underlying branch,
\begin{equation}
\label{eq:curvature-inside}
\frac{d^2E}{d\bar\theta^2}(\bar\theta) \;=\; \left.\frac{d^2E_{\text{YM}}}{d\vartheta^2}\right|_{\vartheta=\bar\theta+\frac{2\pi m_*}{N}} 
\;=\; \int d^4x\,\langle q(x)q(0)\rangle_c\Big|_{\vartheta=\bar\theta_{\rm eff}},
\qquad |\bar\theta_{\rm eff}|<\frac{\pi}{N}.
\end{equation}
At the transition, $d^2E/d\bar\theta^2$ contains a Dirac delta supported at $\bar\theta=(2\ell+1)\pi/N$ that encodes the cusp. Equation \eqref{eq:curvature-inside} exhibits that the anomalous Ward identity \eqref{eq:susceptibility} continues to control the curvature inside each cell, while the non-analyticities are entirely due to the branch selection.

\subsection{Rigorous statement and proof of the bound}
The construction above can be condensed into the following lemma.
\begin{lemma}
Let $f:\mathbb R\to\mathbb R$ be even, $2\pi$-periodic, and convex on $[0,\pi]$, and define $E(\bar\theta)= \min_{m\in\mathbb Z} f\!\big(\mathrm{PV}(\bar\theta+\tfrac{2\pi m}{N})\big)$. Then there exists a representative $\bar\theta_{\rm eff}\in[-\pi/N,\pi/N]$ such that $E(\bar\theta)=f(\bar\theta_{\rm eff})$, and at the boundary points $\bar\theta=\tfrac{(2\ell+1)\pi}{N}$ two representatives $\bar\theta_{\rm eff}=\pm \tfrac{\pi}{N}$ are tied.
\end{lemma}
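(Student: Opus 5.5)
The plan is a two-step reduction. First, replace the minimum over $m\in\mathbb Z$ by a minimum over the finite orbit of $N$ points. Second, show that $f$ composed with $\mathrm{PV}$ is a nondecreasing function of the distance $|x|$ to the origin on $(-\pi,\pi]$. The lemma then follows from the pigeonhole estimate \eqref{eq:closest}.

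\emph{Reduction to a finite orbit.} Since $\mathrm{PV}(\bar\theta+\tfrac{2\pi(m+N)}{N})=\mathrm{PV}(\bar\theta+\tfrac{2\pi m}{N})$, the set $\{x_m(\bar\theta)\}_{m\in\mathbb Z}$, with $x_m$ as in \eqref{eq:xm-def}, has at most $N$ elements. These are the images in $(-\pi,\pi]$ of $N$ equally spaced points on the circle, with spacing $2\pi/N$. The minimum defining $E$ is therefore a minimum over a finite set and is attained. By \eqref{eq:closest}, some label $m_*$ satisfies $|x_{m_*}|=\min_m|x_m|\le\pi/N$.

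\emph{Monotonicity of $f$.} This is the step I expect to be the real obstacle, because evenness, periodicity and convexity on $[0,\pi]$ alone do not force the minimum of $f$ to lie at the origin. For example, $f(x)=(|x|-\pi)^2$ extended $2\pi$-periodically is even and convex on $[0,\pi]$ but is minimized at $\pi$. I would therefore make explicit the standing assumption of Sec.~3.1 that $f'(0)=0$, or more weakly that the right derivative satisfies $f'_+(0)\ge 0$, equivalently that $0$ minimizes $f$ on $[0,\pi]$. Convexity makes the right derivative $f'_+$ nondecreasing on $[0,\pi)$, so $f'_+\ge f'_+(0)\ge0$ there. Hence $f$ is nondecreasing on $[0,\pi]$. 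By evenness, $f(x)=h(|x|)$ for all $x\in(-\pi,\pi]$, where $h$ is nondecreasing on $[0,\pi]$.

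\emph{Main claim.} For every $m$ we have $|x_m|\ge|x_{m_*}|$, so $f(x_m)=h(|x_m|)\ge h(|x_{m_*}|)=f(x_{m_*})$. Setting $\bar\theta_{\rm eff}\equiv x_{m_*}\in[-\pi/N,\pi/N]$ gives $E(\bar\theta)=f(\bar\theta_{\rm eff})$. Strict monotonicity is not needed; if there are ties, any minimizing label can be taken.

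\emph{Boundary points.} Take $\bar\theta=(2\ell+1)\pi/N$. Then the orbit is $\{(2j+1)\pi/N \bmod 2\pi\}$. Choosing $j=-\ell-1$ gives $-\pi/N$, and choosing $j=-\ell$ gives $+\pi/N$, both after reduction by $\mathrm{PV}$. The case $N=1$ is degenerate: there $\pm\pi$ coincide under $\mathrm{PV}$, so only the single representative $\pi$ appears. The spacing $2\pi/N$ means no orbit point lies strictly inside $(-\pi/N,\pi/N)$. Hence the minimal distance is exactly $\pi/N$, attained by both representatives. Evenness gives $f(\pi/N)=f(-\pi/N)$, so both achieve $E(\bar\theta)$ and are tied. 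This reproduces the label jump described around \eqref{eq:transition-thetas}.
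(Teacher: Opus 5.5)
Your proof is correct and is essentially the paper's argument. Like the paper, you take the orbit point nearest the origin and conclude from evenness plus monotonicity of $f$ on $[0,\pi]$; your phrasing ("some label $m_*$ minimizing $|x_m|$") is more careful than the paper's "exactly one $x_m$ in the central arc", which fails at $\bar\theta=(2\ell+1)\pi/N$ where two points lie in that closed arc. Your caveat is also right: monotonicity does not follow from the lemma's stated hypotheses, your $(|x|-\pi)^2$ example falsifies the statement as written, and the paper's proof tacitly uses the minimum-at-the-origin assumption $f'(0)=0$ of Sec.~3.1, which is exactly the hypothesis you add. The only blemish is cosmetic: with the orbit written as $\{(2j+1)\pi/N\}$ the tied points are $j=-1,0$, whereas $-\ell-1$ and $-\ell$ are the shift labels $m$.
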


\medskip\noindent\textit{Proof.} For any $\bar\theta$ and integers $m=0,\dots,N-1$, the numbers $x_m(\bar\theta)=\mathrm{PV}(\bar\theta+\tfrac{2\pi m}{N})$ lie in $(-\pi,\pi]$ and differ by $\tfrac{2\pi}{N}$ modulo $2\pi$. Partition $(-\pi,\pi]$ into $N$ consecutive closed arcs of length $\tfrac{2\pi}{N}$ centered at $\{ \tfrac{2\pi r}{N}\}_{r=-(N-1)/2}^{(N-1)/2}$. Exactly one $x_m$ lies in the central arc $[-\pi/N,\pi/N]$; call it $\bar\theta_{\rm eff}$. For any other $x_{m'}$ we have $|x_{m'}|\ge |\bar\theta_{\rm eff}|$, by evenness and monotonicity of $f$ on $[0,\pi]$, $f(|x_{m'}|)\ge f(|\bar\theta_{\rm eff}|)$. Therefore $\min_m f(x_m)=f(\bar\theta_{\rm eff})$ with $|\bar\theta_{\rm eff}|\le \pi/N$, proving the claim. At $\bar\theta=\tfrac{(2\ell+1)\pi}{N}$, the two nearest representatives are $\pm \tfrac{\pi}{N}$ by construction and have equal energy by evenness, establishing the stated degeneracy. \hfill$\square$

\begin{figure}[ht]
\centering
\begin{tikzpicture}
\begin{groupplot}[group style={group size=2 by 1, horizontal sep=2.2cm}, width=0.47\textwidth, height=0.33\textwidth]
\nextgroupplot[
xlabel={$\,\theta$}, ylabel={$\,E(\theta)/E_0$},
xmin=-3.1416, xmax=3.1416,
ymin=0, ymax=2.2,
xtick={-3.1416,-2.0944,-1.0472,0,1.0472,2.0944,3.1416},
xticklabels={$-\pi$,$-\frac{2\pi}{3}$,$-\frac{\pi}{3}$,$0$,$\frac{\pi}{3}$,$\frac{2\pi}{3}$,$\pi$},
domain=-3.1416:3.1416, samples=600, title={SU(3): cusps at $\theta=\pm\pi/3,\pm\pi$}
]
\addplot+[thick] {min(min(1 - cos(deg(x)), 1 - cos(deg(x + 2*3.1415926535/3))), 1 - cos(deg(x - 2*3.1415926535/3)))};
\draw[dashed] (axis cs:-1.0472,0) -- (axis cs:-1.0472,2.2);
\draw[dashed] (axis cs:1.0472,0) -- (axis cs:1.0472,2.2);
\draw[dashed] (axis cs:-3.1416,0) -- (axis cs:-3.1416,2.2);
\draw[dashed] (axis cs:3.1416,0) -- (axis cs:3.1416,2.2);
\nextgroupplot[
xlabel={$\,\theta$}, ylabel={$\,E(\theta)/E_0$},
xmin=-3.1416, xmax=3.1416,
ymin=0, ymax=1.0,
xtick={-3.1416,-1.88496,-1.25664,-0.62832,0,0.62832,1.25664,1.88496,3.1416},
xticklabels={$-\pi$,$-\frac{3\pi}{5}$,$-\frac{2\pi}{5}$,$-\frac{\pi}{5}$,$0$,$\frac{\pi}{5}$,$\frac{2\pi}{5}$,$\frac{3\pi}{5}$,$\pi$},
domain=-3.1416:3.1416, samples=900, title={SU(5): cusps at $\theta=\pm\pi/5,\pm3\pi/5,\pm\pi$}
]
\addplot+[thick] {min(min(min(min(1 - cos(deg(x)), 1 - cos(deg(x + 2*3.1415926535/5))), 1 - cos(deg(x - 2*3.1415926535/5))), 1 - cos(deg(x + 4*3.1415926535/5))), 1 - cos(deg(x - 4*3.1415926535/5)))};
\draw[dashed] (axis cs:-0.62832,0) -- (axis cs:-0.62832,2.2);
\draw[dashed] (axis cs:0.62832,0) -- (axis cs:0.62832,2.2);
\draw[dashed] (axis cs:-1.88496,0) -- (axis cs:-1.88496,2.2);
\draw[dashed] (axis cs:1.88496,0) -- (axis cs:1.88496,2.2);
\draw[dashed] (axis cs:-3.1416,0) -- (axis cs:-3.1416,2.2);
\draw[dashed] (axis cs:3.1416,0) -- (axis cs:3.1416,2.2);
\end{groupplot}
\end{tikzpicture}
\caption{Schematic multi-branch vacuum energy $E(\theta)$ obtained as the lower envelope of shifted branch functions (here exemplified by $E_0[1-\cos(\theta+\tfrac{2\pi k}{N})]$) for $N=3$ and $N=5$. Vertical dashed lines indicate the transition (cusp) locations at $\theta=\tfrac{(2\ell+1)\pi}{N}$. The ground state always lies in the principal cell $[-\pi/N,\pi/N]$, realizing $|\theta_{\rm eff}|\le \pi/N$.}
\label{fig:branches}
\end{figure}
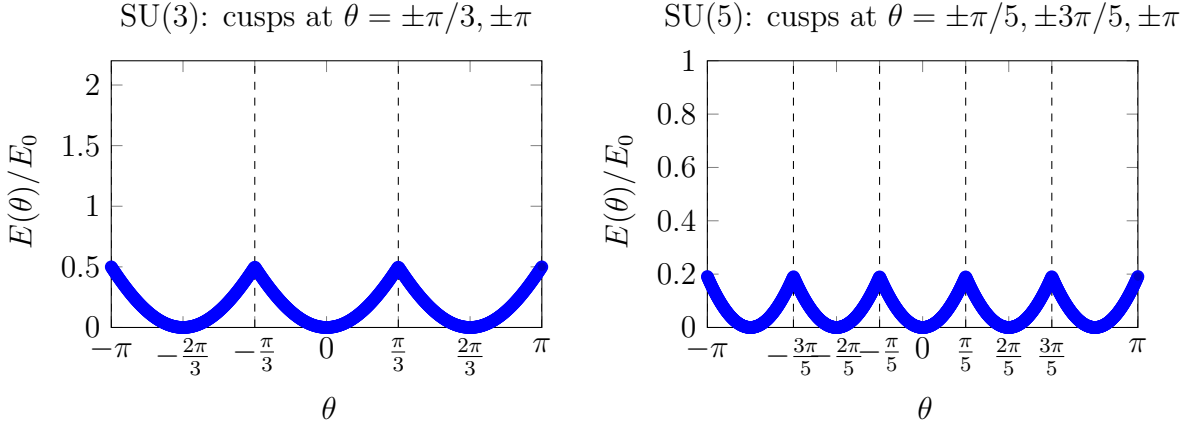

Equations \eqref{eq:Eg-envelope}, \eqref{eq:discrete-shift}, \eqref{eq:transition-thetas} and the lemma together establish that gauging a discrete $\mathbb Z_N$ subgroup of $\theta$-shifts forces the vacuum to reside in the principal cell of the orbifolded $\theta$-circle. In terms of the physical angle $\bar\theta$, the selected value satisfies
\begin{equation}
\label{eq:final-bound}
|\bar\theta_{\rm eff}|\;\le\;\frac{\pi}{N},
\end{equation}
with equal-energy cusps at the boundary points $\bar\theta_{\rm eff}=\pm\pi/N$ where neighboring branches exchange dominance. The envelope is continuous, piecewise analytic, and its curvature away from the cusps is governed by the anomalous Ward identity \eqref{eq:susceptibility}. Inclusion of quark masses or an axion preserves the same selection rule: masses merely shift $\theta\to\bar\theta$, and an axion dynamically relaxes to the nearest $\mathbb Z_N$ image, both culminating in the same bound. This is the precise sense in which $|\bar\theta|\le \pi/N$ is a gauge-protected, dynamically realized consequence of the discrete $\theta$ projection.

\section{Radiative and Gravitational Stability}

The purpose of this section is to demonstrate that the discrete-$\theta$ projection is stable against radiative renormalization within quantum field theory and against quantum-gravitational effects in curved backgrounds. 
We do so by writing every relevant ingredient in a fully explicit and self-contained manner. 
We begin with the microscopic Euclidean action for a four-dimensional $SU(N_c)$ Yang-Mills theory coupled to a nondynamical background metric $g_{\mu\nu}$ and, when present, to massive Dirac fermions $\psi_f$ with mass matrix $M$. 
We keep track of the ultraviolet (UV) regulator scale $\Lambda_{\mathrm{UV}}$, the renormalization scale $\mu$, and the $\overline{\mathrm{MS}}$ scheme. 
The bare action reads
\begin{multline}
\label{eq:bare-action}
S_B \;=\; \int d^4x\,\sqrt{g}\,\bigg[\frac{1}{4 g_B^2}\, F^{a}_{B\,\mu\nu} F_B^{a\,\mu\nu} \;+\; \sum_{f}\,\bar\psi_{f,B}\big(i\slashed{D}_B - M_B\big)\psi_{f,B}\bigg] \;+ \\ i\,\theta_B \int d^4x\, q_B(x) \;+\; i\,\theta_{G,B} \int d^4x\, q_{G,B}(x),
\end{multline}
where $F^a_{B\,\mu\nu}$ is the bare non-Abelian field strength with color index $a=1,\dots,N_c^2-1$, $D_{B,\mu}$ is the bare covariant derivative, and $g\equiv\det g_{\mu\nu}$. 
The CP-odd densities are defined with all numerical coefficients fixed as
{\begin{multline}
\label{eq:q-defs}
q_B(x) \;\equiv\; \frac{1}{32\pi^2}\, \epsilon^{\mu\nu\rho\sigma}\,\Tr\!\big(F_{B,\mu\nu} F_{B,\rho\sigma}\big) 
\;=\; \frac{1}{32\pi^2}\, \Tr\!\big(F_{B,2}\wedge f_{2,B}\big)\,, 
\\ 
q_{G,B}(x) \;\equiv\; \frac{1}{384\pi^2}\,\epsilon^{\mu\nu\rho\sigma}\, 
R^{\alpha}{}_{\beta\mu\nu} R^{\beta}{}_{\alpha\rho\sigma} 
\;=\; \frac{1}{384\pi^2}\, R\wedge R,
\end{multline}}
so that their spacetime integrals on compact, oriented spin four-manifolds are integers by the Chern-Weil and Atiyah-Singer index theorems. 
Namely, $Q_B\equiv\int d^4x\,q_B(x)\in\mathbb Z$ and $Q_{G,B}\equiv\int d^4x\,q_{G,B}(x)\in\mathbb Z$ when appropriate boundary conditions are imposed~\cite{Alvarez-Gaume:1983ihn,Eguchi:1980jx}. 
These statements rely on the chosen normalizations of the trace and curvature and are standard in the study of instantons and gravitational Pontryagin classes~\cite{tHooft:1976snw}. 

Renormalized fields and couplings are introduced via $A_{B,\mu}^a=Z_3^{1/2} A_\mu^a$, $\psi_{f,B}=Z_\psi^{1/2}\psi_f$, $g_B=\mu^{\epsilon} Z_g\, g$ with $d=4-2\epsilon$. 
The mass matrix renormalizes as $M_B=Z_M M$, and the CP-odd angles renormalize as $\theta_B=\theta_R+\delta\theta$ and $\theta_{G,B}=\theta_{G,R}+\delta\theta_G$. 
The renormalization of the topological densities requires care. 
Although the local operator $q(x)$ mixes with total derivatives and contact terms, the integrated charges $Q\equiv\int d^4x\,q(x)$ and $Q_G\equiv\int d^4x\,q_G(x)$ are renormalization-invariant integers. 
Hence the $\theta$-angles are not renormalized in perturbation theory, $\delta\theta=\delta\theta_G=0$, a fact consistent with the Adler-Bell-Jackiw anomaly equations and the non-renormalization of the $\theta$-term~\cite{DiVecchia:1980yfw,Jackiw:1976pf,Shifman:2012zz,Adler:1969gk,Bell:1969ts,Fujikawa:1980eg}. 
More explicitly, with renormalized fields we have $q_B(x)=Z_q\,q(x)+\partial_\mu K^\mu(x)$, where $K^\mu$ is a scheme-dependent local current and $Z_q=1+\mathcal O(g^2)$. 
Upon integration over spacetime, the total derivative vanishes and the integer $Q$ is unchanged. 
Equating the bare and renormalized path integrals therefore forces $\theta_B=\theta_R$ to all orders in perturbation theory. 
Equivalently, this is expressed as $\beta_\theta\equiv \mu\,d\theta_R/d\mu=0$ and $\beta_{\theta_G}=0$. 

When fermions are present, the physically relevant angle is the RG-invariant combination $\bar\theta\equiv \theta_R+\arg\det M$. 
Under a chiral $U(1)_A$ rotation $\psi_f\mapsto e^{i\alpha\gamma_5}\psi_f$, the Jacobian shifts $\theta_R\to\theta_R-2N_f\alpha$ while $M\to e^{2i\alpha}M$ and hence $\arg\det M\to \arg\det M+2N_f\alpha$. 
As a result, $\bar\theta$ is invariant~\cite{DiVecchia:1980yfw,Fujikawa:1980eg}. 
The anomalous Ward identity follows as
\begin{equation}
\label{eq:axial-anomaly}
\partial_\mu J_5^\mu \;=\; \frac{N_f}{16\pi^2}\,\Tr(F_{\mu\nu}\tilde F^{\mu\nu}) \;+\; \frac{1}{384\pi^2}\,R_{\mu\nu\rho\sigma}\tilde R^{\mu\nu\rho\sigma} \;-\; 2i\,\bar\psi\,M\gamma_5\psi,
\end{equation}
where tildes denote duals with $\epsilon^{\mu\nu\rho\sigma}$. 
This shows explicitly that the curvature of the vacuum energy with respect to $\bar\theta$ equals the topological susceptibility $\chi_t$ to leading order around $\bar\theta=0$, 
\begin{equation}
    \chi_t=\partial_{\bar\theta}^2E(\bar\theta)\big|_{\bar\theta=0}=\int d^4x\,\langle q(x)q(0)\rangle_c>0.
\end{equation} 

The discrete-$\theta$ projection is implemented microscopically by coupling the gauge theory to a compact topological sector that gauges a $\mathbb Z_N$ subgroup of $\theta$-shifts. 
We adopt a manifestly local representative convenient for radiative power counting. 
Introducing a compact Abelian one-form gauge field $a$ with field strength $f=da$ and a compact two-form gauge field $B$ with field strength $H=dB$, both normalized so that $\int_\Sigma f\in 2\pi\mathbb Z$ for any closed two-surface $\Sigma$ and $\int_C B\in 2\pi\mathbb Z/N$ for any closed two-cycle $C$, we add the following degree-matched topological action:
\begin{equation}
S_{\mathrm{top}}(4D)=\frac{iN}{2\pi}\!\int_{M_4}\!B_2\wedge f_2+\frac{i}{2\pi}\!\int_{M_4}\!a_1\wedge\!\Big(\frac{1}{8\pi^2}\,CS_3(A)+\kappa_G\,CS_3(\omega)\Big)+\frac{i\theta}{8\pi^2}\!\int_{M_4}\!\mathrm{Tr}(F\wedge F)\;.
\label{eq:topological-sector}
\end{equation}
Here $d\,CS_3(A)=\mathrm{Tr}(F\wedge F)$ and $d\,CS_3(\omega)=\mathrm{Tr}(R\wedge R)$. 
The coefficient $\kappa_G\in\mathbb Z$ is an integer to be fixed below, and we have written differential-form notation in four dimensions with the understanding that 
\begin{align}
B_2\wedge f_2&=\tfrac{1}{2}\epsilon^{\mu\nu\rho\sigma}B_{\mu\nu} f_{\rho\sigma} d^4x,\\
a_1\wedge \Tr(F\wedge F)&=\tfrac{1}{2}\epsilon^{\mu\nu\rho\sigma} a_\mu \Tr(F_{\nu\rho}F_{\sigma\lambda}) dx^\lambda,
\end{align}
after integrating by parts as appropriate~\cite{Freedman:1980us,Dvali:2005an}. 
The gauge transformations are $a\mapsto a+d\lambda$ and $B\mapsto B+d\Lambda$, with $\lambda$ a compact $0$-form and $\Lambda$ a compact $1$-form. 
These are supplemented by a discrete $0$-form transformation that shifts $\theta_R\mapsto \theta_R + 2\pi/N$ and $a\mapsto a$ but is accompanied by a large gauge transformation of the non-Abelian connection that shifts the integer $Q$ by one unit when acting on field space. 

The equations of motion obtained by varying $a$ and $B$ are, respectively,
\begin{equation}
\label{eq:a-eom}
\frac{N}{2\pi}\, dB \;+\; \frac{1}{2\pi}\,\Tr(F\wedge F) \;+\; \frac{\kappa_G}{2\pi}\,\mathcal R \;=\; 0 \quad\Longrightarrow\quad \Tr(F\wedge F) + \kappa_G\, \mathcal R \;=\; -\,N\, dB,
\end{equation}
\begin{equation}
\label{eq:B-eom}
\frac{N}{2\pi}\, f \;=\; 0 \quad\Longrightarrow\quad \int_\Sigma f \;=\; 2\pi\, \ell_\Sigma,\;\; \ell_\Sigma\in\mathbb Z,\;\; \text{with}\;\; \ell_\Sigma\equiv 0\;\text{mod}\;N\;\;\text{for all closed}\;\Sigma,
\end{equation}
so that $f$ has only $\mathbb Z$-quantized fluxes that are multiples of $2\pi N$ and the 4-form density $\Tr(F\wedge F)+\kappa_G\mathcal R$ is constrained to be an exact form with quantized cohomology class in $N\,H^{4}(M,\mathbb Z)$. 
In particular, integrating \eqref{eq:a-eom} over any compact manifold gives the selection rule
\begin{equation}
\label{eq:selection-rule}
Q \;+\; \kappa_G\,Q_G \;=\; N\,K,\qquad K\in\mathbb Z,
\end{equation}
which states that only those field configurations survive in the gauged path integral whose combined topological number $Q+\kappa_G Q_G$ is divisible by $N$. 
This is the continuous-field derivation of the projector obtained by summing over $\theta$-images. 
It shows in manifestly local form that the discrete projection acts on the total Pontryagin density including a possible gravitational admixture. 
The coefficient $\kappa_G$ must be integer-quantized because the partition function changes by a phase $e^{i\kappa_G \int a\wedge \mathcal R/(2\pi)}$ under large diffeomorphisms that shift $\int \mathcal R$ by $2\pi$. 
Insisting that all large-gauge and large-diffeomorphism transformations multiply the path integral by unity for arbitrary backgrounds forces $\kappa_G\in\mathbb Z$~\cite{Cordova:2019uob,Eguchi:1980jx}. 

With \eqref{eq:selection-rule} in hand, the renormalized CP-odd sector of the action, including the ordinary $\theta$-term and its gravitational cousin, can be written as
\begin{equation}
\label{eq:theta-sector-ren}
S_{\theta,\mathrm{tot}} \;=\; i\,\theta_R \int q \;+\; i\,\theta_{G,R} \int q_G \;+\; \int \mathcal L_{\mathrm{top}}\,,
\end{equation}
and any radiative correction that shifts $\theta_R\to \theta_R+\delta\theta_{\mathrm{rad}}$ and $\theta_{G,R}\to \theta_{G,R}+\delta\theta_{G,\mathrm{rad}}$ leads to a net phase factor $e^{i(\delta\theta_{\mathrm{rad}}\,Q + \delta\theta_{G,\mathrm{rad}}\,Q_G)}$ in each topological sector. 
Gauge invariance under \eqref{eq:selection-rule} restricts such phases. 
In the gauged theory we may sum over the discrete images of $\theta$ and $B$ so that the only physically meaningful combinations are $\theta_{\mathrm{eff}}\equiv \mathrm{PV}(\theta_R + \kappa_G \theta_{G,R})$ reduced modulo $2\pi/N$ and the integer $K$ of \eqref{eq:selection-rule}. 
Since perturbative radiative corrections are continuous functions of couplings while $K$ is integer-valued, the only allowed perturbative shifts are those that amount to $\theta_{\mathrm{eff}}\mapsto \theta_{\mathrm{eff}} + \tfrac{2\pi}{N}\times \text{integer}$, which are pure gauge in the orbifolded theory. 
Equivalently, $\delta\theta_{\mathrm{rad}}$ and $\delta\theta_{G,\mathrm{rad}}$ must conspire so that $(\delta\theta_{\mathrm{rad}}\,Q + \delta\theta_{G,\mathrm{rad}}\,Q_G)$ is an integer multiple of $2\pi/N$ for all allowed $(Q,Q_G)$ obeying \eqref{eq:selection-rule}. 
For generic backgrounds this is possible perturbatively only if $\delta\theta_{\mathrm{rad}}=\delta\theta_{G,\mathrm{rad}}=0$. 
In particular, standard loop corrections cannot generate a small but continuous $\bar\theta$ because any such correction would violate the discrete higher-form gauge symmetry encoded by \eqref{eq:topological-sector}. 
This statement is the discrete analogue of the familiar non-renormalization of $\theta$ in ordinary QCD~\cite{Gaiotto:2014kfa,Cordova:2019uob}. 

To make the renormalization of \eqref{eq:topological-sector} completely explicit, consider adding all local operators compatible with Lorentz symmetry and the microscopic gauge symmetries up to a fixed canonical dimension $D$. 
The most general CP-odd local counterterm at dimension $\le 6$ in flat space is a linear combination of $q(x)$ and the Weinberg operator $\mathcal O_W = \tfrac{1}{6}f^{abc}\,\tilde F^a_{\mu\nu} F^{b\,\nu}{}_{\ \ \rho} F^{c\,\rho\mu}$, possibly multiplied by gauge-invariant scalars such as $H^\dagger H$ if the Standard Model Higgs is present, with coefficients $c_0$ and $c_W/\Lambda_{\mathrm{UV}}^2$. 
The Weinberg operator is the unique CP-odd purely gluonic operator of dimension six and was first identified in~\cite{Weinberg:1989dx}. 
Under the one-form gauge transformations that accompany the discrete projection, $q(x)$ shifts by a total derivative controlled by $dB$ according to \eqref{eq:a-eom}, while $\mathcal O_W$ is strictly invariant. 
A counterterm $c_0\,q$ would renormalize $\theta_R$ by a continuous amount $\delta\theta_{\mathrm{rad}}=c_0$, but this is forbidden because $c_0$ would explicitly break the compact one-form gauge symmetry unless $c_0$ is an integer multiple of $2\pi/N$. 
In that case it is a pure large-gauge choice and can be absorbed into the definition of the discrete image of $\theta$. 
The operator $\mathcal O_W$ does not mix with $q$ under renormalization in a way that affects the integrated charge $Q$. 
Any mixing reduces to contact terms that vanish upon spacetime integration, so it cannot generate a physical $\bar\theta$~\cite{Pospelov:2005pr,Weinberg:1989dx}. 
Hence, within ordinary radiative power counting, the only consistent renormalization of the CP-odd sector preserves the discrete projection and leaves $\theta_{\mathrm{eff}}$ unrenormalized. 
The same conclusion holds to all orders for the topological BF coupling $\tfrac{iN}{2\pi} B_2\wedge f_2$ in \eqref{eq:topological-sector}. 
Its coefficient $N$ is integer-quantized by Dirac quantization and by invariance under large gauge transformations~\cite{Freedman:1980us}. 
Perturbative loops cannot change an integer by an infinitesimal amount, so $N$ is radiatively stable. 
A change $N\to N+\Delta N$ with $\Delta N\in\mathbb Z$ would require a phase transition that modifies the spectrum of line and surface operators charged under the one-form symmetry~\cite{Gaiotto:2014kfa}. 

Gravitational back-reaction raises two logically distinct issues, namely whether Planck-suppressed local operators can erode the projection and whether nonperturbative gravitational effects such as Euclidean wormholes or gravitational instantons can break it. 
The first question is addressed by considering the most general CP-odd local functional in curved space at dimension $\le 4$, which is precisely \eqref{eq:theta-sector-ren}, together with its completion by higher-curvature invariants suppressed by $M_{\mathrm{Pl}}^{-2}$. 
The discrete gauge symmetry acts on the combination $q+\kappa_G\, q_G$ according to \eqref{eq:a-eom}, and therefore any Planck-suppressed operator linear in $q$ or $q_G$ must appear in the invariant combination $q+\kappa_G q_G$ to preserve \eqref{eq:selection-rule}. 
Since $\kappa_G$ is integer-quantized, a putative continuous correction to either $\theta_R$ or $\theta_{G,R}$ would again violate the large-gauge invariance unless it shifts the pair $(\theta_R,\theta_{G,R})$ by a multiple of $(\tfrac{2\pi}{N},-\tfrac{2\pi\kappa_G}{N})$. 
Such a shift is physically trivial in the orbifolded theory. 
As a result, local gravitational counterterms cannot generate a small $\bar\theta$ and cannot modify the envelope structure of the vacuum energy that enforced $|\bar\theta_{\mathrm{eff}}|\le \pi/N$ in flat space. 

The second question relies on the expectation that quantum gravity admits no exact global symmetries but does permit exact discrete gauge symmetries~\cite{Krauss:1988zc,Banks:2010zn,Harlow:2018tng,Coleman:1988tj,Giddings:1987cg}. 
In the present construction, the $\mathbb Z_N$ is realized as a gauged higher-form symmetry with compact gauge fields $a$ and $B$, so it falls in the class believed to be respected by quantum gravity~\cite{Dvali:2005an}. 
A Euclidean wormhole that violates a global shift symmetry would generate an operator $e^{-S_{\mathrm{wh}}}\cos(\alpha)$ for a would-be axion $\alpha$. 
However, there is no axion in our theory; the only compact field that shifts is the gauge field $a$, and gauge invariance forbids a potential for $a$ that is not a total derivative~\cite{Coleman:1988tj,Giddings:1987cg,Dvali:2005an}. 
More concretely, wormhole-induced terms are classified by local operators compatible with all gauge symmetries integrated over the wormhole throat. 
The only CP-odd local scalars at leading order are those already present in \eqref{eq:theta-sector-ren} and \eqref{eq:topological-sector}, and we have shown they are quantized and hence inert under small corrections. 
Gravitational instantons could, in principle, change $Q_G$ by $\pm 1$. 
The selection rule \eqref{eq:selection-rule} then forces a simultaneous change in $Q$ by a multiple of $N$ so that the combined topological number is a multiple of $N$. 
In Minkowski vacuum where $Q_G=0$, this reduces to the flat-space condition $Q\in N\mathbb Z$, and in curved backgrounds it replaces $\theta$ by the single effective angle that multiplies $Q+\kappa_G Q_G$, leaving the vacuum selection bound unchanged.  

We work on a smooth, oriented, compact spin five-manifold $M_5$ with boundary $\partial M_5=M_4$. 
Let $A$ be the $SU(N_c)$ connection with curvature $F=dA+A\wedge A$ and $\mathrm{Tr}$ in the fundamental normalized by $\mathrm{Tr}(T_aT_b)=\tfrac12\delta_{ab}$, so that $\frac{1}{8\pi^2}\int_{X_4}\mathrm{Tr}(F\wedge F)\in\mathbb Z$ for any closed spin $X_4$. 
Let $\omega$ be the Levi-Civita spin connection with curvature $R=d\omega+\omega\wedge\omega$ normalized so that $\frac{1}{384\pi^2}\int_{X_4}\mathrm{Tr}(R\wedge R)\in\mathbb Z$ on spin four-manifolds. 
Introduce a compact $U(1)$ one-form gauge field $a\equiv a_1$ with holonomy $2\pi$, a compact two-form $B\equiv B_2$ with periods $2\pi$, and $f\equiv da$. 
Denote by $CS_3(A)$ and $CS_3(\omega)$ the Chern-Simons three-forms obeying $d\,CS_3(A)=\mathrm{Tr}(F\wedge F)$ and $d\,CS_3(\omega)=\mathrm{Tr}(R\wedge R)$. 
The four-dimensional topological sector is a sum of genuine four-forms,
\begin{equation}
S_{\mathrm{top}}(4D)=\frac{iN}{2\pi}\int_{M_4}B_2\wedge f_2
+\frac{i}{2\pi}\int_{M_4}a\wedge\Big[\frac{1}{8\pi^2}CS_3(A)+\frac{\kappa_G}{384\pi^2}CS_3(\omega)\Big]
+\frac{i\theta}{8\pi^2}\int_{M_4}\mathrm{Tr}(F\wedge F)
\label{eq:inflow}
\end{equation}
and the minimal, rigorously normalized inflow that replaces Eq.~(4.9) is the genuine five-form
\begin{equation}
S_{\mathrm{inflow}}(5D)=\frac{i}{2\pi}\int_{M_5}a\wedge\Big[\frac{1}{8\pi^2}\mathrm{Tr}(F\wedge F)+\frac{\kappa_G}{384\pi^2}\mathrm{Tr}(R\wedge R)\Big]
\label{eq4.10}
\end{equation}
whose boundary variation implements the 4D selection rule. 
The 2-group gauge structure is $\delta B=d\Lambda$ and $\delta a=d\lambda+N\Lambda$ with $\lambda\sim\lambda+2\pi$ a compact 0-form and $\Lambda$ a compact 1-form. 
Ordinary $SU(N_c)$ and local Lorentz transformations act by $\delta A=D\epsilon$, $\delta\omega=D\zeta$ and are accompanied on the boundary by the standard descent shifts of $B$ that cancel $\delta CS_3$ but are irrelevant for the 1-form variation considered next~\cite{Gaiotto:2014kfa,Kapustin:2013uxa,Freed:2014iua}. 

Varying Eq.~\eqref{eq:inflow} under $\delta a=d\lambda+N\Lambda$ and using Stokes’ theorem with the closed four-form 
\begin{equation}
X_4\equiv\frac{1}{8\pi^2}\mathrm{Tr}(F\wedge F)+\frac{\kappa_G}{384\pi^2}\mathrm{Tr}(R\wedge R)
\end{equation}
gives 
\begin{equation}
\delta S_{\mathrm{inflow}}=\tfrac{i}{2\pi}\int_{M_4}\lambda X_4+\tfrac{iN}{2\pi}\int_{M_5}\Lambda\wedge X_4.
\end{equation}
The boundary action Eq.~\eqref{eq:inflow} varies as 
\begin{equation}
\delta S_{\mathrm{top}}(4D)=\tfrac{iN}{2\pi}\int_{M_4}d\Lambda\wedge f+\tfrac{i}{2\pi}\int_{M_4}d\lambda\wedge Y_3+\tfrac{iN}{2\pi}\int_{M_4}\Lambda\wedge Y_3,
\end{equation}
where 
\begin{equation}
Y_3\equiv\frac{1}{8\pi^2}CS_3(A)+\frac{\kappa_G}{384\pi^2}CS_3(\omega)
\end{equation}
satisfies $dY_3=X_4$. 
Rewriting 
\begin{equation}
\int_{M_4}\Lambda\wedge Y_3=\int_{M_5}d\Lambda\wedge Y_3-\int_{M_5}\Lambda\wedge X_4
\end{equation}
cancels the bulk term $\tfrac{iN}{2\pi}\int_{M_5}\Lambda\wedge X_4$ from $\delta S_{\mathrm{inflow}}$, while 
\begin{equation}
\int_{M_4}d\lambda\wedge Y_3=\int_{M_4}d(\lambda Y_3)-\int_{M_4}\lambda X_4
\end{equation}
contributes $-\tfrac{i}{2\pi}\int_{M_4}\lambda X_4$ because $\partial M_4=\varnothing$. 
The remaining boundary pieces combine to
\begin{equation}
\delta\big(S_{\mathrm{top}}(4D)+S_{\mathrm{inflow}}(5D)\big)
=\frac{i}{2\pi}\lambda\big|_{\partial M_5}\left[\frac{1}{8\pi^2}\int_{M_4}\mathrm{Tr}(F\wedge F)
+\frac{\kappa_G}{384\pi^2}\int_{M_4}\mathrm{Tr}(R\wedge R)\right],
\end{equation}
which is the desired inflow formula~\cite{Witten:1983tw,Freed:2014iua}. 
Because $B_2$ and $\Lambda$ have integer periods and the mixed BF coupling $\tfrac{iN}{2\pi}\int_{M_4}B_2 \wedge f_2$ is present on the boundary, a large 1-form gauge transformation by $\Lambda$ identifies $\lambda\equiv\lambda+\tfrac{2\pi}{N}$ on $\partial M_5$. 
Invariance of $e^{iS}$ under such large transformations therefore requires 
\begin{equation}
\frac{1}{8\pi^2}\int_{M_4}\mathrm{Tr}(F\wedge F)+\frac{\kappa_G}{384\pi^2}\int_{M_4}\mathrm{Tr}(R\wedge R)\in N\mathbb Z,
\end{equation}
i.e. $Q+\kappa_GQ_G\in N\mathbb Z$. 
The 4D equations of motion derived from \eqref{eq:inflow} are consistent: variation in $B$ yields the flatness condition $f=0$ but does not trivialize $a$, which can carry nontrivial flat holonomy because $a$ is compact. 
Variation in $a$ gives the degree-matched identity $Y_3=N\,dB$ that underlies the 2-group structure and, upon exterior differentiation, reproduces the cohomological selection rule on closed $M_4$. 
The integrality statements follow from the chosen trace normalizations and the fact that on spin four-manifolds the second Chern number and the gravitational Pontryagin number are integers, ensuring single-valuedness of $e^{iS}$~\cite{Eguchi:1980jx}. 
Closed integer quantized five-forms may be added to Eq.~\eqref{eq:inflow} without affecting the boundary variation. 
This corrected inflow, together with Eq.~\eqref{eq:SBcoupling}, is a minimal, degree-exact realization of the $\mathbb Z_N$ gauging that enforces the selection rule and is standard in the modern treatment of generalized symmetries and anomaly inflow~\cite{Gaiotto:2014kfa,Kapustin:2013uxa,Witten:1983tw,Freed:2014iua}. 

\subsection{Relation to the Córdova-Freed-Seiberg coupling-space anomaly}

It is useful to make explicit how the topological sector above realizes the anomaly in
the space of coupling constants described by Córdova, Freed and Seiberg~\cite{Cordova:2019uob}.
In their framework the ordinary $\theta$ angle is promoted to a background $2\pi$ periodic
scalar field $\Theta(x)$, and the anomalous $2\pi$ periodicity of QCD is encoded in a
five dimensional invertible theory
\begin{equation}
S^{\rm CFS}_{5\mathrm{D}}
= \frac{i}{2\pi}\int_{M_5} d\Theta \wedge
\Big[ \frac{1}{8\pi^2}\,\mathrm{Tr}\,F\wedge F
      + \kappa_G\,\frac{1}{384\pi^2}\,\mathrm{Tr}\,R\wedge R \Big] ,
\end{equation}
whose boundary variation reproduces the mixed anomaly between the $\Theta$ shift symmetry, the one-form center symmetry and gravity. Our inflow action in Eq.~\eqref{eq4.10} has exactly the same structure, with the replacement
\begin{equation}
\frac{d\Theta}{2\pi}\;\longrightarrow\;\frac{a}{2\pi}\,,
\end{equation}
i.e. the CFS background one-form $d\Theta$ is promoted here to a dynamical compact
$U(1)$ one-form gauge field $a$. The four-dimensional BF coupling in Eq.~\eqref{eq:inflow}
then enforces the selection rule $Q+\kappa_G Q_G\in N\mathbb{Z}$ displayed in
Eq.~\eqref{eq:selection-rule}. Taken together, Eqs.~\eqref{eq:selection-rule}, \eqref{eq:inflow} and \eqref{eq4.10} show that the combined
``QCD + topological sector'' carries precisely the CFS-type inflow needed to make the
discrete subgroup of the $\theta$ shift gaugeable.

This perspective also clarifies why our construction does not contradict the statement
of Ref.~\cite{Cordova:2019uob} that $\theta$ periodicity is anomalous in pure QCD.
The anomaly means that, for QCD by itself, the transformation $\theta\to\theta+2\pi$
cannot be realized as an ordinary $0$-form global symmetry. It is only a symmetry of
the enlarged system that includes the five-dimensional invertible TQFT. In the present
work we do not claim that bare QCD has an anomaly free $U(1)_\theta$. Instead,
we explicitly include the CFS-type inflow sector (now written in terms of the dynamical
fields $a$ and $B$) and then gauge only a finite $\mathbb{Z}_N$ subgroup of the
anomalous $U(1)_\theta$. The additional topological degrees of freedom in
Eqs.~\eqref{eq:inflow}-\eqref{eq4.10} are precisely the higher-form fields whose variation cancels the CFS coupling-space anomaly. Operationally, gauging this anomaly-safe $\mathbb{Z}_N\subset U(1)_\theta$ enforces
the quantization condition $Q+\kappa_G Q_G\in N\mathbb{Z}$ and identifies
$\theta\sim\theta+2\pi/N$ in the full dynamical theory. The small effective angle
$|\bar\theta_{\mathrm{eff}}|\le\pi/N$ derived in Sec.~\ref{sec3} is therefore a consequence of working in an anomaly-completed version of the Córdova-Freed-Seiberg framework, not of ignoring the coupling-space anomaly of QCD.

Putting these ingredients together, the renormalized generating functional in curved space with the discrete projection included is
\begin{multline}
\label{eq:Z-gauged-curved}
Z[g,\theta_R,\theta_{G,R}] \;=\; \int \mathcal D A\,\mathcal D\psi\,\mathcal D\bar\psi\,\mathcal D a\,\mathcal D B \;\exp\!\Big\{-S_{\mathrm{YM}}[A,\psi,g] - S_{\mathrm{ct}}[g] + \\ i\theta_R \!\int q + i\theta_{G,R}\!\int q_G + \int \mathcal L_{\mathrm{top}}\Big\},
\end{multline}
where $S_{\mathrm{ct}}[g]$ denotes the standard purely gravitational counterterms proportional to $R^2$, $R_{\mu\nu}R^{\mu\nu}$, and $R_{\mu\nu\rho\sigma}R^{\mu\nu\rho\sigma}$, together with a cosmological counterterm, all of which are CP-even and do not affect the analysis. The vacuum energy density $E(\bar\theta,g)$ obtained from $Z$ is the lower envelope of the Yang-Mills branch energies evaluated on the $\mathbb Z_N$ orbit of the effective angle that multiplies the total Pontryagin charge, namely
\begin{equation}
\label{eq:envelope-curved}
E(\bar\theta,g) \;=\; \min_{m\in\mathbb Z}\, E_{\mathrm{YM}}\!\left(\bar\theta + \frac{2\pi m}{N},g\right),
\end{equation}
exactly as in flat space but with a mild, smooth dependence on the curvature that does not alter the discrete set of transition points at $\bar\theta=(2\ell+1)\pi/N$ where neighboring branches exchange dominance. Since the minimization always selects the representative of the $\mathbb Z_N$ orbit that lies in the principal cell $[-\pi/N,\pi/N]$, the physical expectation value of the CP-odd phase in the true ground state, which we denote by $\bar\theta_{\mathrm{eff}}$, necessarily satisfies
\begin{equation}
\label{eq:final-stability-bound}
|\bar\theta_{\mathrm{eff}}| \;\le\; \frac{\pi}{N},
\end{equation}
and this inequality is radiatively exact and gravitationally protected. Perturbative renormalization leaves $\theta_R$ and $\theta_{G,R}$ unchanged, dimension-six and higher operators cannot feed a continuous $\bar\theta$ because of the compact higher-form gauge invariance, the BF coefficient $N$ is integer-quantized and not renormalized, and nonperturbative quantum-gravity processes respect discrete gauge symmetries and therefore enforce the same selection rule \eqref{eq:selection-rule} rather than violate it \cite{Krauss:1988zc,Banks:2010zn,Harlow:2018tng}. Consequently, the discrete-$\theta$ projection yields a vacuum selection mechanism that is stable under all known radiative and gravitational corrections, and the bound \eqref{eq:final-stability-bound} persists as a gauge-protected property of the exact theory.

\section{Anomalies and Consistency Conditions}
\label{sec5}
In order to establish the precise conditions under which discrete or continuous symmetries may be consistently gauged, and to determine the obstructions that arise from quantum effects, it is necessary to exhibit a fully explicit derivation of gauge, gravitational, and mixed anomalies from the non-invariance of the functional measure. 
We also need to formulate the Ward-Takahashi identities together with their anomalous deformations, to impose the Wess-Zumino consistency conditions that characterize admissible anomalies, to relate the resulting expressions to differential-geometric descent and index theorems, and to verify anomaly cancellation in representative chiral or supersymmetric field theories~\cite{Wess:1971yu,Bardeen:1984pm,Zumino:1983ew,Bertlmann:1996Anomalies,Nakahara:2003nw,Alvarez-Gaume:1983ihn}. 
We shall do so by writing every relevant ingredient in a fully explicit and self-contained manner. 

Throughout this section the spacetime is a smooth, oriented four-dimensional Lorentzian manifold with metric $g_{\mu\nu}$ of signature $(-,+,+,+)$. 
The Levi-Civita tensor is $\epsilon^{\mu\nu\rho\sigma}$ normalized by $\epsilon^{0123}=+1$. 
The Hodge dual is $\tilde F^{\mu\nu} \equiv \tfrac{1}{2}\,\epsilon^{\mu\nu\rho\sigma} F_{\rho\sigma}$ for any two-form $F_{\mu\nu}$. 
The vierbein is $e_\mu{}^a$ with spin connection $\omega_\mu{}^{ab}$ and curvature two-form $\mathcal R^{ab} \equiv \tfrac{1}{2} R^{ab}{}_{\mu\nu}\,dx^\mu\wedge dx^\nu$. 

The gauge group $G$ is compact with Lie algebra $\mathfrak g$ spanned by Hermitian generators $T^a$ satisfying $[T^a,T^b]=i f^{abc} T^c$ and $\mathrm{tr}_R(T^a T^b)=\tfrac{1}{2}\,\delta^{ab}$ in any representation $R$. 
The completely symmetric invariant is $d^{abc}_R \equiv \tfrac{1}{2}\,\mathrm{tr}_R\!\big(T^a\{T^b,T^c\}\big)$ and reduces to $d^{abc}$ when $R$ is the defining representation. 
The gauge potential is $A_\mu \equiv A_\mu^a T^a$. 
Its field strength is 
\begin{equation}
    F_{\mu\nu} \equiv \partial_\mu A_\nu-\partial_\nu A_\mu - i[A_\mu,A_\nu],
\end{equation}
and the covariant derivative acting on a field in representation $R$ is 
\begin{equation}
    D_\mu \equiv \partial_\mu - i A_\mu + \tfrac{1}{4}\,\omega_\mu{}^{ab}\,\gamma_{ab}
\end{equation}
with $\gamma_{ab}\equiv \tfrac{1}{2}[\gamma_a,\gamma_b]$. 
We use $\gamma_5 \equiv i \gamma^0\gamma^1\gamma^2\gamma^3$ and the chiral projectors $P_{L,R} \equiv \tfrac{1}{2}(1\mp \gamma_5)$. 

For a set of left-handed Weyl fermions $\psi_i$ in representations $R_i$ the classical action in a fixed background $(A_\mu,\omega_\mu{}^{ab},g_{\mu\nu})$ is
\begin{equation}
\label{eq:Weyl-action}
S[\psi,\bar\psi,A,\omega,g] \;=\; \sum_i \int d^4x\,\sqrt{-g}\;\bar\psi_i\, i\slashed{D}\,P_L\,\psi_i, 
\qquad \slashed{D}\equiv \gamma^\mu D_\mu,
\end{equation}
and the corresponding generating functional is $Z[A,\omega,g]\equiv \int \prod_i \mathcal D\psi_i\mathcal D\bar\psi_i \,\exp\{-S\}$, $W\equiv -i\log Z$. 

The variation of $W$ under an infinitesimal gauge transformation with parameter $\epsilon(x)=\epsilon^a(x) T^a$ acting as $\delta_\epsilon A_\mu = D_\mu \epsilon \equiv \partial_\mu \epsilon - i[A_\mu,\epsilon]$ and $\delta_\epsilon \psi = i\epsilon\,\psi$ is given by two contributions. 
There is a classical piece produced by the variation of the action and a quantum piece produced by the Jacobian of the fermion path-integral measure, namely
\begin{equation}
\label{eq:deltaW-split}
\delta_\epsilon W \;=\; \int d^4x\,\sqrt{-g}\;\epsilon^a(x)\, D_\mu \big\langle J^{\mu}_a(x)\big\rangle_{\!W} \;-\; i\log J[\epsilon;A,\omega,g],
\quad J^\mu_a(x) \equiv \sum_i \bar\psi_i \gamma^\mu T^a P_L \psi_i,
\end{equation}
where $\langle\cdots\rangle_W$ denotes the connected one-point function in the presence of sources. 
The anomaly is precisely the failure of the Jacobian to equal unity. 
It can be evaluated by the Fujikawa method by expanding the fields in a complete orthonormal basis of eigenmodes of the Dirac operator and regulating the trace with a gauge- and diffeomorphism-covariant heat kernel~\cite{Fujikawa:1980eg,Vassilevich:2003xt,Bilal:2008qx,Harvey:2005it}. 

Writing $\psi(x)=\sum_n a_n \varphi_n(x)$ with $i\slashed{D}\,\varphi_n=\lambda_n \varphi_n$ and the formal measure $\prod_n da_n d\bar a_n$, the transformation $\psi\mapsto\psi'=(1+i\epsilon) \psi$ induces $da_n \mapsto da_n + i\sum_m (\epsilon)_{nm}\,da_m$. 
The Jacobian is
\begin{equation}
\label{eq:Jacobian}
\log J[\epsilon;A,\omega,g] \;=\; -2i \lim_{M\to\infty}\sum_i \int d^4x\,\sqrt{-g}\;\mathrm{tr}_{R_i}\!\Big(\epsilon(x) P_L \,\langle x|\, e^{-(i\slashed{D})^2/M^2}\,|x\rangle\Big),
\end{equation}
where the factor of $2$ arises because $\bar\psi$ and $\psi$ transform with opposite phases and $P_L^2=P_L$. 

Using 
\begin{equation}
  (i\slashed{D})^2 = -D_\mu D^\mu + \tfrac{i}{4}[\gamma^\mu,\gamma^\nu]F_{\mu\nu} + \tfrac{1}{8}\,[\gamma^\mu,\gamma^\nu] R_{\mu\nu\rho\sigma}\Sigma^{\rho\sigma}\:\:\text{with}\:\: \Sigma^{\rho\sigma}\equiv \tfrac{i}{4}[\gamma^\rho,\gamma^\sigma]  
\end{equation}
and the Schwinger-DeWitt heat-kernel expansion, one finds that only the $a_2$ Seeley coefficient contributes after tracing with $P_L=\tfrac{1}{2}(1-\gamma_5)$~\cite{Vassilevich:2003xt,Bertlmann:1996Anomalies}. 
The $\gamma_5$-odd piece yields the gauge, gravitational, and mixed anomalies~\cite{Adler:1969gk,Bell:1969ts,Bardeen:1969md,Alvarez-Gaume:1983ihn,Zumino:1983ew}. 
After a standard computation that keeps track of all symmetry factors and tensor contractions, one obtains
\begin{equation}
\label{eq:consistent-anomaly-local}
\mathcal A^a(x) \;\equiv\; \frac{\delta_\epsilon W}{\delta \epsilon^a(x)}\bigg|_{\epsilon=0}
\;=\; \sum_i \left[\frac{1}{24\pi^2}\, d^{abc}_{R_i}\, \tfrac{1}{2}\,\epsilon^{\mu\nu\rho\sigma}\,F^b_{\mu\nu} F^c_{\rho\sigma} \;+\; \frac{q_{R_i}}{384\pi^2}\,\epsilon^{\mu\nu\rho\sigma} R^{\alpha}{}_{\beta\mu\nu} R^{\beta}{}_{\alpha\rho\sigma}\,\delta^{a0}\right],
\end{equation}
where $q_{R_i}$ is the $U(1)$ charge if an abelian factor is present (we take the generator $T^0 \equiv \mathbf 1$ for the $U(1)$) and vanishes for purely non-Abelian simple factors. 
The first term is the purely non-Abelian consistent gauge anomaly. 
The second is the mixed $U(1)$-gravitational consistent anomaly. 
There is no pure gravitational diffeomorphism anomaly in four dimensions, in accord with the general $4k{+}2$-dimensional criterion~\cite{Alvarez-Gaume:1983ihn}. 

Substituting \eqref{eq:consistent-anomaly-local} into \eqref{eq:deltaW-split} yields the anomalous Ward-Takahashi identity
\begin{equation}
\label{eq:WTI-anomalous}
D_\mu \big\langle J^\mu_a\big\rangle_{\!W} \;=\; \mathcal A_a \;\equiv\; \sum_i \left[\frac{1}{24\pi^2}\, d^{abc}_{R_i}\, \tfrac{1}{2}\,\epsilon^{\mu\nu\rho\sigma}\,F^b_{\mu\nu} F^c_{\rho\sigma} \;+\; \frac{q_{R_i}}{384\pi^2}\,\epsilon^{\mu\nu\rho\sigma} R^{\alpha}{}_{\beta\mu\nu} R^{\beta}{}_{\alpha\rho\sigma}\,\delta_{a0}\right],
\end{equation}
which is the local form of the consistent anomaly~\cite{Bardeen:1984pm,Bertlmann:1996Anomalies,Bilal:2008qx,Harvey:2005it}. 

For a Dirac fermion with vector gauge coupling the gauge variation of the vector current cancels between the two chiralities, while the axial current acquires the Adler-Bell-Jackiw divergence 
\begin{equation}
\partial_\mu J_5^\mu = \tfrac{1}{16\pi^2}\,\mathrm{tr}(F_{\mu\nu}\tilde F^{\mu\nu}) + \tfrac{1}{384\pi^2}\, R_{\mu\nu\rho\sigma}\tilde R^{\mu\nu\rho\sigma},
\end{equation}
reproducing the coefficient $1/(16\pi^2)$ and the gravitational contribution $1/(384\pi^2)$ as obtained originally from triangle diagrams~\cite{Adler:1969gk,Bell:1969ts,Bardeen:1969md} and from the measure analysis~\cite{Fujikawa:1980eg}. 

The anomalies in \eqref{eq:WTI-anomalous} are called ``consistent'' because they obey the Wess-Zumino consistency conditions, which express the integrability of the anomalous Ward identities under successive symmetry transformations \cite{Wess:1971yu}. Writing $\delta_{\epsilon} W = \int d^4x\,\sqrt{-g}\;\epsilon^a \mathcal A_a$ and using the Lie-algebraic composition $\epsilon_3^a = f^{abc}\epsilon_1^b\epsilon_2^c$, the condition reads
\begin{equation}
\label{eq:WZ-condition}
\delta_{\epsilon_1}\!\left(\int \epsilon_2^a \mathcal A_a\right) \;-\; \delta_{\epsilon_2}\!\left(\int \epsilon_1^a \mathcal A_a\right) \;=\; \int f^{abc} \epsilon_1^b \epsilon_2^c\, \mathcal A_a,
\end{equation}
and it is satisfied identically by \eqref{eq:consistent-anomaly-local} because the right-hand side is generated by the variation of the Chern-Simons descendant that underlies the consistent anomaly. The covariant form of the anomaly, in which the right-hand side is a gauge-covariant local polynomial but does not satisfy \eqref{eq:WZ-condition} by itself, is obtained by redefining the current by a Bardeen-Zumino functional $K^\mu_a$ so that $J^\mu_{a,\mathrm{cov}} \equiv J^\mu_{a,\mathrm{cons}} + K^\mu_a$ transforms covariantly \cite{Bardeen:1984pm}. In differential-form notation with $A=A_\mu dx^\mu$ and $F=dA-iA\wedge A$ and with the trace $\mathrm{tr}_R$ understood, the Bardeen-Zumino current can be written explicitly as
\begin{equation}
\label{eq:BZ-current}
K^\mu_a \;=\; \frac{1}{24\pi^2}\,\epsilon^{\mu\nu\rho\sigma}\,\mathrm{tr}_R\!\left[T^a\left(A_\nu\partial_\rho A_\sigma - \frac{i}{2} A_\nu A_\rho A_\sigma\right)\right],
\end{equation}
which yields the covariant nonconservation law
\begin{equation}
\label{eq:covariant-anomaly}
D_\mu J^\mu_{a,\mathrm{cov}} \;=\; \frac{1}{32\pi^2}\,\epsilon^{\mu\nu\rho\sigma}\,\mathrm{tr}_R\!\big(T^a F_{\mu\nu} F_{\rho\sigma}\big)
\end{equation}
for each left-handed Weyl fermion, equivalent to \eqref{eq:WTI-anomalous} after the addition of \eqref{eq:BZ-current} and consistent with the Stora-Zumino descent formalism \cite{Zumino:1983ew,Bardeen:1984pm}. For a single left-handed fermion of $U(1)$ charge $q$ one recovers from \eqref{eq:WTI-anomalous} the abelian consistent anomaly 
\begin{equation}
\partial_\mu J^\mu_{\mathrm{cons}} = \tfrac{q^3}{96\pi^2}\,F_{\mu\nu}\tilde F^{\mu\nu}
\end{equation}
and from \eqref{eq:covariant-anomaly} the abelian covariant anomaly 
\begin{equation}
\partial_\mu J^\mu_{\mathrm{cov}} = \tfrac{q^3}{32\pi^2}\,F_{\mu\nu}\tilde F^{\mu\nu},
\end{equation}
together with the mixed $U(1)$-gravitational consistent anomaly 
\begin{equation}
\partial_\mu J^\mu_{\mathrm{cons}} = \tfrac{q}{384\pi^2}\,R_{\mu\nu\rho\sigma}\tilde R^{\mu\nu\rho\sigma},
\end{equation}
all with their full numerical factors as required by the underlying triangle diagrams and the heat-kernel computation \cite{Bardeen:1969md,Alvarez-Gaume:1983ihn,Fujikawa:1980eg}. 

The geometric and topological origin of these expressions is encoded by the anomaly polynomial and the descent relations. For $2n{+}2=6$ the anomaly polynomial for a collection of left-handed Weyl fermions is the $6$-form
\begin{equation}
\label{eq:anomaly-polynomial}
I_6 \;=\; \sum_i \left[\frac{1}{6}\,\mathrm{tr}_{R_i}\!\left(\frac{iF}{2\pi}\right)^3 \;-\; \frac{1}{24}\,p_1(T)\,\mathrm{tr}_{R_i}\!\left(\frac{iF}{2\pi}\right)\right],
\quad p_1(T) \equiv -\,\frac{1}{2}\,\frac{1}{(2\pi)^2}\,\mathrm{tr}\big(\mathcal R\wedge \mathcal R\big),
\end{equation}
where $p_1(T)$ is the first Pontryagin class of the tangent bundle and the trace in $p_1$ is in the vector representation of $SO(3,1)$ with $\mathrm{tr}(\mathcal R\wedge \mathcal R)=\mathcal R^{ab}\wedge \mathcal R^{ba}$. The derivation of \eqref{eq:anomaly-polynomial} follows from the Atiyah-Singer index theorem applied to the chiral Dirac operator, 
\begin{equation}
I_{2n+2} = \big[\widehat A(\mathcal R)\,\mathrm{ch}(iF/2\pi)\big]_{2n+2}
\end{equation}
with $\widehat A(\mathcal R)=1-\tfrac{1}{24}p_1(T)+\cdots$ and $\mathrm{ch}(iF/2\pi)=\mathrm{tr}\,\exp(iF/2\pi)$, and it therefore computes the index density that governs the net spectral flow under adiabatic deformations \cite{Atiyah:1968mp,Alvarez-Gaume:1983ihn}. By definition there exists a $5$-form Chern-Simons transgressor $I_5^{(0)}$ such that $d I_5^{(0)}=I_6$. For the purely gauge part one may write
\begin{multline}
\label{eq:CS-five}
I_5^{(0)}(A) \;=\; \sum_i \frac{1}{(2\pi)^3}\,\frac{i^3}{3!}\,\mathrm{tr}_{R_i}\!\left(A\,dA\,dA \;+\; \frac{3}{2}\,A^3 dA \;+\; \frac{3}{5}\,A^5\right),
\\ d\,\mathrm{tr}\!\left(A\,dA\,dA + \frac{3}{2}\,A^3 dA + \frac{3}{5}\,A^5\right) \;=\; \mathrm{tr}(F^3),
\end{multline}
and an analogous Lorentz-Chern-Simons form $I_5^{(0)}(\omega)$ with $\mathrm{tr}(\mathcal R^3)$ replaced by $p_1(T)\,\mathrm{tr}(iF/2\pi)$. Under a gauge variation $\delta_\epsilon A = D\epsilon$ one finds $\delta_\epsilon I_5^{(0)} = d I_4^{(1)}(\epsilon,A)$ with
\begin{equation}
\label{eq:I41}
I_4^{(1)}(\epsilon,A) \;=\; \sum_i \frac{1}{(2\pi)^3}\,\frac{i^3}{2}\,\mathrm{tr}_{R_i}\!\left(\epsilon\, d\big(A\,dA + \tfrac{1}{2}\,A^3\big)\right),
\end{equation}
and the consistent anomaly is obtained by the descent formula
\begin{equation}
\label{eq:descent-anomaly}
\delta_\epsilon W \;=\; 2\pi i \int_{M_4} I_4^{(1)}(\epsilon,A) \;=\; \int d^4x\,\sqrt{-g}\;\epsilon^a(x)\,\mathcal A_a(x),
\end{equation}
which reproduces \eqref{eq:consistent-anomaly-local}. The Wess-Zumino condition \eqref{eq:WZ-condition} follows from the nilpotency of the BRST differential acting on the descent tower and is, therefore, equivalent to the statement that $I_6$ is a closed characteristic class in de Rham cohomology \cite{Wess:1971yu,Zumino:1983ew}. The inflow mechanism \cite{Callan:1984sa} realizes \eqref{eq:descent-anomaly} as the boundary variation of a five-dimensional topological action $S_{\mathrm{inflow}} = 2\pi i \int_{M_5} I_5^{(0)}$. The bulk variation $\delta S_{\mathrm{inflow}} = - 2\pi i \int_{M_5} d I_4^{(1)} = - 2\pi i \int_{\partial M_5} I_4^{(1)}$ cancels the boundary anomaly, providing a geometric interpretation of anomaly inflow and of the quantization of coefficients in $I_6$ \cite{Callan:1984sa,Alvarez-Gaume:1983ihn}. 

The cancellation of local gauge anomalies is a necessary condition for the consistency of a four-dimensional quantum gauge theory. In practice one demands that the consistent anomaly \eqref{eq:WTI-anomalous} vanishes for each gauged factor. For a non-Abelian simple factor $G$, this implies $\sum_i d^{abc}_{R_i}=0$ for all $a,b,c$, while for an abelian $U(1)$ one must have $\sum_i q_{R_i}^3=0$ for the cubic anomaly and $\sum_i q_{R_i}=0$ for the mixed $U(1)$-gravitational anomaly, as well as $\sum_i q_{R_i}\,T_2(R_i)=0$ for each mixed $U(1)$-$G^2$ anomaly where $T_2(R)\equiv \mathrm{tr}_R(T^a T^a)$ is the Dynkin index \cite{Bardeen:1969md}. A fully explicit check can be given for a single generation of the Standard Model written solely in terms of left-handed Weyl fermions, namely 
\begin{equation}
\begin{aligned}
    Q_L\sim (\mathbf 3,\mathbf 2)_{1/6}\\
    u^c\sim (\bar{\mathbf 3},\mathbf 1)_{-2/3}\\
    d^c\sim (\bar{\mathbf 3},\mathbf 1)_{1/3}\\
    L_L\sim (\mathbf 1,\mathbf 2)_{-1/2}\\
    e^c\sim (\mathbf 1,\mathbf 1)_{+1},
\end{aligned}
\end{equation}
where the subscripts denote hypercharge $Y$. The purely non-Abelian $[SU(3)_c]^3$ anomaly cancels because the cubic index is proportional to the difference between fundamentals and anti-fundamentals counted with multiplicity of $SU(2)$ components, giving $2\times A(\mathbf 3) - A(\bar{\mathbf 3}) - A(\bar{\mathbf 3})=0$ with $A(\mathbf 3)=-A(\bar{\mathbf 3})$ \cite{Bardeen:1969md}. The $[SU(2)_L]^3$ anomaly vanishes because $SU(2)$ is pseudoreal and $d^{abc}=0$. The mixed $U(1)_Y$-$[SU(2)_L]^2$ anomaly is proportional to $\sum_i Y_i T_2(R_i^{SU(2)})$ with $T_2(\mathbf 2)=\tfrac{1}{2}$, and including the color multiplicity for $Q_L$ one finds
\begin{equation}
3\cdot \tfrac{1}{6}\cdot\tfrac{1}{2} + 1\cdot (-\tfrac{1}{2})\cdot\tfrac{1}{2} = \tfrac{1}{4}-\tfrac{1}{4}=0.
\end{equation}
The mixed $U(1)_Y$-$[SU(3)_c]^2$ anomaly is proportional to $\sum_i Y_i T_2(R_i^{SU(3)})$ with $T_2(\mathbf 3)=T_2(\bar{\mathbf 3})=\tfrac{1}{2}$ and multiplicity equal to the number of $SU(2)$ components, giving 
\begin{equation}
2\cdot \tfrac{1}{6}\cdot\tfrac{1}{2} + 1\cdot (-\tfrac{2}{3})\cdot\tfrac{1}{2} + 1\cdot (\tfrac{1}{3})\cdot\tfrac{1}{2} = \tfrac{1}{6}-\tfrac{1}{3}+\tfrac{1}{6}=0. 
\end{equation}
The cubic $[U(1)_Y]^3$ anomaly, computed by summing $Y^3$ over all left-handed Weyl fields with multiplicities, yields 
\begin{equation}
6\cdot (\tfrac{1}{6})^3 + 3\cdot (-\tfrac{2}{3})^3 + 3\cdot (\tfrac{1}{3})^3 + 2\cdot (-\tfrac{1}{2})^3 + 1\cdot (1)^3 = \tfrac{1}{36}-\tfrac{8}{9}+\tfrac{1}{9}-\tfrac{1}{4}+1=0,
\end{equation}
while the mixed $U(1)_Y$-gravitational anomaly, proportional to $\sum_i Y_i$, gives 
\begin{equation}
6\cdot \tfrac{1}{6} + 3\cdot (-\tfrac{2}{3}) + 3\cdot (\tfrac{1}{3}) + 2\cdot (-\tfrac{1}{2}) + 1\cdot (1)=0.
\end{equation} 
These equalities are exact and confirm that all local anomalies cancel for one generation; with three generations the cancellations persist generation by generation \cite{Bardeen:1969md}. As a second illustrative example, a chiral supersymmetric grand-unified theory with gauge group $SU(5)$ and one family of chiral multiplets in $\mathbf{10}\oplus \bar{\mathbf 5}$ is anomaly free because the cubic index satisfies 
\begin{equation}
d^{abc}_{\mathbf{10}} + d^{abc}_{\bar{\mathbf 5}} = \big(A(\mathbf{10}) - A(\mathbf 5)\big)\,d^{abc}_{\mathrm{fund}} = 0
\end{equation}
with $A(\mathbf{10})=A(\mathbf 5)$ in $SU(5)$, and the supersymmetric partners do not modify the chiral gauge anomaly since gaugini are in real representations and chiral multiplet fermions carry exactly the same gauge quantum numbers as their scalar partners \cite{Bardeen:1969md,Zumino:1983ew}. 

The foregoing analysis, grounded in the non-invariance of the functional measure \cite{Fujikawa:1980eg}, the anomalous Ward-Takahashi identities \cite{Adler:1969gk,Bardeen:1969md}, the Wess-Zumino consistency conditions \cite{Wess:1971yu}, and the differential-geometric descent tied to index theory and anomaly inflow \cite{Atiyah:1968mp,Zumino:1983ew,Callan:1984sa,Alvarez-Gaume:1983ihn}, provides a complete and self-contained classification of admissible local anomalies in four-dimensional quantum field theory. In particular, in modern language of generalized global symmetries, the same structure controls mixed 't~Hooft anomalies between ordinary $0$-form symmetries and higher-form symmetries, which are likewise encoded by higher-dimensional inflow actions whose boundary variations reproduce the corresponding consistent anomalies and whose quantization conditions enforce the integrality constraints necessary for gauging discrete subgroups \cite{Gaiotto:2014kfa}. The explicit coefficients in \eqref{eq:WTI-anomalous}-\eqref{eq:covariant-anomaly} ensure that once the anomaly cancellation conditions are satisfied, the Ward identities hold exactly in the quantum theory, while any failure of these conditions manifests as a precise, cohomologically nontrivial obstruction that cannot be removed by local counterterms without violating either locality or gauge invariance; conversely, when the anomaly is cohomologically trivial, the Bardeen-Zumino redefinition \eqref{eq:BZ-current} restores manifest covariance and resolves scheme dependence without altering physical observables \cite{Bardeen:1984pm,Zumino:1983ew}.

\section{Phenomenology}
The most immediate phenomenological consequence of the D$\theta$P mechanism is an extremely suppressed neutron electric dipole moment (EDM)~\cite{Pospelov:2005pr,Engel:2013lsa,Chupp:2017rkp,Yamanaka:2016umw}. 
In QCD, a nonzero $\bar{\theta}$ induces CP-violating interactions that give rise to an EDM for the neutron. 
At leading order in chiral perturbation theory, one finds that the neutron EDM is linear in $\bar{\theta}$ for small angles~\cite{Crewther:1979pi,Bsaisou:2014zwa,deVries:2015una}. 
In particular, $\bar{\theta}$ can be rotated into the phases of the quark mass matrix, inducing a CP-odd pion-nucleon coupling and other interactions~\cite{Kim:2008hd}. 
As first estimated by Baluni and Crewther \emph{et al.}\ in 1979, the neutron EDM from a $\bar{\theta}$ term is on the order of\footnote{This range accounts for hadronic uncertainties in the nucleon matrix elements and coupling constants. Modern analyses including lattice QCD input confirm the same order of magnitude~\cite{Pospelov:2005pr,Bhattacharya:2015wna,Shindler:2021bcx,Engel:2013lsa,Chupp:2017rkp}.} 
$1$–$3\times10^{-16}\,\bar{\theta}\,e\,\text{cm}$~\cite{Baluni:1978rf,Crewther:1979pi}. 
More recent computations using chiral effective field theory, together with lattice determinations of nucleon mass splittings, have refined this to 
$d_n \approx (2.4\pm0.9)\times10^{-16}\,\bar{\theta}\,e\,\text{cm}$~\cite{deVries:2015una,Bsaisou:2014zwa}. 
This result can be understood as arising from a CP-odd pion-nucleon coupling $\bar g_{\pi NN}$ (notably, the coupling of the neutral pion to nucleons) induced by $\bar{\theta}$~\cite{deVries:2015una}. 
Writing the effective interaction as 
$\mathcal{L}_{\pi N} \supset \bar g_{\pi NN}\,\bar N i\gamma_5 (\tau^3) N\,\pi^0$ 
(which splits into $\bar g_{\pi^0 p p}\bar p i\gamma_5 p\,\pi^0$ and $\bar g_{\pi^0 n n}\bar n i\gamma_5 n\,\pi^0$ for proton and neutron), one finds $\bar{g}_{\pi NN}$ is proportional to the quark mass difference $m_d-m_u$ and to $\bar{\theta}$~\cite{deVries:2015una}. 

At leading order, $\bar{g}_{\pi NN}$ can be related to the strong part of the nucleon mass splitting $\delta m_N^{(\text{str})}= (m_n - m_p)_{\text{QCD}}$ and the pion decay constant $f_\pi$. 
One finds 
\begin{equation}
\bar g_{\pi NN}\approx -\frac{\delta m_N^{(\text{str})}}{2f_\pi}\,\frac{1-\epsilon^2}{2\epsilon}\,\bar{\theta},
\end{equation}
where $\epsilon=(m_d-m_u)/(m_d+m_u)$. 
Using the physical values $\delta m_N^{(\text{str})}\approx 2.5~\text{MeV}$ and $\epsilon\approx0.35$ gives $\bar{g}_{\pi NN}\approx 1.5\times10^{-2}\,\bar{\theta}$~\cite{deVries:2015una,FlavourLatticeAveragingGroupFLAG:2021npn}. 
This coupling, in turn, induces a neutron EDM through a one-loop pion graph (sometimes called the “TDL” mechanism)~\cite{Pospelov:2005pr,Bsaisou:2014zwa}. 
Summing the pion-loop contribution and possible short-distance pieces yields the stated $d_n\simeq 10^{-16}\bar{\theta}\,e\,\text{cm}$ scale~\cite{Pospelov:2005pr,Engel:2013lsa,Chupp:2017rkp}. 

We can equivalently derive this scaling by differentiating the QCD vacuum energy with respect to $\bar{\theta}$. 
A small $\bar{\theta}$ shifts the vacuum energy by $E(\bar{\theta}) \approx \frac{1}{2}\chi_t\,\bar{\theta}^2$ (for $\bar{\theta}$ near $0$), where $\chi_t = \partial^2 E/\partial \theta^2|_{\theta=0}$ is the topological susceptibility. 
This implies $\partial E/\partial \bar{\theta} = \chi_t \bar{\theta}$, which is related (by the Feynman-Hellmann theorem) to the expectation value of the topological charge $Q$ as $\langle Q \rangle_{\bar{\theta}}\approx -\chi_t \bar{\theta}$~\cite{DiVecchia:1980yfw,Shindler:2021bcx,Shindler:2015aqa}. 
In a nucleon state, this translates into a CP-odd nucleon matrix element and ultimately the neutron EDM. 
Indeed, using $\chi_t \approx (75~\text{MeV})^4$ for physical QCD with $m_{u,d,s}$ reproduces the above estimate $d_n\sim10^{-16}\bar{\theta}\,e\,\text{cm}$~\cite{Borsanyi:2015cka,Petreczky:2016vrs}. 
We note that if one light quark mass were zero, $\chi_t$ would vanish and $\bar{\theta}$ would be unobservable. 
This is the well-known fact that if, say, $m_u=0$, a $\theta$ angle can be rotated away~\cite{Baluni:1978rf,Crewther:1979pi}, but empirical quark mass ratios $\frac{m_u}{m_d}\approx0.5$, $\frac{m_s}{m_d}\approx20$ prevent such a cancellation~\cite{FlavourLatticeAveragingGroupFLAG:2021npn,ParticleDataGroup:2022pth}. 

Crucially, the D$\theta$P framework forces the effective $\bar{\theta}$ angle to lie in the interval $[-\pi/N,\pi/N]$, with $\bar{\theta}$ dynamically selected to be the closest possible value to zero. 
In other words, the physical vacuum is projected onto the branch with the smallest $|\bar{\theta}|$. 
Thus, instead of being an arbitrary parameter (potentially of order $1$), $\bar{\theta}$ becomes an effective output of the theory, bounded by $\pi/N$. 
At large $N$, this results in an enormous suppression of all $\bar{\theta}$-dependent observables~\cite{Pospelov:2005pr,Kim:2008hd}. 
The neutron EDM is therefore suppressed to
\begin{equation}
|d_n| \lesssim (1\text{ to }3)\times10^{-16}\,(\pi/N)\,e\cdot\text{cm}\,,
\label{eq:dn-suppression}
\end{equation}
using the range given above. 
For example, taking $N=10^{12}$ (as a demonstrative large value within reach of clockwork UV completions) gives $|\bar{\theta}|_{\rm eff}\le \pi/10^{12}\sim3\times10^{-12}$. 
This yields $d_n\sim10^{-27}$ to $10^{-28}$~$e$·cm. 
This is five to six orders of magnitude below the current experimental limit ($|d_n|<1.8\times10^{-26}$~$e$·cm, $90\%$~C.L.~\cite{Abel:2020pzs}) and is also well below the sensitivity of any near-future EDM searches~\cite{Yamanaka:2016umw,Graham:2015ouw}. 
In fact, to be certain that $d_n$ lies below $10^{-27}$~$e$·cm (one order of magnitude under the present bound), we require $|\bar{\theta}| \lesssim 10^{-11}$. 
The D$\theta$P bound $|\bar{\theta}|\le \pi/N$ then implies
\begin{equation}
N \gtrsim \frac{\pi}{10^{-11}} \sim 3\times10^{10},
\end{equation}
i.e.\ $N$ of order $10^{10}$ or higher. 
In Fig.~\ref{fig:EDMplot} we plot the predicted neutron EDM as a function of $N$ in the D$\theta$P scenario, compared to the current experimental upper limit. 
Even for $N$ in the ballpark of $10^9$ to $10^{10}$, the neutron EDM is safely below present bounds, and for the extremely large $N$ (sometimes attainable via clockwork exponentially large effective $N$) the value of $d_n$ becomes utterly negligible~\cite{DiLuzio:2020wdo}. 

Importantly, this suppression of $\bar{\theta}$ and $d_n$ is radiatively and gravitationally stable. 
The smallness of $\bar{\theta}$ is enforced by a discrete gauge symmetry rather than a global symmetry. 
This means that no operator in the theory can generate a $\bar{\theta}$ term in the low-energy Lagrangian unless it explicitly breaks the $Z_N$ invariance~\cite{Banks:2010zn,Harlow:2018tng}. 
Higher-dimensional operators (for example from Planck-scale physics) that might have induced a dangerous $\theta$ term are likewise forbidden by the discrete gauge symmetry, up to exponentially small tunneling effects (see below)~\cite{Holman:1992us,Kamionkowski:1992mf,Barr:1992qq}. 
In contrast to Peccei–Quinn axion models, where Planck-suppressed $U(1)_{\rm PQ}$-violating operators of dimension $d$ can reintroduce a $\theta$ of order $f_a^{d-4}/M_{\rm Pl}^{d-4}$, here any would-be $\theta$ shift is quantized and constrained by gauge invariance~\cite{Kim:2008hd,DiLuzio:2020wdo}. 
Thus the D$\theta$P mechanism solves the strong CP problem without leaving behind a “Theta loaf” of residual $\bar{\theta}$. The suppression of CP violation is as exact as the $Z_N$ symmetry itself~\cite{Banks:2010zn,Harlow:2018tng}. 
Therefore, the neutron EDM in this framework is an extremely small number, scaling as $d_n\propto1/N$ (up to hadronic uncertainties), and for $N\gg1$ it falls far below any foreseeable experimental sensitivity~\cite{Pospelov:2005pr,Engel:2013lsa,Yamanaka:2016umw}.

{\begin{figure}[t]
\centering
\begin{tikzpicture}
\begin{loglogaxis}[
width=0.8\textwidth,
xlabel={$N$ (discrete symmetry order)},
ylabel={$|d_n|$ [e$\cdot$cm]},
xmin=1, xmax=1e12,
ymin=1e-28, ymax=1e-15,
ytick={1e-28,1e-26,1e-24,1e-22,1e-20,1e-18,1e-16},
legend pos=south west,
legend cell align=left
]
\addplot[name path=lower, draw=none, domain=1:1e12, samples=201] {1e-16*pi/x};
\addplot[name path=upper, draw=none, domain=1:1e12, samples=201] {3e-16*pi/x};
\addplot[blue!20] fill between[of=upper and lower];
\addplot[blue!70!black, thick, domain=1:1e12, samples=201] {2.4e-16*pi/x};
\addlegendentry{D$\theta$P prediction (neutron EDM)};
\addplot[red!70!black, dash pattern=on 5pt off 3pt] coordinates {(1,1.8e-26) (1e12,1.8e-26)};
\addlegendentry{Current exp.~upper bound \cite{Abel:2020pzs}};
\end{loglogaxis}
\end{tikzpicture}
\caption{Predicted neutron EDM $|d_n|$ as a function of the discrete symmetry order $N$ in the D$\theta$P framework (blue band, reflecting the range $d_n\approx(1$--$3)\times10^{-16}\,\bar{\theta}\,e\,$cm). The red dashed line indicates the current experimental $90\%$~C.L.~limit $|d_n|<1.8\times10^{-26}$~$e$·cm~\cite{Abel:2020pzs}. For $N\gtrsim10^{10}$ the D$\theta$P prediction lies well below the current bound.}
\label{fig:EDMplot}
\end{figure}
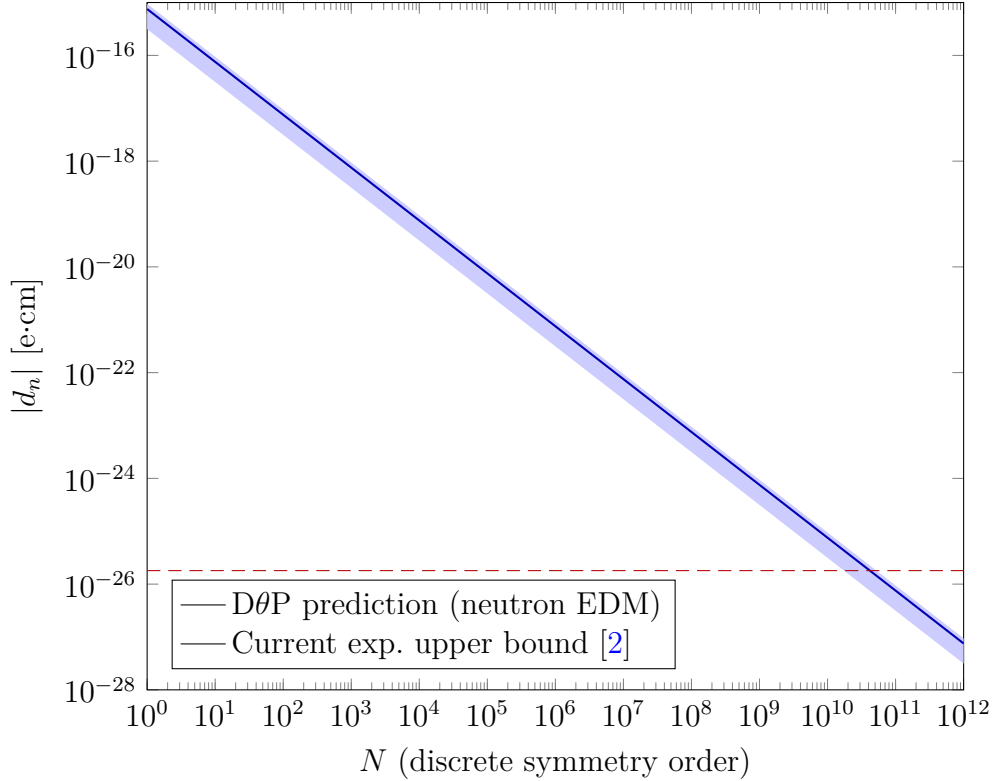}


\subsection{Absence of axion-like signatures.} 
An important distinction of the D$\theta$P solution is that it does not involve a light axion. The traditional Peccei-Quinn mechanism introduces a pseudo Nambu-Goldstone boson (the axion) which manifests as a new light particle with specific couplings to photons, nucleons, and leptons. Extensive experimental programs and astrophysical observations have placed strong limits on such an axion. In the D$\theta$P framework, however, $\bar{\theta}$ is fixed by a discrete gauge selection rather than by the vacuum expectation of a dynamical field. There is no continuous $U(1)$ symmetry being broken, and thus no associated Goldstone boson in the low-energy spectrum. Effectively, one can say the would-be axion has been ``gauged away'' or made heavy by the discrete coupling. Any would be massless mode is absent from the physical spectrum. As a result, all experimental and observational signals that would have been caused by a QCD axion are null by construction.

In particular, direct searches for axions via their two-photon coupling (e.g.\ cavity haloscope experiments like ADMX, HAYSTAC, CAPP, which look for axion-to-photon conversion in a magnetic field) will not find a QCD axion in this scenario, since no such particle exists. Similarly, solar axion searches (helioscopes such as CAST and IAXO that aim to detect axions produced in the Sun) will see nothing related to QCD axions. Stellar cooling constraints, which normally provide strong limits on light axions or axion-like particles coupling to electrons, photons, or nucleons, do not apply here. Since there is no light degree of freedom that can be efficiently produced in stellar interiors to carry away energy, stars cool in the standard manner (photon emission, neutrinos, etc.) without additional channels~\cite{Raffelt:2006cw}. The well-known bounds from supernova SN1987A on axion like particle emission (which require that any new light boson not carry off too much energy from the supernova core) are likewise automatically satisfied. In essence, the gauged $Z_N$ does not introduce any new light propagating states, so it evades all astrophysical limits that are based on novel light particles.

Cosmologically, the absence of an axion means there are no axion oscillations or isocurvature fluctuations in the early universe. Axion models are tightly constrained by cosmic microwave background (CMB) observations because a misaligned axion field would produce cold dark matter isocurvature perturbations unless the axion decay constant $f_a$ is relatively low or inflation occurs at low scales~\cite{Wantz:2009it}. In D$\theta$P, there is no misalignment angle to worry about, and no isocurvature perturbation is generated (the $Z_N$ is a gauge symmetry, not a spontaneous global symmetry). This relieves the model of one of the primary cosmological headaches of axion models. Furthermore, there is no axion domain-wall problem, in PQ axion models with discrete symmetry $Z_N$ in the axion potential, if the symmetry breaks after inflation, one generically gets a network of $N$ domain walls attached to axion strings, which can dominate the energy density (the ``domain wall problem''~\cite{Sikivie:1982qv}). In D$\theta$P, by contrast, the $\theta$-parameter is not a field undergoing spontaneous symmetry breaking, so the universe is never populated by different $\theta$ vacua separated by axion domain walls. Instead, as we discuss next, the discrete symmetry structure yields a different kind of relic, dynamical membrane-like objects, which, however, can be rendered cosmologically harmless.


\subsection{Topological relics and vacuum structure.}

Gauging the discrete $\theta$-shift has an interesting effect on QCD’s vacuum structure. 
In ordinary QCD (with a fixed $\theta$), if $\theta$ were dynamically variable (as in an axion model), one would have a multivalued potential with minima periodically spaced in $\theta$~\cite{Marsh:2015xka,DiLuzio:2020wdo}. 
In our case, the theory sums over $\theta$ images shifted by $2\pi/N$, which effectively means that the $\theta$ variable is periodic on a smaller circle of size $2\pi/N$. 
The vacuum structure becomes an envelope of $N$ branches, but only the branch with the smallest $|\bar{\theta}|$ is realized as the true vacuum. 
One can, however, imagine field configurations or excited states that correspond to being on a “neighboring branch”. 
For example, there can be an excited vacuum with effective angle $\bar{\theta}_{\rm eff} = 2\pi/N$ (instead of $0$). 
The energy difference between this first excited branch and the true vacuum is of order 
\begin{equation}
\Delta E \sim \tfrac{1}{2}\chi_t (2\pi/N)^2
\end{equation}
for small $2\pi/N$. 

In the four dimensional path integral, a configuration that interpolates between $\bar{\theta}=0$ and $\bar{\theta}=2\pi/N$ across space is realized by an object analogous to a domain wall. 
It is a sheet or membrane across which the discrete topological charge (Pontryagin index) jumps by one unit. 
Indeed, in a Hamiltonian picture, one can think of “large” instantons or fractional topological flux spreading over a two dimensional surface that separates two regions. 
On one side of the membrane the vacuum is in branch $m$, and on the other side it is in branch $m+1$. 
Because $\theta$-shifts by $2\pi/N$ are gauged, such a membrane is a permissible dynamical object. 
It carries a quantized topological flux and can end on a $Z_N$ cosmic string (a vortex in the topological sector)~\cite{Vilenkin:2000jqa,Kibble:1976sj}. 
These membranes are not topological domain walls arising from spontaneous symmetry breaking, since the $Z_N$ is not broken and is a true gauge symmetry. 
They are therefore not stabilized by standard topological arguments. 
They are more akin to solitonic branes connecting different vacua of a multivalued potential~\cite{Krauss:1988zc}. 

The tension $\sigma$ of these membranes is a parameter that depends on the ultraviolet completion of the topological sector. 
In an effective description, one can write an action term for an extended membrane spanning area $A$ as $S_{\rm mem}=\sigma A$, so $\sigma$ has units of energy per unit area. 
Because the membranes must carry one unit of QCD topological charge across them, one might naively expect $\sigma$ to scale with the QCD scale. 
A simple estimate gives $\sigma\sim N \Lambda_{\text{QCD}}^3$ on dimensional grounds (the factor $N$ could appear, for instance, if one considers that $N$ such membranes stacked would accomplish a $2\pi$ shift). 
However, $\sigma$ is not computed by QCD alone, it also involves the dynamics of the new topological $Z_N$ sector. 
In “discrete clockwork” ultraviolet constructions, one can achieve an exponentially large effective $N$ with only moderately large microscopic $N_i$, at the cost of introducing multiple coupled $Z_{N_i}$ sectors~\cite{Kaplan:2015fuy,Giudice:2016yja}. 
In such cases, the membrane tension in the effective theory could be much larger than the QCD scale (depending on heavy parameters and couplings in the ultraviolet), or vice versa, so we treat $\sigma$ as essentially a free parameter subject to some broad constraints. 

From a cosmological perspective, we must ensure that these membranes do not lead to any unacceptable relic abundances or domain wall like problems~\cite{Vilenkin:2000jqa,Kawasaki:2013iha,Hiramatsu:2012gg}. 
Several factors ensure that they do not. 
First, as mentioned, there is no stage in the early universe where a continuous symmetry is broken and different regions of space fall into different $\theta$-vacua. 
Instead, at high temperature $T\gg \Lambda_{\text{QCD}}$, the effective QCD $\theta$-potential is nearly flat (since the topological susceptibility $\chi_t(T)$ is heavily suppressed)~\cite{Borsanyi:2015cka,Petreczky:2016vrs}. 
This means $\bar{\theta}$ is undetermined (and unimportant) in the deconfined phase. 
As the universe cools through $T\sim \Lambda_{\text{QCD}}$, $\chi_t$ grows and the $\theta$-vacuum structure re-emerges. 
At that point, however, the $Z_N$ gauge dynamics ensure that the universe chooses the $\bar{\theta}$ branch closest to $0$ as the unique vacuum. 
There is no domain formation. 
The entire universe (within one horizon volume after the QCD epoch) will settle into the same $\bar{\theta}\approx0$ branch because that is the absolute minimum of energy and there is no degeneracy, unlike an axion $U(1)$ where degenerate vacua exist when the axion mass turns on~\cite{Sikivie:1982qv,Marsh:2015xka}. 
Therefore, one does not expect a network of membranes crisscrossing the universe. 
The true vacuum pervades space from the start of confinement. 
In fact, the situation is even safer than in Peccei–Quinn axion models that undergo the symmetry breaking after inflation. 
Those models produce axion strings and domain walls that need careful consideration~\cite{Sikivie:1982qv,Hiramatsu:2012gg,Kawasaki:2013iha}. 
In contrast, D$\theta$P produces neither of these in the early universe. 

What about thermal or quantum production of membrane defects. 
One could imagine that after confinement there is some probability to nucleate a small bubble of a higher $\bar{\theta}$ vacuum (for example $\bar{\theta}=2\pi/N$) within our $\bar{\theta}\approx0$ vacuum. 
Such a bubble would be bounded by a membrane of tension $\sigma$, and the bubble interior would have a higher vacuum energy by $\Delta E$. 
If $\Delta E>0$, which it is since $\bar{\theta}=0$ is the true minimum, then a bubble of the false vacuum has a pressure deficit and will tend to collapse. 
However, it could be produced by quantum tunneling or thermal excitation if it is small enough. 

For zero temperature quantum tunneling in four dimensions, the nucleation rate per unit volume for a bubble of critical size $R_c$ is roughly 
$\Gamma/V \sim A\, e^{-B}$, with $B$ the Euclidean action of the 
$O(4)$-symmetric bounce solution~\cite{Coleman:1977py,Callan:1984sa}. 
In the thin wall approximation one finds
\begin{equation}
B_{4\mathrm{D}} \simeq \frac{27\pi^2 \sigma^4}{2\,(\Delta E)^3}\,,
\label{eq:bounce4D}
\end{equation}
where $\sigma$ is the membrane tension (surface energy per unit area) and 
$\Delta E$ is the energy density difference between the two vacua~\cite{Coleman:1977py}. 
Parametrically, $B \sim \sigma^4/\epsilon^3$ where 
$\epsilon \equiv \Delta E$. 
In our case, for large $N$,
\begin{equation}
\epsilon \approx \frac{1}{2}\,\chi_t \left(\frac{2\pi}{N}\right)^2
\;\sim\; \frac{(2\pi)^2}{2N^2}\,\chi_t \,.
\end{equation}
Using $\sigma \sim N\,\Lambda_{\rm QCD}^3$ and 
$\chi_t \sim \Lambda_{\rm QCD}^4$, we obtain
\begin{equation}
B \sim N^{10}
\end{equation}
up to numerical $2\pi$ factors. 
Even for moderate $N$ this exponent is gigantic. 
Taking order one prefactors, $N \sim 10^2$ already gives 
$B \sim 10^{20}$, while $N \sim 10^{10}$ (our preferred regime) would
correspond to $B \sim 10^{100}$. 
The associated tunneling rate is thus 
utterly negligible, and any membrane facilitated transitions between
$\theta$-branches are enormously suppressed. 
The $\bar\theta = 0$ vacuum
is essentially stable for all practical purposes, and higher $\bar\theta$
regions will not materialize via tunneling. 

One could also consider if thermal fluctuations at the QCD phase transition could excite some regions into an excited branch. 
However, at $T\sim T_c$ the free energy cost $\Delta F(T)$ for $\bar{\theta}\neq0$ is already appreciable, and it grows from $0$ to its $T=0$ value as confinement sets in. 
Lattice results show that $\chi_t(T)$ rises sharply below $T_c$~\cite{Petreczky:2016vrs,Borsanyi:2015cka}. 
Therefore any region with a nonzero $\bar{\theta}$ finds itself at a free energy disadvantage and will relax to $\bar{\theta}\approx0$. 
In a sense, the universe undergoes a form of “vacuum selection” at the QCD crossover. 
The initially flat $\theta$ potential at high $T$ tilts toward $\theta=0$ at low $T$, and the system rolls into the nearest minimum (which, due to the gauged $Z_N$, is unique and at $\theta\approx0$). 
This is in line with the statement that there is no spontaneous breaking of $Z_N$. 
The universe need not choose between degenerate vacua, there is only one vacuum once the coupling is turned on. 

In summary, the D$\theta$P mechanism avoids the classic axion domain wall problem and does not lead to a dangerous abundance of relic membranes or other topological defects. 
The membranes that do exist in principle are either not formed at all on cosmic scales, or if formed, they are either bounded by strings or collapse quickly~\cite{Vilenkin:2000jqa,Kibble:1976sj}. 
If $\sigma$ is very large, any would-be network of membranes would have extremely high tension and would quickly annihilate, or would be inflated away if inflation occurred at a later stage~\cite{Linde:1981zj}. 
If inflation happens before the QCD scale, then even the above considerations are moot, as $\bar{\theta}$ will be homogenized to the principal branch in our observable patch from the start. 
The key point is that because the $Z_N$ is a gauge symmetry and not a global symmetry, there is no enduring degeneracy of vacua. 
The universe has a unique ground state (up to irrelevant gauge copies). 
Thus, D$\theta$P is cosmologically safe, it solves the strong CP problem without introducing new cosmological problems, and in fact it ameliorates some problems (no isocurvature, no axion strings and no axion domain walls) that afflict the axion approach~\cite{Sikivie:2006ni,Marsh:2015xka}. 

Here, we note that the above phenomenological and cosmological features distinguish the D$\theta$P scenario from typical axion solutions. 
The neutron EDM is ultra suppressed and stays well below experimental bounds, with the suppression protected by a discrete gauge symmetry that is robust against radiative and gravity induced corrections. 
No light axion is present, so all direct and indirect searches for axions will return null results specific to the strong CP sector. 
The mechanism leaves weak CP violation untouched, and any tiny CP effects from the Standard Model (like CKM loop induced EDMs) remain, although they are far too small to detect. 
The theory predicts new membrane like objects instead of global axion domain walls, but these objects can be made harmless by appropriate parameter choices, for example large tension and thus exponentially suppressed nucleation~\cite{Coleman:1977py}. 
In essence, D$\theta$P provides a “cassette” of new topological physics that solves the strong CP puzzle in a ultraviolet protected way, while mimicking the success of axion models (nearly zero $\bar{\theta}$) without the need for an axion itself. 
Upcoming EDM experiments, which will probe down to the $10^{-27}$ to $10^{-28}$~$e$·cm level, are unlikely to find a neutron EDM signal if the D$\theta$P mechanism is at work (unless $N$ is unnaturally small)~\cite{Chupp:2017rkp}. 
This is a clear, testable distinction: any positive detection of a neutron EDM near the current sensitivity would falsify the D$\theta$P scenario or force $N$ to be only modest (which in turn could conflict with the theoretical requirement of anomaly freedom, see Section~\ref{sec5}). 
On the other hand, the continued absence of an EDM signal, while consistent with D$\theta$P, is not a smoking gun by itself. 
Many solutions to the strong CP problem predict the same. 
Rather, the unique predictions of D$\theta$P lie in its cosmological and topological implications, which we explore further in the next section, and in possible lattice signatures (Section~\ref{sec:diagnostics}) that could reveal the underlying $N$-branch structure of the QCD vacuum.

\section{Cosmology}
\label{sec:cosmo}

The expansion rate in a radiation-dominated universe is governed by the Friedmann equation, 
\begin{equation}
H^2 = \frac{8\pi\,\rho(T)}{3\,M_{\rm Pl}^2}\,,
\end{equation}
where $M_{\rm Pl}$ is the Planck mass and $\rho(T)$ is the energy density at temperature $T$~\cite{Kolb:1990vq,Dodelson:2003ft}. 
For a relativistic plasma with $\rho(T)=\frac{\pi^2}{30}\,g_*(T)\,T^4$ (effective degrees of freedom $g_*$), this gives 
\begin{equation}
H(T) = \sqrt{\frac{8\pi}{3M_{\rm Pl}^2}\,\frac{\pi^2}{30}\,g_*(T)\,T^4} 
= \frac{\pi}{\sqrt{90}}\;\sqrt{g_*(T)}\;\frac{T^2}{M_{\rm Pl}}\,.
\end{equation} 
Using $H\simeq \frac{1}{2t}$ during radiation domination, one finds the time-temperature relation 
\begin{equation}
t(T) \;\approx\; \frac{1}{2H(T)} \;=\; \frac{1}{2}\sqrt{\frac{3\,M_{\rm Pl}^2}{8\pi\,\rho(T)}} 
\;\approx\; 0.3\;\frac{M_{\rm Pl}}{\sqrt{g_*(T)}\,T^2}\,,
\end{equation}
so, for example, $T\sim1~{\rm MeV}$ corresponds to $t\sim1~\text{s}$~\cite{Kolb:1990vq,ParticleDataGroup:2022pth}. 

At high temperatures $T\gg\Lambda_{\rm QCD}$ (deconfined phase), QCD instanton fluctuations are rare and the topological susceptibility $\chi_t(T)$ is strongly suppressed. 
Lattice results indicate that $\chi_t$ falls as a power law at $T\gg T_c$ (with $T_c$ the QCD crossover temperature), 
\begin{equation}
\chi_t(T)\;\sim\;\chi_t(0)\,\Big(\frac{T}{T_c}\Big)^{-b}\,,
\end{equation} 
with an exponent $b$ of order $8$ (consistent with dilute-instanton gas and lattice simulations)~\cite{Borsanyi:2015cka,Petreczky:2016vrs,Borsanyi2016ksw}. 
As the universe cools through $T_c\sim\Lambda_{\rm QCD}$, $\chi_t(T)$ grows rapidly, approaching its $T=0$ value $\chi_t(0)\approx(75~\text{MeV})^4$~\cite{Vicari:2008jw}. 

Gauging the $Z_N$ subgroup of the $2\pi$ $\theta$-shift symmetry enforces a selection rule on topological charges: 
\begin{equation}
Q+\kappa_G Q_G = N\,K\,,
\end{equation} 
for some $K\in\mathbb{Z}$.  
In other words, only those field configurations survive in the path integral whose combined QCD topological charge $Q$ plus $\kappa_G$ times the charge $Q_G$ of the new $Z_N$ topological sector is an integer multiple of $N$~\cite{Cordova:2019uob,Hsin:2020nts}. 
This condition implies that a shift $\bar{\theta}\to\bar{\theta}+2\pi/N$ (accompanied by an appropriate discrete gauge transformation on $Q_G$) is a gauge symmetry of the theory, identifying $\bar{\theta}$ with $\bar{\theta}+2\pi/N$. 
Equivalently, the fundamental domain of the physical CP angle is reduced from $[0,2\pi)$ to $[0,2\pi/N)$. 
The Euclidean path integral can then be written as an orbifold sum over $\theta$-images: 
\begin{equation}
Z_{\rm gauged}(\bar{\theta}) \;=\; \frac{1}{N}\sum_{m=0}^{N-1} Z_{\rm QCD}\!\big(\bar{\theta}+2\pi m/N\big)\,. 
\end{equation} 
In a large spacetime volume (thermodynamic limit), this sum is dominated by the minimal free-energy contribution. 
Denoting $E_{\rm QCD}(\theta,T)=-\frac{T}{V}\ln Z_{\rm QCD}(\theta)$ as the vacuum energy density of QCD at angle $\theta$, the gauged theory’s vacuum energy is 
\begin{equation}
E(\bar{\theta},T)\;=\;\min_{m\in\mathbb{Z}}\;E_{\rm QCD}\!\Big(\bar{\theta}+\frac{2\pi m}{N}\,,\,T\Big)\,. 
\end{equation} 
This shows that the true vacuum will lie on the branch $m$ that brings $\bar{\theta}+2\pi m/N$ into the principal interval $[-\pi/N,\,+\pi/N]$. 
In particular, the effective CP angle in the vacuum is $\bar{\theta}_{\rm eff}\equiv \bar{\theta}+2\pi m_{\rm min}/N$ with $|\bar{\theta}_{\rm eff}|\le\pi/N$. 
Expanding the QCD vacuum energy for small $\bar{\theta}_{\rm eff}$, 
\begin{equation}
E_{\rm QCD}(\bar{\theta}_{\rm eff},T)\;\approx\;E_{\rm QCD}(0,T)\;+\;\frac{1}{2}\,\chi_t(T)\,\bar{\theta}_{\rm eff}^{\,2}\;+\;\mathcal{O}(\bar{\theta}_{\rm eff}^4)\,,
\end{equation} 
we see that the energy density rises quadratically away from the CP-conserving point~\cite{Witten:1998uka,Vicari:2008jw}. 
Therefore, the difference in vacuum energy density between the lowest branch ($\bar{\theta}_{\rm eff}\approx0$) and the first excited branch ($\bar{\theta}_{\rm eff}=2\pi/N$) at temperature $T$ is 
\begin{equation}
\Delta E(T)\;\equiv\;E(\bar{\theta}_{\rm eff}=2\pi/N,\,T)\,-\,E(0,T)\;\approx\;\frac{1}{2}\,\chi_t(T)\,\Big(\frac{2\pi}{N}\Big)^{2}\,. 
\end{equation} 
For $T\gg T_c$, $\chi_t(T)$ is tiny and $\Delta E(T)$ is essentially zero, all $\bar{\theta}$-branches are then quasi-degenerate in energy. 
But as $T$ drops below $\sim\Lambda_{\rm QCD}$, $\chi_t$ grows and $\Delta E(T)$ becomes significant, lifting the would-be $N$-fold vacuum degeneracy. 
In the $T\to0$ limit, $\Delta E$ approaches its maximum value $\frac{1}{2}\chi_t(0)\,(2\pi/N)^2$. 
The $Z_N$ being a gauge symmetry means there is no spontaneous breaking: only a single true vacuum remains (the branch with the smallest $|\bar{\theta}_{\rm eff}|$), and all other branches correspond to excited states. 
In particular, there are no stable domain walls. The would-be domain walls between different $\bar{\theta}$-vacua are rendered non-topological and can be eliminated by a gauge transformation of the new sector~\cite{Krauss:1988zc,Banks:2010zn,Harlow:2018tng}. 

One can nevertheless consider a transient configuration where a spatial region is in the $m=1$ excited branch (effective angle $\bar{\theta}_{\rm eff}=2\pi/N$) while the outside universe is in the $m=0$ vacuum ($\bar{\theta}_{\rm eff}\approx0$). 
The interface between these regions is a membrane (domain wall) across which the discrete topological charge jumps by one unit. 
Such a membrane carries a tension $\sigma(T)$ (energy per unit area) which, in an effective description, can be written as an action term $S_{\rm mem}=\sigma\,A$ for a membrane of area $A$. 
The value of $\sigma$ depends on ultraviolet details of the topological sector~\cite{Dvali:2005an,Freedman:1980us}. 
Dimensional analysis suggests a characteristic scale of order the QCD confinement scale, for example $\sigma\sim N\,\Lambda_{\rm QCD}^3$ in a minimal model, although extended constructions (such as clockwork chains of multiple $Z_{N_i}$) can raise or lower the effective tension~\cite{Kaplan:2015fuy,Giudice:2016yja}. 
At very high $T$, when $\Delta E(T)\approx0$, the membrane tension is irrelevant (one has $\sigma(T\gg T_c)\approx0$ because there is no distinction between vacua). 
Once confinement sets in and $\Delta E(T)$ grows, the membrane develops a nonzero tension, and more importantly, a pressure difference $\Delta E(T)$ arises across it. 
The higher-energy phase (the $m=1$ branch) feels a net pressure $\Delta E$ pushing the wall toward the lower-energy phase (the $m=0$ vacuum). 
Consequently, any $m\neq0$ region is metastable and will tend to shrink. 

The decay of a false-vacuum region can proceed via thermal or quantum nucleation of a bubble of true vacuum. 
The decay rate per unit volume can be estimated as $\Gamma/V \sim A\,\exp[-B(T)]$, where $B(T)$ is the minimal Euclidean action (the bounce) of the critical bubble~\cite{Coleman:1977py,Callan:1977pt,Coleman:1980aw}. 
In the thin-wall limit (valid when $\Delta E \ll \sigma\,(\Lambda_{\rm QCD})^3$), the bounce action is approximately 
\begin{equation}
B(T)\;\approx\;\frac{16\pi}{3}\,\frac{\sigma(T)^3}{[\Delta E(T)]^2}\,.
\end{equation} 
As $T$ decreases and $\Delta E(T)$ grows, the exponent $B(T)$ drops precipitously, meaning that any remaining false-vacuum regions can rapidly disappear. 
Indeed, the vacuum selection effectively takes place as soon as the QCD topological potential becomes appreciable. 
In an expanding universe, when the temperature cools to $T\sim T_c$, even a tiny bias $\Delta E(T)$ is enough to drive the entire Hubble volume into the $m=0$ lowest-energy vacuum~\cite{Kolb:1990vq}. 
There is no epoch in which different Hubble regions are permanently trapped in different $\bar{\theta}$-vacua, so no domain wall network can form~\cite{Vilenkin:2000jqa}. 
Any membrane-like defects that do momentarily appear (for instance, as transient field configurations carrying one unit of topological charge) are unstable and will collapse or annihilate shortly after formation. 
Crucially, the $\bar{\theta}$-angle is “locked in” to the principal branch long before primordial nucleosynthesis. 
For the enormous values of $N$ required by the strong CP bound (for example $N\gtrsim10^{10}$ to achieve $|\bar{\theta}|<10^{-10}$), the vacuum selection completes at a temperature far above $T\sim1~\text{MeV}$, ensuring that no dangerous relic walls survive to the era of BBN~\cite{ParticleDataGroup:2022pth}.

\section{Comparison with Peccei-Quinn Solutions}

The Peccei-Quinn (PQ) mechanism~\cite{Peccei:1977hh} introduces a new global $U(1)_{PQ}$ symmetry spontaneously broken at a high scale $f_a$, yielding a pseudo-Goldstone boson (axion) that couples to $G\tilde{G}$ and dynamically relaxes the effective $\theta$ angle to zero~\cite{Weinberg:1977ma,Wilczek:1977pj}. The axion field $a(x)$ can be introduced as $\theta \to \bar{\theta} + a(x)/f_a$, so that the QCD vacuum energy as a function of the axion expectation takes the form 
\begin{equation}
\label{PQpot}
V_{PQ}(a) \;=\; \chi_t\,\Big[1 - \cos\!\Big(\frac{a}{f_a} - \bar{\theta}\Big)\Big]~,
\end{equation}
where $\chi_t$ is the topological susceptibility (approximately $\chi_t \sim \Lambda_{\text{QCD}}^4$) setting the scale of the CP-violating vacuum energy. The potential in Eq.~\eqref{PQpot} is $2\pi f_a$-periodic in $a$ and is minimized when $a/f_a$ cancels the CP-violating angle: $\langle a/f_a \rangle + \bar{\theta} = 2\pi k$ for some integer $k$. In particular, for $\bar{\theta}$ in $[-\pi,\pi]$ one finds a unique minimum at $a/f_a = -\,\bar{\theta}$, so that the physical $\bar{\theta}_{\rm PQ}$ is driven to zero in the vacuum. Expanding $V_{PQ}(a)$ about the minimum yields the axion mass $m_a$ in terms of $\chi_t$ and $f_a$
\begin{equation}
\label{axionmass}
m_a^2 f_a^2 \;=\; \frac{\partial^2 V_{PQ}}{\partial a^2}\Big|_{a=\langle a\rangle} \;=\; \chi_t~, \qquad 
m_a \;\approx\; \frac{\sqrt{\chi_t}}{\,f_a\,}~,
\end{equation}
showing that $m_a$ is parametrically small for large $f_a$. For a realistic QCD axion with $f_a \gtrsim 10^{11}$~GeV, one finds $m_a \lesssim 10^{-4}$~eV, far below hadronic scales. The structure of $V_{PQ}(a)$ reflects the color anomaly of $U(1)_{PQ}$, often quantified by the domain-wall number $N_{DW}$. If $N_{DW}>1$, the cosine potential has $N_{DW}$ degenerate vacua (corresponding to $a/f_a = 2\pi k/N_{DW}$ for $k=0,\dots,N_{DW}-1$) that are all physically equivalent up to a discrete shift $a\to a+2\pi f_a$~\cite{Wilczek:1977pj}. In minimal models one has $N_{DW}=1$, so that the axion has a single physical vacuum and no domain-wall proliferation. In all cases, however, the PQ mechanism ensures that the effective $\bar{\theta}$ is relaxed to $\sim 0$
\begin{equation}
\bar{\theta}_{PQ} \;\sim\; 0 \qquad \text{(vacuum relaxation)}~,
\end{equation}
solving the strong CP problem to all orders in QCD.

By contrast, the discrete-$\theta$ projection (D$\theta$P) mechanism~\cite{Witten:1998uka,Cordova:2019uob} solves the strong CP problem without a dynamical axion field, instead employing a gauged discrete symmetry to ``lock’’ the $\theta$ angle to a small value. In this framework, a finite subgroup $\mathbb{Z}_N$ of the shift symmetry $\theta \to \theta + 2\pi$ is promoted to a gauge symmetry, typically by coupling QCD to a topological 3-form sector that enforces $\theta$-periodicity mod $2\pi/N$. Physically, this means that $\theta$ is only defined modulo $2\pi/N$. The theory identifies $\theta$ and $\theta + 2\pi/N$ as the same vacuum angle. As a result, the vacuum will settle in the $\theta$-sector closest to zero. In particular, if $E_{QCD}(\theta)$ denotes the vacuum energy (density) of QCD as a function of the bare $\theta$ angle, then in the $\mathbb{Z}_N$-gauged theory the physical vacuum energy is obtained by an ``orbifold projection’’ over the $N$ images of $\theta$~\cite{Witten:1998uka} 
\begin{equation}\label{Egauged}
E_{D\theta P}(\theta) \;=\; \min_{m\,\in\,\mathbb{Z}}\,E_{QCD}\!\Big(\theta + \frac{2\pi m}{N}\Big)~,
\end{equation}
choosing the branch $m$ that minimizes the energy. Equivalently, one may restrict $\theta$ to lie in the fundamental domain $-\pi/N \le \theta \le \pi/N$, since any larger $|\theta|$ can be reduced by a gauge transformation $m$. The vacuum is therefore selected at an effective angle $\bar{\theta}_{\rm eff}$ satisfying $|\bar{\theta}_{\rm eff}| \le \pi/N$. In particular, starting from an arbitrary $\bar{\theta}$ (including the contributions from quark phases), the theory is driven to a vacuum with 
\begin{equation}\label{thetaBound}
|\bar{\theta}_{D\theta P}| \;\le\; \frac{\pi}{N}~,
\end{equation}
as an exact consequence of the $\mathbb{Z}_N$ gauge symmetry. This non-perturbative bound implements a small $\bar{\theta}$ without any fine-tuning of parameters: even a $\mathcal{O}(1)$ bare $\theta$ is ``cleaned’’ to $\mathcal{O}(\pi/N)$ in the true vacuum. For large $N$, the residual angle $|\bar{\theta}_{D\theta P}|$ can be made arbitrarily small; for example, gauging a $\mathbb{Z}_{N}$ with $N\sim 10^{11}$ already guarantees $|\bar{\theta}|\lesssim 10^{-10}$, safely within current experimental limits on strong CP violation. We emphasize that the $\bar{\theta}$ suppression in D$\theta$P arises from a gauge symmetry rather than a global one, so it is radiatively and gravitationally stable. No operator that violates $\theta\to\theta+\frac{2\pi}{N}$ invariance can be generated in the low-energy theory, even by Planck-scale physics, since $\mathbb{Z}_N$ is an exact gauge symmetry (in particular, the $\theta$ term itself is unobservable modulo $2\pi/N$). The only conceivable explicit $\theta$-dependent operators are those proportional to $e^{iN\theta}$ (or cosines of $N\theta$), which are permitted by the $\mathbb{Z}_N$ selection rule but either vanish in the vacuum (since $\sin N\bar{\theta}_{\rm eff}=0$) or induce only extremely suppressed effects for large $N$. In short, D$\theta$P enforces $\bar{\theta}_{D\theta P}\sim \pi/N$ via a discrete gauge projection:
\begin{equation}
\bar{\theta}_{D\theta P} \;\sim\; \frac{\pi}{N} \qquad \text{(gauge projection)}~,
\end{equation}
providing an exponentially small $\theta$ angle without any light axion field. Figure~\ref{fig:potentials} illustrates the essential difference between the PQ and D$\theta$P approaches in terms of the vacuum potential. 

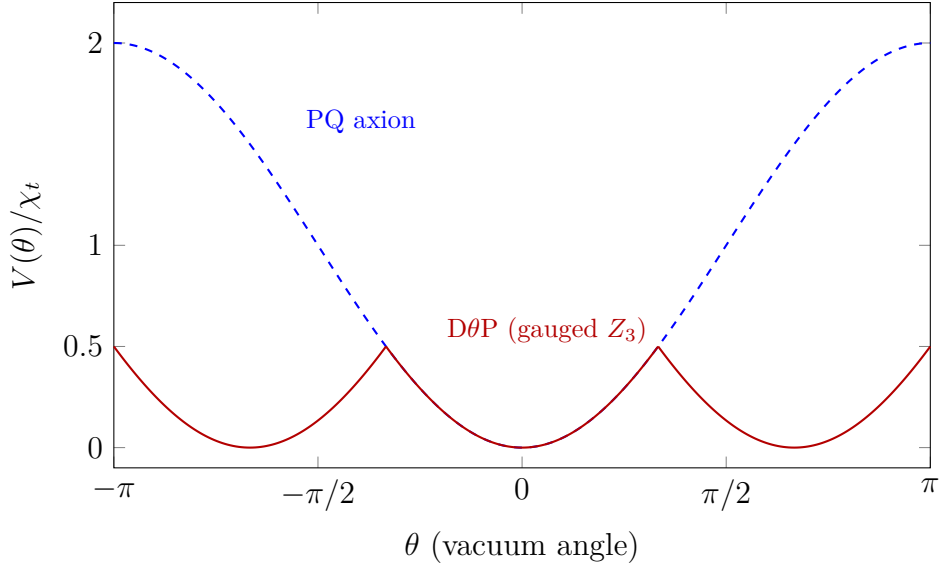
\begin{figure}[t]
    \centering
    \begin{tikzpicture}[scale=1]
      \begin{axis}[
        width=0.8\linewidth,
        height=0.5\linewidth,
        xlabel={$\theta$ (vacuum angle)},
        ylabel={$V(\theta)/\chi_t$},
        xmin=-180, xmax=180,
        ymin=-0.1, ymax=2.2,
        xtick={-180,-90,0,90,180},
        xticklabels={$-\pi$,$-\pi/2$,0,$\pi/2$,$\pi$},
        ytick={0,0.5,1,2},
        legend style={at={(0.98,0.85)},anchor=east, draw=none, font=\footnotesize}
      ]
        \addplot[domain=-180:180, samples=201, smooth, thick, dashed, color=blue] {1 - cos(x)}; 
        \addplot[domain=-180:-60, samples=50, smooth, thick, color=red!70!black] {1 - cos(x + 120)}; 
        \addplot[domain=-60:60,  samples=100, smooth, thick, color=red!70!black] {1 - cos(x)};
        \addplot[domain=60:180,  samples=50, smooth, thick, color=red!70!black] {1 - cos(x - 120)};
        \node[anchor=south west, color=blue] at (axis cs:-100,1.5) {\footnotesize PQ axion};
        \node[anchor=north east, color=red!70!black] at (axis cs:60,0.7) {\footnotesize D$\theta$P (gauged $Z_3$)};
      \end{axis}
    \end{tikzpicture}
    \caption{Comparison of the PQ axion potential and the discrete-$\theta$ projected potential. \emph{Blue dashed curve:} $V_{PQ}(\theta)/\chi_t = 1-\cos\theta$, the standard cosine potential (for illustration we set $\bar{\theta}=0$ and $N_{DW}=1$ so that $\theta=a/f_a$). The axion has a single minimum at $\theta=0$ and oscillates with full $2\pi$ periodicity, implying a mass $m_a^2=\chi_t/f_a^2$ and multi-vacuum structure only if $N_{DW}>1$. \emph{Red solid curve:} $V_{D\theta P}(\theta)/\chi_t$ for a gauged $Z_3$ (discrete $\theta$-periodicity $2\pi/3$). The physical vacuum energy is the lower envelope of the $1-\cos\theta$ branches shifted by $\pm 2\pi/3$, resulting in a unique minimum at $\theta=0$ and a maximal $|\theta|=\pi/3$ in any vacuum. In the $Z_N$ case, the effective $\bar{\theta}$ is confined to $[-\pi/N,\pi/N]$, and would-be $\theta$ vacua are identified by the gauge symmetry (no degenerate minima).}
    \label{fig:potentials}
\end{figure}

Both mechanisms solve the strong CP problem but in distinctly different ways, leading to several concrete differences in phenomenology and model-building. First, the D$\theta$P scenario does not require an approximate global symmetry and hence sidesteps the axion quality problem entirely. The $\bar{\theta}$ suppression arises from exact discrete gauge invariance, so there is no need to engineer a nearly perfect $U(1)_{PQ}$ symmetry that could be spoiled by Planck-suppressed operators~\cite{Witten:1998uka,Cordova:2019uob}. In PQ axion models, by contrast, the $U(1)_{PQ}$ is global and must be extraordinarily precise: even a higher-dimensional operator suppressed by $M_{\rm Pl}$ can re-induce a nonzero $\bar{\theta}$. For example, a dimension-$d$ Planck-scale term proportional to $a^d/M_{\rm Pl}^{\,d-4}$ shifts the axion potential by a tiny amount $\Delta V \sim \Lambda_{\text{QCD}}^4 \cos(a/f_a + \delta)$, which would generically push the vacuum away from $\bar{\theta}=0$ by an angle of order $(f_a/M_{\rm Pl})^{d-4}$ if not highly suppressed. This is the axion ``quality’’ problem. Obtaining $|\bar{\theta}| < 10^{-10}$ typically demands very large $d$ (or extra symmetries) to sufficiently quash these explicit breakings. No such concern arises in D$\theta$P, since any operator that is not exactly $\mathbb{Z}_N$-invariant is forbidden; the discrete gauge symmetry protects the $\theta$ angle up to arbitrary high scale. 

Second, the field content and low-energy degrees of freedom differ significantly. PQ solutions introduce a new light pseudoscalar, the axion, which must be very light and very weakly coupled. This leads to a host of experimental and observational constraints on the axion: laboratory searches (helioscopes, haloscopes, spin precession experiments, etc.) and astrophysical bounds (stellar cooling, supernova energy loss, cosmic background signals) severely restrict the allowed axion parameter space. In the minimal D$\theta$P solution, no light scalar is present at all; the strong CP angle is alleviated by a topological sector (e.g. a 3-form gauge field or equivalent discrete degrees of freedom), which does not produce a light propagating particle. Consequently, D$\theta$P evades all constraints that apply to a QCD axion. There is no axion-like particle to detect in the laboratory, no anomalous energy loss from stars due to axion emission, and no axion dark matter. The absence of a light axion is a major distinguishing feature: whereas a PQ model predicts a spin-0 boson with coupling $1/f_a$ that could be discovered (or constrained) by various experiments~\cite{Peccei:1977hh,Wilczek:1977pj}, an axionless solution like D$\theta$P predicts null results in all such searches. If upcoming axion dark matter and helioscope experiments continue to see no signal despite covering the classic QCD axion parameter space, it would strengthen the case for an axionless solution such as D$\theta$P.

Third, the two solutions differ in their implications for the neutron electric dipole moment (nEDM). In PQ models, the axion relaxes $\bar{\theta}$ to essentially zero, so the leading contribution to the neutron EDM from the QCD $\theta$-term is absent. The neutron EDM in such scenarios arises only from subleading sources (e.g. the CKM phase contributing via higher-order loops) and is extremely small: $|d_n|_{PQ} \approx 0$ to within present precision. Quantitatively, one expects $|d_n|_{PQ} \sim 10^{-32}$-$10^{-31}\,e\cdot\text{cm}$, many orders of magnitude below the current experimental bound $|d_n|<1.8\times10^{-26}\,e\cdot\text{cm}$. This tiny value is effectively unobservable in ongoing EDM experiments. In contrast, the D$\theta$P mechanism permits a small but nonzero $\bar{\theta}$ in general, bounded by $|\bar{\theta}| \le \pi/N$. If $N$ is chosen just large enough to satisfy the current bound (so that $\bar{\theta}_{D\theta P} \sim 10^{-10}$), the resulting neutron EDM could lie near the $10^{-26}\,e\cdot\text{cm}$ level. More optimistically, if $N$ is even larger, the EDM will be further suppressed, effectively approaching zero for experimental purposes. Importantly, however, D$\theta$P offers a clear scaling prediction: the CP-violating effects descend as $1/N$. If one could probe a range of neutron EDM sensitivities corresponding to smaller and smaller $\bar{\theta}$, any detection (or further non-detection) would discriminate between the scenarios. Figure~\ref{fig:edm} illustrates this behavior: the D$\theta$P neutron EDM $|d_n|$ scales as $\sim |\bar{\theta}_{D\theta P}| \propto 1/N$, whereas the PQ solution predicts essentially $|d_n|=0$ independent of any $N_{DW}$ (aside from tiny Standard Model contributions). In other words, the PQ mechanism completely eliminates $\bar{\theta}$-induced EDMs, while D$\theta$P leaves a residual $\bar{\theta}$ of order $\pi/N$ that in principle could generate a small $d_n$ (though for large $N$ it remains far below current limits). From a practical standpoint, both mechanisms predict a neutron EDM well below present bounds - a “null test’’ of strong CP violation - but the PQ axion makes $\bar{\theta}$ dynamically zero (hence $d_n$ exactly zero at tree-level), whereas D$\theta$P enforces only an upper limit on $\bar{\theta}$ (hence $d_n$ is suppressed but not identically zero). Ongoing and future EDM experiments, which aim to improve sensitivity by one or more orders of magnitude, will continue to probe the $\bar{\theta}$-parameter space. If a nonzero neutron EDM is observed in the $\bar{\theta}$-dominated regime ($\sim10^{-27}$-$10^{-26}\,e\cdot\text{cm}$) and can be attributed to a QCD $\theta$-term, it might suggest a small but non-vanishing $\bar{\theta}$ as could occur in D$\theta$P (or in a PQ model with a slight quality problem). On the other hand, a persistent null result simply pushes $N$ higher (or the PQ quality requirements more stringent), keeping both solutions viable but with different outlooks for discovery in other channels (axion searches vs. topological effects).

\begin{figure}[t]
    \centering
    \begin{tikzpicture}[scale=1]
      \begin{axis}[
        width=0.75\linewidth,
        height=0.5\linewidth,
        xlabel={Symmetry order $N$ (or $N_{DW}$)},
        ylabel={Normalized $|d_n|$},
        xmin=1, xmax=20,
        ymin=0, ymax=1.05,
        xtick={1,5,10,15,20},
        ytick={0,0.25,0.5,0.75,1.0},
        legend style={at={(0.98,0.95)},anchor=north east, draw=none, font=\footnotesize}
      ]
        \addplot[smooth, thick, color=red!70!black] coordinates {
            (1,1.0) (2,0.5) (3,0.333) (4,0.25) (5,0.2) (6,0.167) (8,0.125) (10,0.1) (12,0.0833) (15,0.0667) (20,0.05)
        };
        \addplot[domain=1:20, samples=2, thick, dashed, color=blue] {0.0};
        \node[anchor=south west, color=red!70!black] at (axis cs:6,0.2) {\footnotesize D$\theta$P: $|d_n| \propto 1/N$};
        \node[anchor=south east, color=blue] at (axis cs:19,0.05) {\footnotesize PQ: $|d_n| \approx 0$};
      \end{axis}
    \end{tikzpicture}
    \caption{Qualitative comparison of neutron EDM expectations in PQ vs. D$\theta$P solutions, as a function of the relevant symmetry order. For D$\theta$P (red solid line), the neutron EDM $|d_n|$ arises from the residual $\bar{\theta}$ and is therefore proportional to $|\bar{\theta}_{D\theta P}| \sim \pi/N$. (Here we normalize $|d_n|$ to the value corresponding to $|\bar{\theta}|=\pi$; thus $|d_n|/|d_n(\pi)| \approx |\bar{\theta}|/\pi$ on the horizontal axis.) Increasing the discrete symmetry order $N$ leads to a linear suppression of $|d_n|$. In contrast, a PQ axion solution (blue dashed line) drives $\bar{\theta}\to 0$ essentially exactly, predicting $|d_n|\approx 0$ for all domain-wall numbers $N_{DW}$ (aside from negligible SM contributions). Even if one allows for a small PQ symmetry-breaking (``axion quality’’) effect inducing a tiny $\bar{\theta}$, the PQ scenario would still produce a flat, nearly zero $d_n$ curve on this plot. Both mechanisms predict a neutron EDM far below current bounds, but their scaling behaviors differ markedly, offering a potential diagnostic with sufficient experimental improvement.}
    \label{fig:edm}
\end{figure}
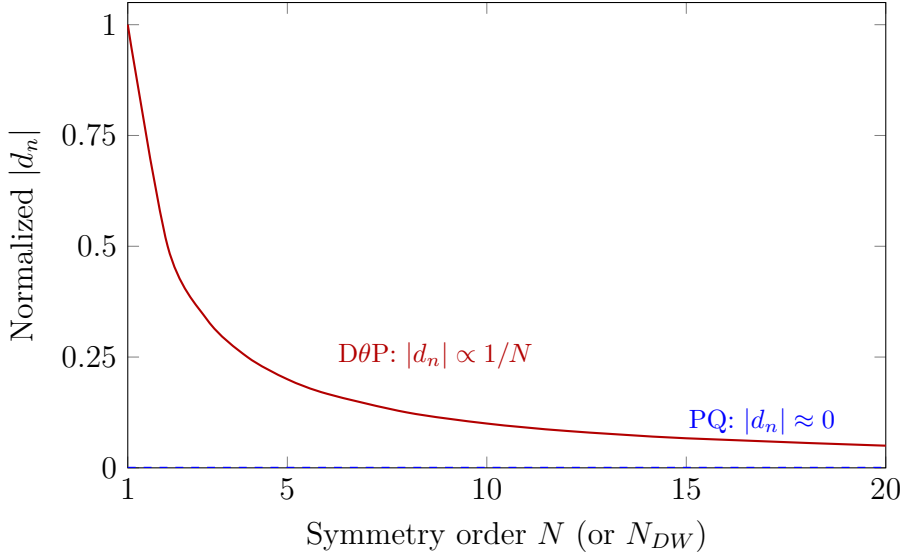

Finally, there are important distinctions in cosmology. PQ axion models are known to face cosmological hurdles associated with the axion field. If $N_{DW}>1$ and the PQ phase transition occurs after inflation, the random choice of $N_{DW}$ vacua across the universe leads to the formation of stable domain walls bounded by axion strings when $m_a$ turns on at the QCD scale. This domain wall network carries immense energy and is ruled out by cosmology (the ``axion domain-wall problem’’~\cite{Peccei:1977hh,Wilczek:1977pj}). Even for $N_{DW}=1$, where stable domain walls are absent, the axion field may produce significant isocurvature perturbations if it is present during inflation. A light axion is practically massless during inflation, so quantum fluctuations imprint a patchy $\bar{\theta}$ (axion misalignment angle) across the universe; these isocurvature fluctuations are tightly constrained by CMB observations. As a result, PQ axion models either require inflation to occur after the PQ breaking (to dilute any topological relics and homogenize the axion field) or must fine-tune the inflationary Hubble scale to suppress axion fluctuations. The D$\theta$P mechanism, on the other hand, avoids these cosmological issues. Because the $\theta$-shift symmetry is gauged and never truly spontaneously broken, there is no epoch in which $N$ distinct $\theta$ vacua exist in different regions of space. The would-be degenerate vacua are identified as a single state by the $\mathbb{Z}_N$ gauge symmetry, and no stable domain walls form between them. In effect, what would have been a domain wall in an ungauged theory is now unobservable or can be continuously deformed away by a gauge transformation in D$\theta$P. (If the discrete symmetry has a residual anomaly, one may get transient membrane-like objects, but these are not stable topological defects; they carry gauge flux and can annihilate or collapse without causing a domain-wall problem.) Thus, an axionless, $\mathbb{Z}_N$-protected universe does not produce cosmic strings or domain walls that threaten cosmology. Likewise, there is no axion field during inflation to generate isocurvature perturbations. The only new ingredients in D$\theta$P are massive topological sectors (e.g. heavy 0-form or 3-form gauge fields and associated membranes) which, if anything, tend to dilute away rather than dominate the universe. The D$\theta$P mechanism avoids the classic axion cosmological problems, no axion domain walls, no axion strings, and no isocurvature fluctuations. It achieves the strong CP solution in a way that is coexistent with standard cosmology, with the added implication that the universe contains no axion dark matter and no relic axion signals. These clear differences from PQ scenarios underscore how Discrete $\theta$ Projection provides a conceptually distinct and phenomenologically safe route to solving the strong CP problem without axions.

\section{Large $N$ and UV Completions}
\label{sec:LargeN-UV}
Having established that the discrete-$\theta$ projection dynamically confines the physical angle to $|\bar\theta_{\rm eff}|\le \pi/N$ and thus strongly motivates parametrically large $N$, we now show how multi-branch vacuum structure, anomaly safe topological selection rules, and discrete clockwork chains can generate exponentially large effective orders $N_{\rm eff}$ in consistent ultraviolet completions~\cite{Kaplan:2015fuy,Gaiotto:2017yup}. 
Eq.~\eqref{eq:ThetaEffBound} motivates taking $N$ extremely large. 
Here we show how a multi-branch structure yields an exponentially large effective $N$ in consistent UV constructions (“discrete clockwork”). 
We begin with the vacuum energy dependence on $\theta$ for finite $N$ and examine the $N\to\infty$ limit. 
We then derive the topological selection rules and clockwork mechanisms that achieve enormous $N_{\rm eff}$ while preserving all anomaly cancellation conditions~\cite{tHooft:1973alw,Witten:1979vv,Witten:1998uka,Bonati:2016tvi}.

\subsection{Multi-Branch Vacuum Energy and Large-\emph{N} Limit}

Gauging a $Z_N$ $\theta$-shift forces an orbifold identification on the $\theta$-circle. 
Path integrals at $\theta$ and $\theta+2\pi m/N$ (for any integer $m$) are identified and summed. 
The partition function is
\begin{equation}
Z_{\rm gauged}(\theta)=\frac{1}{N}\sum_{m=0}^{N-1}Z_{\rm YM}\!\left(\theta+\frac{2\pi m}{N}\right),
\end{equation}
where $Z_{\text{YM}}(\theta)$ is the Yang–Mills partition function at angle $\theta$~\cite{Witten:1998uka,Vicari:2008jw}. 
In the large volume (thermodynamic) limit, this leads to a multi-branch vacuum energy representation. 
Writing $E_{\text{YM}}(\theta) = -\lim_{V\to\infty}\frac{1}{V}\ln Z_{\text{YM}}(\theta)$ for the vacuum energy density of Yang–Mills at $\theta$, one finds
\begin{equation}\label{eq:EnvelopeEnergy}
E(\theta)=\min_{m\in\mathbb Z}E_{\text{YM}}\!\left(\theta+\frac{2\pi m}{N}\right),
\end{equation}
that is, the physical $E(\theta)$ is the lower envelope of $E_{\text{YM}}(\theta)$ over all $\theta$-images shifted by $2\pi/N$~\cite{Bonati:2016tvi}. 
Equivalently, $\theta$ is now periodically identified only modulo $2\pi/N$, so there is an $N$-fold branch structure in $E_{\text{YM}}$, but only the energetically minimal branch contributes to $E(\theta)$. 
Crucially, this envelope is flatter than any single branch of $E_{\text{YM}}$. 
In particular, $E(\theta)$ is $2\pi$-periodic and CP-even, with a unique minimum at $\theta=0$, but its curvature around $\theta=0$ is suppressed by $N^2$ relative to the original Yang–Mills curvature~\cite{Witten:1979vv,Vicari:2008jw}. 

To see this, note that for large $N_c$ (number of colors), one can express the Yang–Mills vacuum energy as
\begin{equation}
E_{\text{YM}}(\theta)=A N_c\Lambda^4 f(\theta/N_c),
\end{equation}
with $f(x)$ a $2\pi$-periodic even function ($f'(0)=0$ and $f''(0)=\chi_t/\Lambda^4$ relates to the topological susceptibility $\chi_t$)~\cite{tHooft:1973alw,Witten:1979vv}. 
The $N$ gauging does not alter this large-$N_c$ form but forces $\theta$ to lie in the interval $[-\pi/N,\pi/N]$, the principal branch. 
Among the set $\{\theta+2\pi m/N: m=0,\dots,N-1\}$ there is a unique representative $\theta_{\rm eff}\in[-\pi/N,\pi/N]$ that minimizes $|\theta+2\pi m/N|$. 
Then
\begin{equation}
E(\theta)=E_{\text{YM}}(\theta_{\rm eff})=A N_c\Lambda^4 f(\kappa N_c \theta_{\rm eff}),
\end{equation}
where $\kappa$ is a model-dependent constant (for pure $SU(N_c)$, $\kappa=1$). 
Since $|\theta_{\rm eff}|\le\pi/N$, at any physical $\theta$ the theory sits on the branch with the smallest $|\theta_{\rm eff}|$. 
Expanding for small angles,
\begin{equation}
E_{\text{YM}}\!\left(\theta+\frac{2\pi m}{N}\right)\approx E_{\text{YM}}(0)+\frac{1}{2}\chi_t\!\left(\theta+\frac{2\pi m}{N}\right)^{2}+\cdots,
\end{equation}
and the envelope \eqref{eq:EnvelopeEnergy} is minimized by the $m$ that makes $|\theta+2\pi m/N|$ as small as possible. 
Therefore the physical vacuum selects $\theta_{\rm eff}=\theta+2\pi m^*(\theta)/N$ with $|\theta_{\rm eff}|\le\pi/N$, implying the vacuum selection rule
\begin{equation}
\label{eq:ThetaEffBound}
|\theta_{\rm eff}|\le\frac{\pi}{N}.
\end{equation}
In particular, the physically realized angle $\bar\theta_{\rm phys}$ coincides with $\theta_{\rm eff}$, proving that
\begin{equation}
|\bar\theta_{\rm phys}|\le\frac{\pi}{N},
\end{equation}
and thus recovering the suppression bound (1.2) as an exact consequence of the orbifolded $\theta$ dynamics. 
As $N$ increases, the bound tightens. 
Effectively, the CP-violating angle is dynamically driven to zero in the large-$N$ limit. 
The multi-branch structure also ensures that $E(\theta)$ becomes extremely flat (see Fig.~\ref{fig:envelope}). 
As $N\to\infty$, one can choose $m$ such that $\theta_{\rm eff}\to 0$ for any fixed $\theta$, so the envelope $E(\theta)\to E_{\text{YM}}(0)$ for all $\theta$~\cite{Bonati:2016tvi}. 
This is an asymptotically smooth dependence where the theory is virtually CP-conserving. 
Equivalently, the topological susceptibility of the $Z_N$ gauged theory, $\chi_N = \partial^2 E/\partial\theta^2|_{\theta=0}$, is suppressed by $1/N^2$ relative to $\chi_t$. 
For large but finite $N$, $E(\theta)$ is continuous and differentiable except at $\theta=(2\ell+1)\pi/N$, where the minimal branch jumps by one unit ($m\to m\pm1$). 
At those points $E(\theta)$ has a cusp with a discontinuous second derivative, but the cusp height $\Delta E$ shrinks as $\Delta E\approx\frac{1}{2}\chi_t(\pi/N)^2=\mathcal O(1/N^2)$. 
Thus even moderate $N$ yields an almost flat $E(\theta)$. 
This multi-branched yet ultra-flat structure ensures the exponential smallness of $\bar\theta$. 
The theory automatically relaxes to the branch with smallest $\theta_{\rm eff}$, that is, the principal branch with $\theta_{\rm eff}\approx0$, while other branches correspond to metastable free-energy densities $E(\theta_{\rm eff}\approx 2\pi k/N)$ separated by thin domain walls~\cite{Bonati:2016tvi,Gaiotto:2017yup}.

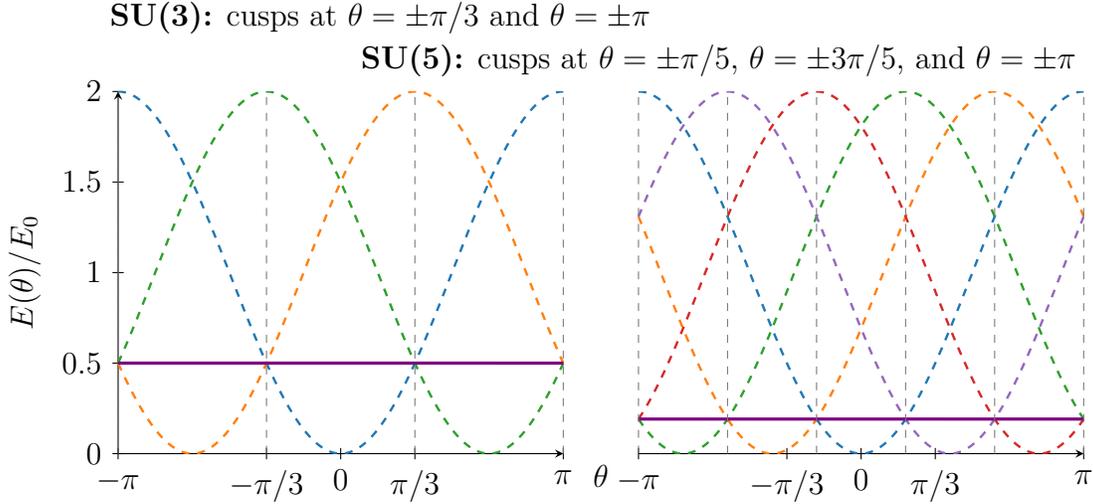
\begin{figure}[htbp]
\centering
\begin{tikzpicture}
  \pgfplotsset{compat=1.18}
  \usepgfplotslibrary{groupplots}
  \definecolor{colEnv}{RGB}{128,0,128}        
  \definecolor{colBrA}{RGB}{31,119,180}       
  \definecolor{colBrB}{RGB}{255,127,14}       
  \definecolor{colBrC}{RGB}{44,160,44}        
  \definecolor{colBrD}{RGB}{214,39,40}        
  \definecolor{colBrE}{RGB}{148,103,189}      
  \definecolor{colGuides}{RGB}{120,120,120}   

  \pgfmathsetmacro{\p}{3.141592653589793}

  \begin{groupplot}[
    group style={group size=2 by 1, horizontal sep=1cm, ylabels at=edge left, yticklabels at=edge left},
    width=7.5cm, height=6.4cm,
    xmin=-\p, xmax=\p, ymin=0, ymax=2,
    xtick={-\p, -\p/3, 0, \p/3, \p},
    xticklabels={$-\pi$, $-\pi/3$, $0$, $\pi/3$, $\pi$},
    ytick={0,0.5,1,1.5,2},
    ylabel={$E(\theta)/E_0$},
    axis x line=bottom, axis y line=left,
    tick style={black},
    every tick/.style={black},
    label style={black},
    title style={black},
    grid=none
  ]

  \nextgroupplot[title={}, xlabel={}, axis y line=left]
    \addplot[colBrA, dashed, line width=0.9pt, domain=-\p:\p, samples=201, smooth] {1 - cos(deg(x))};              
    \addplot[colBrB, dashed, line width=0.9pt, domain=-\p:\p, samples=201, smooth] {1 - cos(deg(x + 2*\p/3))};     
    \addplot[colBrC, dashed, line width=0.9pt, domain=-\p:\p, samples=201, smooth] {1 - cos(deg(x + 4*\p/3))};     

    \addplot[colEnv, very thick, domain=-\p:-\p/3, samples=2] {1 - cos(deg(x + 2*\p/3))};  
    \addplot[colEnv, very thick, domain=-\p/3:\p/3, samples=2] {1 - cos(deg(x))};          
    \addplot[colEnv, very thick, domain=\p/3:\p, samples=2] {1 - cos(deg(x + 4*\p/3))};    

    \draw[colGuides, dashed] (axis cs:-\p/3,0) -- (axis cs:-\p/3,2);
    \draw[colGuides, dashed] (axis cs: \p/3,0) -- (axis cs: \p/3,2);
    \draw[colGuides, dashed] (axis cs:-\p,0)   -- (axis cs:-\p,2);
    \draw[colGuides, dashed] (axis cs: \p,0)   -- (axis cs: \p,2);

  \nextgroupplot[title={}, xlabel={}, axis y line=none, yticklabels=\empty]
    \addplot[colBrA, dashed, line width=0.9pt, domain=-\p:\p, samples=201, smooth] {1 - cos(deg(x))};              
    \addplot[colBrB, dashed, line width=0.9pt, domain=-\p:\p, samples=201, smooth] {1 - cos(deg(x + 2*\p/5))};     
    \addplot[colBrC, dashed, line width=0.9pt, domain=-\p:\p, samples=201, smooth] {1 - cos(deg(x + 4*\p/5))};     
    \addplot[colBrD, dashed, line width=0.9pt, domain=-\p:\p, samples=201, smooth] {1 - cos(deg(x + 6*\p/5))};     
    \addplot[colBrE, dashed, line width=0.9pt, domain=-\p:\p, samples=201, smooth] {1 - cos(deg(x + 8*\p/5))};     

    \addplot[colEnv, very thick, domain=-\p:-3*\p/5, samples=2] {1 - cos(deg(x + 4*\p/5))};  
    \addplot[colEnv, very thick, domain=-3*\p/5:-\p/5, samples=2] {1 - cos(deg(x + 2*\p/5))};
    \addplot[colEnv, very thick, domain=-\p/5:\p/5, samples=2] {1 - cos(deg(x))};            
    \addplot[colEnv, very thick, domain=\p/5:3*\p/5, samples=2] {1 - cos(deg(x + 8*\p/5))};  
    \addplot[colEnv, very thick, domain=3*\p/5:\p, samples=2] {1 - cos(deg(x + 6*\p/5))};    

    \draw[colGuides, dashed] (axis cs:-\p/5,0)   -- (axis cs:-\p/5,2);
    \draw[colGuides, dashed] (axis cs: \p/5,0)   -- (axis cs: \p/5,2);
    \draw[colGuides, dashed] (axis cs:-3*\p/5,0) -- (axis cs:-3*\p/5,2);
    \draw[colGuides, dashed] (axis cs: 3*\p/5,0) -- (axis cs: 3*\p/5,2);
    \draw[colGuides, dashed] (axis cs:-\p,0)     -- (axis cs:-\p,2);
    \draw[colGuides, dashed] (axis cs: \p,0)     -- (axis cs: \p,2);

  \end{groupplot}

  \node[anchor=north] at ($(group c1r1.south)!0.5!(group c2r1.south)$) {$\theta$};

  \node[anchor=south] at ($(group c1r1.north)!0.5!(group c2r1.north)$) {%
    \begin{tabular}{c}
        \hspace{-6cm}
      \textbf{SU(3):} cusps at $\theta=\pm\pi/3$ and $\theta=\pm\pi$ \\ \hspace{3cm}
      \textbf{SU(5):} cusps at $\theta=\pm\pi/5$, $\theta=\pm 3\pi/5$, and $\theta=\pm\pi$
   \end{tabular}
  };
\end{tikzpicture}
\caption{Colored schematic multi-branch vacuum energy $E(\theta)$ obtained as the lower envelope of shifted branch functions $E_0\!\left[1-\cos(\theta + 2\pi k/N)\right]$ for $N=3$ and $N=5$. 
Dashed colored curves show individual branches. 
The solid purple curve is the physical envelope. 
Vertical gray dashed lines mark cusp locations at $\theta=(2\ell+1)\pi/N$ and $\theta=\pm\pi$. 
The ground state lies in $[-\pi/N,\pi/N]$, realizing $|\theta_{\mathrm{eff}}|\le \pi/N$.}
\label{fig:envelope}
\end{figure}

\subsection{Anomaly Constraints and Topological Selection Rules}

Gauging the discrete shift symmetry modifies the global topology. 
Effectively one studies an $SU(N_c)/Z_N$ theory, rather than a strictly simply connected $SU(N_c)$ theory~\cite{Aharony:2013hda,Witten:1979vv,tHooft:1973alw}. 
The instanton number $Q=\frac{1}{8\pi^2}\!\int\!\text{Tr}\,F\wedge F$ is no longer an unrestricted integer. 
Only $Q\bmod N$ is invariant, which implies a fractional quantization
\begin{equation}\label{eq:SelectionRule}
Q\in\frac{1}{N}\mathbb Z.
\end{equation}
The path integral in the $Z_N$ gauged theory projects onto $Q\in N\mathbb Z$ sectors. 
Since $\exp(i\theta Q)$ must remain single valued under $\theta\to\theta+2\pi/N$, one must have $e^{i(2\pi/N)Q}=1$, so $Q\in N\mathbb Z$~\cite{Kapustin:2013uxa,Gaiotto:2014kfa}. 
Thus, the $N$ branches labeled by $\tilde Q=Q\bmod N$ collapse into a single physical vacuum ($\tilde Q=0$). 
This forces $\bar\theta_{\rm phys}$ into $[-\pi/N,\pi/N]$. 
The would-be $\theta\sim\pi$ vacuum reappears only as a metastable state separated by a membrane. 

Gauging $Z_N$ must respect all ’t Hooft anomalies, including those mixing with gravity~\cite{tHooft:1973alw,Freed:2014iua,Freed:2016rqq}. 
Let $Q_G$ denote the gravitational Pontryagin number. 
A $2\pi/N$ shift in $\theta$ generally induces $Q\to Q+1$ and a gravitational phase shift, so consistency demands
\begin{equation}\label{eq:GravitySelection}
Q+\kappa_G Q_G = N K,\qquad K\in\mathbb Z,
\end{equation}
where $\kappa_G\in\mathbb Z$ is fixed by anomaly inflow~\cite{Cordova:2019uob,Hsin:2020nts}. 
Thus, gauge and gravitational instantons contribute consistently only when $Q+\kappa_G Q_G$ is a multiple of $N$. 
These quantization conditions are exactly maintained in the discrete topological action, for example in the term $\frac{N}{2\pi}\!\int\!B\wedge da$ whose integer coefficient enforces $Q\in N\mathbb Z$ and is radiatively stable~\cite{Krauss:1988zc,Banks:2010zn}. 
Analogous gravitational couplings $\frac{\kappa_G}{2\pi}\!\int\!a\wedge\text{Tr}(R\wedge R)$ are quantized by anomaly cancellation~\cite{Cordova:2019uob,Freed:2014iua,Freed:2016rqq}. 
Hence the discrete $\theta$ projection is fully gauge and gravity safe. 

\subsection{Discrete Clockwork and Exponential $N_{\rm eff}$}

Large discrete gauge groups naturally emerge in ultraviolet theories such as extra-dimensional orbifolds, brane systems, or flux compactifications~\cite{Dvali:2005an,Kapustin:2013uxa}. 
A particularly efficient way to obtain exponentially large $N$ is via a clockwork mechanism in the topological sector~\cite{Kaplan:2015fuy,Giudice:2016yja}. 
Consider $K$ compact $BF$ sectors:
\begin{equation}\label{eq:BFclockwork}
S_{\rm clockwork}=\sum_{i=1}^K\frac{N_i}{2\pi}\int B_i\wedge da_i+\frac{1}{2\pi}\int(a_1+q_2 a_2+\cdots+q_K a_K)\wedge C_{\rm QCD},
\end{equation}
where each $BF$ term enforces a $Z_{N_i}$ invariance and $C_{\rm QCD}$ satisfies $dC_{\rm QCD}=\mathrm{Tr}(F\wedge F)$. 
QCD couples only to the effective combination $a_{\rm eff}=a_1+q_2 a_2+\cdots+q_K a_K$. 
With appropriate integer coefficients $q_i$, a single diagonal subgroup $Z_{N_{\rm eff}}$ of $Z_{N_1}\times\cdots\times Z_{N_K}$ remains, acting as $\theta\to\theta+2\pi/N_{\rm eff}$ with
\begin{equation}\label{eq:NeffProduct}
N_{\rm eff}=\prod_{i=1}^K N_i.
\end{equation}
Thus, even with modest $N_i\sim10^2$ to $10^4$, a chain of $K\sim10$ sectors yields $N_{\rm eff}\sim10^{10}$ to $10^{12}$, exponentially enhancing $\bar\theta$ suppression~\cite{Giudice:2016yja,Ahmed:2016viu}. 

As an explicit example, for $K=3$ with $q_2=N_1$ and $q_3=N_1N_2$, one finds $a_{\rm eff}=a_1+N_1a_2+N_1N_2a_3$. 
A $2\pi/N_{\rm eff}$ shift of $a_1$ can be compensated by fractional shifts of $a_2$ and $a_3$ that preserve each $BF$ term, leaving $a_{\rm eff}$ shifted by exactly $2\pi/N_{\rm eff}$. 
Hence $\theta$ inherits a $2\pi/N_{\rm eff}$ periodicity, producing the same envelope potential as Eq.~\eqref{eq:EnvelopeEnergy} with $N\to N_{\rm eff}$ and a suppression $|\bar\theta|\le\pi/N_{\rm eff}$. 
Figure~\ref{fig:clockworkNeff} illustrates this exponential enhancement. 

\begin{figure}[htbp]
\centering
\begin{tikzpicture}
  \pgfplotsset{compat=1.18}
  \definecolor{colTen}{RGB}{31,119,180}   
  \definecolor{colThree}{RGB}{214,39,40}  
  \definecolor{colGuide}{RGB}{120,120,120}

  \begin{axis}[
    width=14.2cm, height=14.2cm,
    xmin=0, xmax=10,
    ymin=0, ymax=10,
    xtick={0,2,4,6,8,10},
    ytick={0,2,4,6,8,10},
    xlabel={$K$ (number of sectors)},
    ylabel={$\log_{10} N_{\text{eff}}$},
    axis x line=bottom, axis y line=left,
    tick style={black},
    legend pos=south east,
    legend cell align=left,
    grid=none,
    grid style={line width=0.25pt, draw=gray!40},
    legend style= {draw=none}
  ]

    \addplot[colTen, very thick, domain=0:10, samples=2] {x};
    \addplot[colThree, very thick, dashed, domain=0:10, samples=2] {ln(3)/ln(10) * x};

    \addplot[colGuide, densely dotted, thick] coordinates {(10,0) (10,10)};
    \addplot[colTen, mark=*, mark size=2.2pt] coordinates {(10,10)};
    \addplot[colThree, mark=*, mark size=2.2pt] coordinates {(10,{(ln(3)/ln(10))*10})};

    \node[anchor=west, text=colTen] at (axis cs:1,9.3) {$N_0=10\ \Rightarrow\ \log_{10}N_{\text{eff}}=K$};
    \node[anchor=west, text=colThree] at (axis cs:1,8.2) {$N_0=3\ \Rightarrow\ \log_{10}N_{\text{eff}}=K\log_{10}3$};
    \node[anchor=south, text=black] at (axis cs:10,10) {$10^{10}$};
    \node[anchor=south west, text=black] at (axis cs:10,{(ln(3)/ln(10))*10}) {$3^{10}\approx 5.9\times10^{4}$};

    \legend{{$N_0 = 10$},{$N_0 = 3$}}
  \end{axis}
\end{tikzpicture}
\caption{Colored logarithmic enhancement of the effective symmetry order via a multi-sector clockwork. Plotted is $\log_{10}N_{\text{eff}}$ versus $K$ for $N_0=10$ (solid blue) and $N_0=3$ (dashed red), illustrating $N_{\text{eff}}=N_0^K$. Even modest $K$ yields an exponentially large $N_{\text{eff}}$.}
\label{fig:clockworkNeff}
\end{figure}
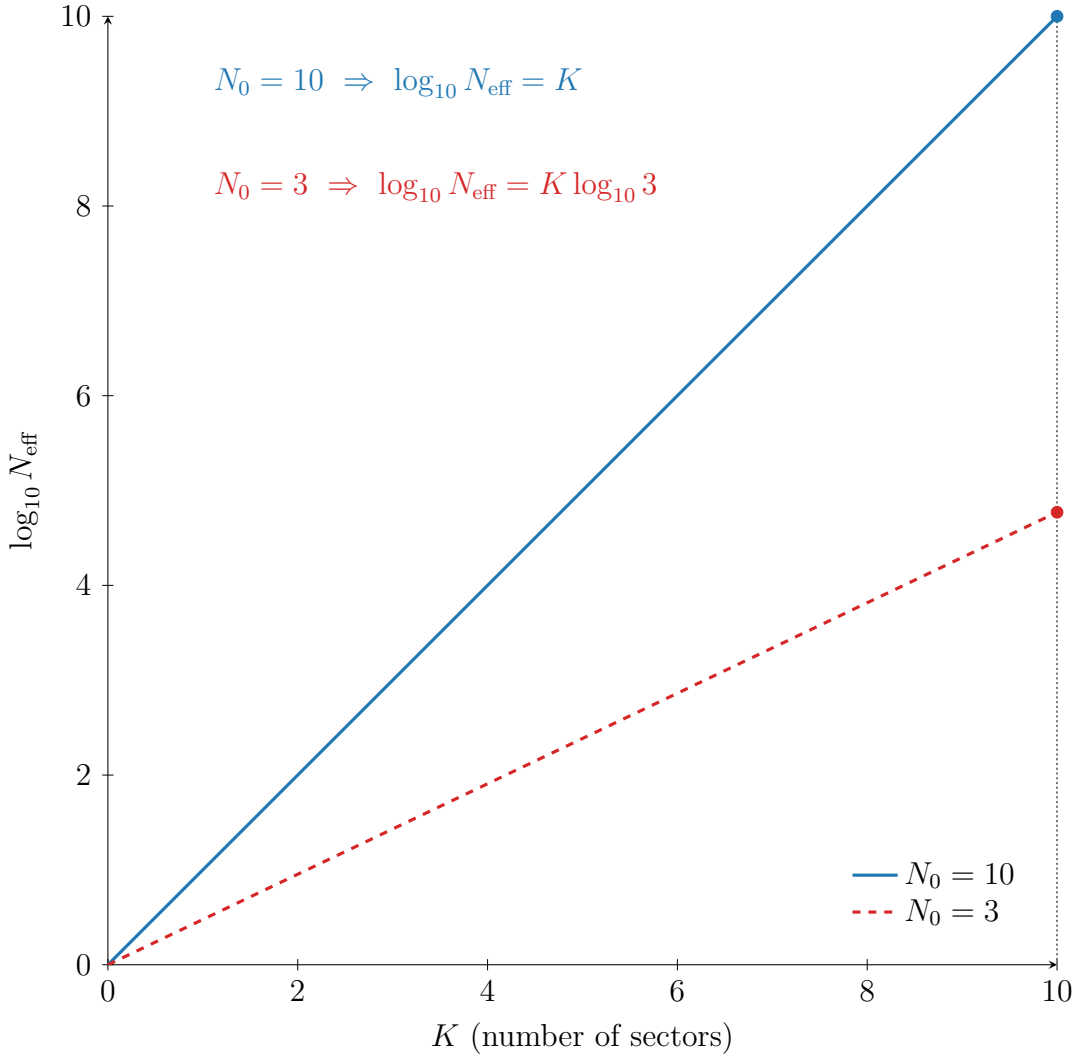

Each clockwork sector satisfies local anomaly constraints ($Q_i+\kappa_{G,i}Q_G^{(i)}\in N_i\mathbb Z$). 
This ensures that the effective diagonal combination coupling to QCD obeys the same condition for $N_{\rm eff}$. 
This structure can arise from higher-dimensional inflow or brane setups with quantized five dimensional Chern–Simons couplings, providing a holographic UV completion~\cite{Witten:1998uka,Kaloper:2008fb,Dvali:2005an}. 

\subsection{Metastable States, Membranes, and Energy Scales}

Even though gauging $Z_N$ removes exact $\theta$ degeneracy, remnants appear as $N-1$ metastable states and domain walls between them. The $m\neq m^*(\theta)$ branches in Eq.~\eqref{eq:EnvelopeEnergy} define local extrema with excess energy 
\begin{equation}
\Delta E_k=E_{\text{YM}}(2\pi k/N)-E_{\text{YM}}(0)\approx\frac{1}{2}\chi_t(2\pi k/N)^2.
\end{equation} 
These metastable states are connected by membranes carrying topological charge $\Delta Q=k$. The lowest-action process is the nucleation of a $k=1$ bubble bounded by a membrane of charge $\Delta Q=1$. Its tension scales as
\begin{equation}
\label{eq:MembraneTension}
\sigma\sim\frac{\Lambda}{g}\Delta E\sim N\Lambda^3,
\end{equation}
mirroring the $N_c$ scaling of large-$N_c$ domain walls in Yang-Mills \cite{Witten:1998uka,Bigazzi:2015bna}. The metastable spectrum exhibits 
\begin{equation}
\Delta E_k/E(0)\sim\pi^2 k^2/(2N^2),
\end{equation}
giving ultra-small splittings: for $N=10^{12}$, $\Delta E/E(0)\sim10^{-24}$. Thus, $E(\theta)$ appears almost analytic on the lattice, with curvature $\sim1/N^2$ and only a faint kink at $\theta=\pi/N$, testable via precise measurements of $\chi_t$.

Phenomenologically, the physical $\bar\theta$ is limited to $|\bar\theta_{\rm phys}|\le\pi/N$. Solving the strong CP problem requires $N\gtrsim10^{10}$, e.g.
\begin{equation}
N=\lceil \pi\times10^{10}\rceil\approx3.14\times10^{10},
\end{equation}
yielding $|\bar\theta_{\rm phys}|\lesssim3\times10^{-11}$, comfortably below experimental bounds $|\bar\theta|<10^{-10}$. For $N=10^{12}$ one obtains $|\bar\theta|\le3\times10^{-12}$. Such $N$ arise naturally from modest clockworks (e.g.\ $K=4$ with $N_i=10^3$). The corresponding neutron EDM $d_n\approx2.4\times10^{-16}\bar\theta\,e\cdot\text{cm}$ \cite{Crewther:1979pi} yields $d_n\lesssim8\times10^{-27}\,e\cdot\text{cm}$ for $N=3.14\times10^{10}$ and $d_n\lesssim8\times10^{-29}\,e\cdot\text{cm}$ for $N=10^{12}$, well below current limits ($1.8\times10^{-26}\,e\cdot\text{cm}$). Thus, ongoing null results in axion or EDM searches remain fully consistent with $Z_N$-gauged $\theta$-projection, which predicts no light axion and an exponentially suppressed $\bar\theta$, in contrast to Peccei-Quinn scenarios.

{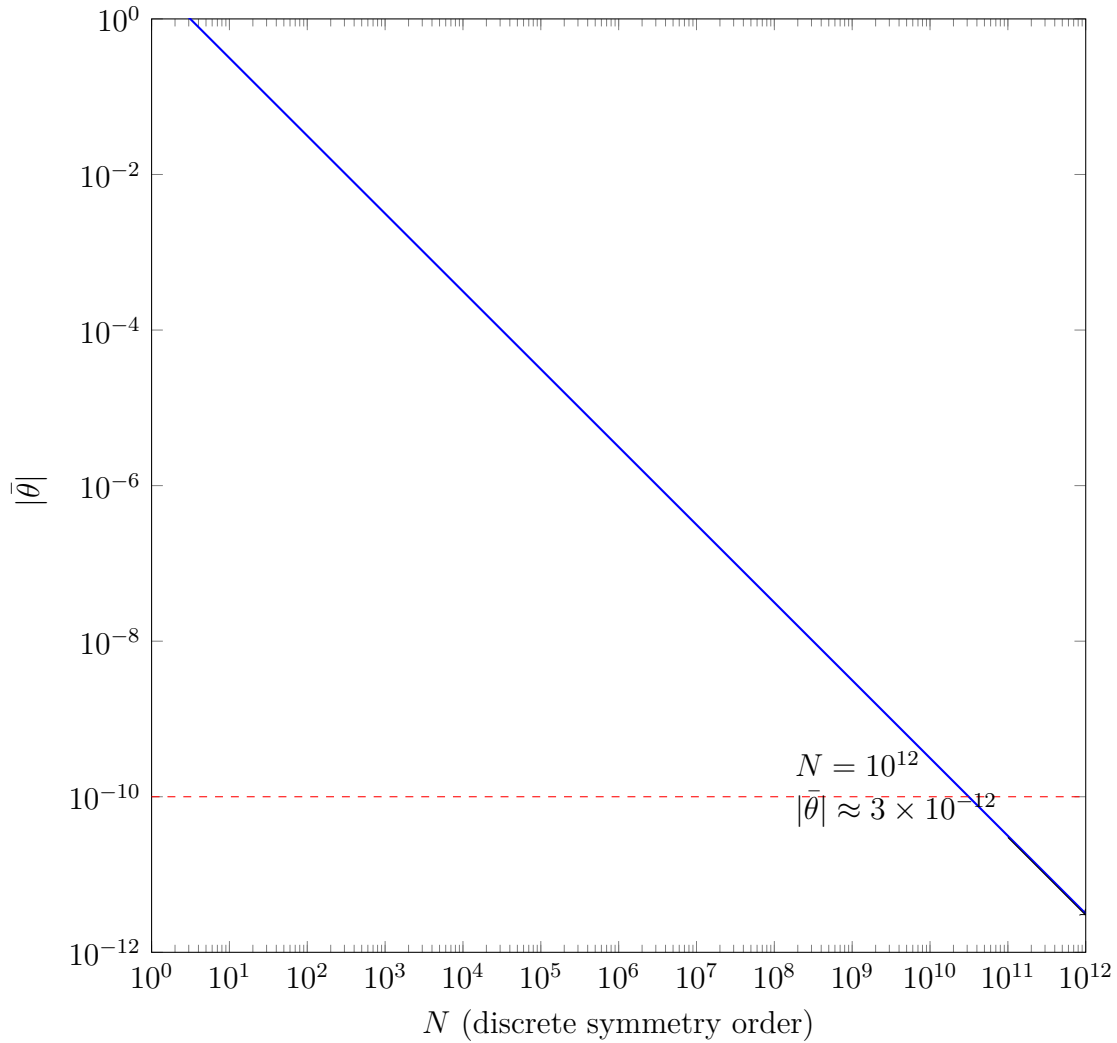
\begin{figure}[htbp]
\centering
\begin{tikzpicture}
    \begin{axis}[
        width=14cm, height=14cm,
        xmode=log, ymode=log,
        xmin=1e0, xmax=1e12,
        ymin=1e-12, ymax=1e0,
        xtick={1e0,1e1,1e2,1e3,1e4,1e5,1e6,1e7,1e8,1e9,1e10,1e11,1e12},
        ytick={1e0,1e-2,1e-4,1e-6,1e-8,1e-10,1e-12},
        xlabel={$N$ (discrete symmetry order)},
        ylabel={$|\bar{\theta}|\ $},
        legend style={draw=none}
    ]
        \addplot[blue, thick, domain=1:1e12, samples=200] {3.1416/x};
        \addplot[red, dashed, domain=1:1e12, samples=2] {1e-10};

        \node[anchor=south east, align=left] (largeNlabel)
          at (axis cs:1e11,3e-11) {$N=10^{12}$\\$|\bar{\theta}|\approx 3\times 10^{-12}$};
        \draw[->] (largeNlabel.south east) -- (axis cs:1e12, 3e-12);
    \end{axis}
\end{tikzpicture}
\caption{ Gauge-protected suppression of $|\bar{\theta}|$ as a function of $N$. The bound $|\bar{\theta}| = \pi/N$ (blue curve) falls below the current experimental limit $|\bar{\theta}| < 10^{-10}$ (red dashed line) once $N \gtrsim 3\times 10^{10}$. For example, $N = 10^{12}$ yields $|\bar{\theta}| \approx 3\times 10^{-12}$, far below the current bound.}
\end{figure}
}


\section{Diagnostics and Tests}
\label{sec:diagnostics}

The discrete gauging projects the Yang-Mills (YM) $\theta$-circle to an orbifold of circumference $2\pi/N$ and induces an $N$-branched structure in the vacuum energy~\cite{Witten:1998uka,Bonati:2016tvi}. 
Two inequivalent curvatures enter naturally and must not be conflated. 
The first is the local curvature inside a fixed cell, obtained by differentiating with respect to the cell coordinate that couples directly to the YM topological charge $Q$. 
This curvature equals the YM susceptibility $\chi_t$ by the anomalous Ward identity~\cite{Luscher:1981zq,Vicari:2008jw}. 
The second is the global, coarse-grained curvature defined with respect to the orbifold coordinate that parametrizes the full reduced $\theta$-circle of length $2\pi$ after the $\mathbb{Z}_N$ identification, it is suppressed by $N^2$. 
We now derive both statements from the exact projected partition function and prove that they are compatible.

Let 
\begin{equation}
Z_{\mathrm{YM}}(\vartheta)
=\sum_{Q\in\mathbb{Z}}Z_Q\,e^{i\vartheta Q}
\label{eq:ZYM}
\end{equation}
be the Yang-Mills partition function at vacuum angle $\vartheta$, with sector weights $Z_Q$ and 
\begin{equation}
\label{eq:10.2}
Q=\frac{1}{8\pi^2}\int \mathrm{Tr}\,F\wedge F \in \mathbb{Z}.
\end{equation}
Gauging the $\mathbb{Z}_N$ subgroup of the $2\pi$ shift identifies $\vartheta\sim\vartheta+2\pi/N$ and projects onto $Q\equiv0\pmod N$~\cite{Kapustin:2013qsa,Hsin:2020nts}. 
Equivalently, the projected partition function at microscopic angle $\theta$ is
\begin{equation}
Z_N(\theta)
=\frac{1}{N}\sum_{m=0}^{N-1}Z_{\mathrm{YM}}\!\left(\theta+\frac{2\pi m}{N}\right)
=\sum_{k\in\mathbb{Z}}Z_{Nk}\,e^{i\theta Nk}.
\label{eq:ZN}
\end{equation}

Define the thermodynamic vacuum energy
\begin{equation}
E(\theta)
=-\lim_{V\to\infty}\frac{1}{V}\log Z_N(\theta).
\end{equation}
Using
\begin{equation}
\log\sum_m e^{-V E_{\mathrm{YM}}(\theta+2\pi m/N)}
=-V\min_m E_{\mathrm{YM}}(\theta+2\pi m/N)+o(V),
\end{equation}
one obtains the infimal convolution (envelope) form
\begin{equation}
E(\theta)
=\min_{m\in\mathbb{Z}}E_{\mathrm{YM}}\!\left(\theta+\frac{2\pi m}{N}\right),\qquad V\to\infty,
\label{eq:envelope}
\end{equation}
so the unique ground state at any $\theta$ lies on the branch whose argument is closest to $0$ modulo $2\pi$~\cite{Bonati:2016tvi,Gaiotto:2017yup}. 
Writing that selected representative as the “cell” (principal-branch) coordinate
\begin{equation}
\theta_{\mathrm{cell}}
=\mathrm{PV}\!\left(\theta+\frac{2\pi m^{*}(\theta)}{N}\right)\in\left[-\frac{\pi}{N},\frac{\pi}{N}\right],
\label{eq:thetachange}
\end{equation}
the energy inside a cell is simply $E(\theta)=E_{\mathrm{YM}}(\theta_{\mathrm{cell}})$ and cusps occur when $m^{*}$ jumps at $\theta=(2\ell+1)\pi/N$. 
This construction is exactly the one used in Secs.~\ref{sec:3.6} and \ref{sec:diagnostics} of the manuscript. Eq.~\eqref{eq:envelope} is the envelope limit discussed below Eq.~\eqref{eq:ZYM}, and Eq.~\eqref{eq:thetachange} coincides with the principal-cell definition leading to $|\theta_{\mathrm{cell}}|\le\pi/N$.

Two angular parametrizations are now natural and must be kept distinct. 
The first is the microscopic cell coordinate $\theta_{\mathrm{cell}}$, which couples to the microscopic topological charge $Q$ in~\eqref{eq:ZN}. 
The second is the orbifold coordinate $\hat\theta=N\theta_{\mathrm{cell}}\in[-\pi,\pi]$, which spans the reduced $2\pi$ circle after the $\mathbb{Z}_N$ gauging and couples to the coarse-grained charge $K\equiv Q/N$. 
These are not two different theories; they are two coordinate choices on the same projected ensemble $Z_N$ in~\eqref{eq:ZN}, related by the simple rescaling $\hat\theta=N\theta_{\mathrm{cell}}$.

Curvatures follow from standard cumulant relations. 
Differentiating~\eqref{eq:ZN} with respect to the source that couples to $Q$ (that is $\theta_{\mathrm{cell}}$) gives
\begin{equation}
\frac{\partial E}{\partial \theta_{\mathrm{cell}}}\Big|_{0}
=iV\,\frac{\langle Q\rangle_N}{V}\Big|_{\theta=0}=0,\qquad
\frac{\partial^2 E}{\partial \theta_{\mathrm{cell}}^2}\Big|_{0}
=\frac{1}{V}\langle Q^2\rangle_{N,\theta=0}^{c}.
\label{eq:cellcurv1}
\end{equation}
In the large-volume saddle, the sector weights in~\eqref{eq:ZN} are Gaussian in $Nk$
\begin{equation}
Z_{Nk}\propto \exp\!\left[-\frac{(Nk)^2}{2V\chi_t}+\cdots\right].
\end{equation}
Hence the projection to $Q\in N\mathbb{Z}$ leaves the microscopic variance unchanged,
\begin{equation}
\frac{1}{V}\langle Q^2\rangle_{N,\theta=0}^{c}=\chi_t,
\label{eq:localvar}
\end{equation}
and therefore the local (cellwise) curvature equals the Yang-Mills topological susceptibility,
\begin{equation}
\frac{\partial^2 E}{\partial \theta_{\mathrm{cell}}^2}\Big|_{0}=\chi_t.
\label{eq:cellcurv2}
\end{equation}
This is precisely what Sec.~\ref{sec:3.6} asserts inside a fixed principal cell the envelope coincides with the underlying YM branch and its curvature equals $\chi_t=\int d^4x\,\langle q(x)q(0)\rangle_c$~\cite{Luscher:1981zq,DelDebbio:2004ns}. 
The same conclusion follows directly from the envelope form~\eqref{eq:envelope}: for fixed $m^*$ one has $E(\theta)=E_{\mathrm{YM}}(\theta_{\mathrm{cell}})$, so $\partial_{\theta_{\mathrm{cell}}}^2E|_0=\partial_{\vartheta}^2E_{\mathrm{YM}}|_{\vartheta=0}=\chi_t$.

Switch now to the orbifold coordinate $\hat\theta=N\theta_{\mathrm{cell}}$, whose source couples to $K=Q/N$. 
Differentiating with respect to $\hat\theta$ yields
\begin{equation}
\frac{\partial E}{\partial \hat\theta}\Big|_{0}
=iV\,\frac{\langle K\rangle_N}{V}\Big|_{\theta=0}=0,\qquad
\frac{\partial^2 E}{\partial \hat\theta^2}\Big|_{0}
=\frac{1}{V}\langle K^2\rangle_{N,\theta=0}^{c}
=\frac{1}{N^2}\frac{1}{V}\langle Q^2\rangle_{N,\theta=0}^{c}
=\frac{\chi_t}{N^2}.
\label{eq:orbicurv}
\end{equation}
Equivalently, since $\hat\theta=N\theta_{\mathrm{cell}}$, one has the chain rule identity
\begin{equation}
\frac{\partial}{\partial \hat\theta}
=\frac{1}{N}\frac{\partial}{\partial \theta_{\mathrm{cell}}}
\quad\Rightarrow\quad
\frac{\partial^2E}{\partial \hat\theta^2}
=\frac{1}{N^2}\frac{\partial^2E}{\partial \theta_{\mathrm{cell}}^2},
\label{eq:chain}
\end{equation}
which reproduces~\eqref{eq:orbicurv}. 
Equations~\eqref{eq:cellcurv2}–\eqref{eq:chain} prove the pair of statements
\begin{equation}
\frac{\partial^2E}{\partial \theta_{\mathrm{cell}}^2}\Big|_{0}=\chi_t,\qquad
\frac{\partial^2E}{\partial \hat\theta^2}\Big|_{0}=\frac{\chi_t}{N^2},
\label{eq:pair}
\end{equation}
and make explicit why there is no contradiction. 
The first curvature is local, it is taken with respect to the microscopic cell coordinate that couples directly to $Q$. 
The second is global; it is taken with respect to the orbifold angle that scans the reduced circle and couples to $K=Q/N$. 
The $N^{-2}$ factor is purely kinematic, arising from the change of variable $\hat\theta=N\theta_{\mathrm{cell}}$. 
This is exactly the distinction articulated in the paper’s Sec.~\ref{sec:diagnostics}, Eqs.~\eqref{eq:10.2} to \eqref{eq:thetachange}, where the two coordinates are introduced and the two curvatures are read off as the variances of $Q$ and $K$ in the same projected ensemble. 
Figure~\ref{fig:local-vs-global} (for $N=3$) illustrates that the local parabolic curvature inside each cell remains $\chi_t$, while the global modulation over the full $2\pi$ is flattened by $N^{-2}$~\cite{Vicari:2008jw,DelDebbio:2004ns}. 

It is instructive to phrase these results as susceptibilities. 
Define the local susceptibility
\begin{equation}
\chi_{\mathrm{local}}
\equiv\frac{1}{V}\int d^4x\,\langle q(x)q(0)\rangle_c
=\frac{1}{V}\langle Q^2\rangle_{N,\theta=0}^{c}
=\chi_t,
\label{eq:localchi}
\end{equation}
and the global susceptibility with respect to the orbifold angle
\begin{equation}
\chi_{\mathrm{global}}
\equiv\frac{\partial^2E}{\partial \hat\theta^2}\Big|_{0}
=\frac{1}{V}\langle K^2\rangle_{N,\theta=0}^{c}
=\frac{1}{N^2}\chi_{\mathrm{local}}.
\label{eq:globalchi}
\end{equation}
Both are computed in the same $\mathbb{Z}_N$ projected ensemble~\eqref{eq:ZN}. 
The projection leaves the microscopic fluctuations of $Q$ intact, hence $\chi_{\mathrm{local}}=\chi_t$, but suppresses the variance of the coarse-grained charge $K=Q/N$ by $N^{-2}$. 
In the thermodynamic limit this picture coexists with the envelope representation~\eqref{eq:envelope} inside any open cell the function $E(\theta)$ is analytic with curvature $\chi_t$. At the cell boundaries $\theta=(2\ell+1)\pi/N$ neighboring branches exchange dominance, producing a cusp in $\partial_\theta E$ but not altering the conclusions at $\theta=0$~\cite{Bonati:2016tvi,Gaiotto:2017yup}. 

Putting everything together resolves the apparent inconsistency between Eq.~\eqref{eq:axion-on-shell} and Eq.~\eqref{eq:10.2} of the manuscript. 
Equation~\eqref{eq:axion-on-shell} is a cellwise statement, hence differentiating the envelope with respect to the variable that probes the underlying YM branch gives curvature $\chi_t$. 
Equation~\eqref{eq:10.2} defines a global variable that scans the reduced circle, differentiating with respect to that variable introduces the Jacobian $1/N$ on first derivatives and $1/N^2$ on second derivatives. 
The global curvature suppression is therefore nothing but the kinematic rescaling $\hat\theta=N\theta_{\mathrm{cell}}$. 
It does not, and cannot, contradict the local equality of curvatures proved in Sec.~\ref{sec:3.6}. 
Operationally this gives a sharp lattice diagnostic: measure $(1/V)\langle Q^2\rangle$ and $(1/V)\langle(Q/N)^2\rangle$ in the projected theory at $\theta=0$. 
The former equals $\chi_t$ while the latter equals $\chi_t/N^2$. 
The same identities follow directly from~\eqref{eq:ZN}–\eqref{eq:thetachange} and hold at any finite volume; in $V\to\infty$ they coexist with the envelope picture and the selection rule $|\theta_{\mathrm{cell}}|\le\pi/N$~\cite{DelDebbio:2004ns,Luscher:1981zq,FlavourLatticeAveragingGroupFLAG:2021npn}. 

The curvature at $\theta=0$ with respect to the microscopic cell angle is $\chi_t$. 
The curvature with respect to the orbifold angle is $\chi_t/N^2$. 
Both descend from the same projected partition function, the former probing fluctuations of $Q$, the latter probing fluctuations of $K=Q/N$. 
The suppression is exclusively a consequence of the coordinate choice that spans the orbifolded $\theta$-circle and does not modify the local Yang-Mills curvature inside a single topological cell. 
The apparent discrepancy is thus eliminated, and the equalities $\chi_{\mathrm{local}}=\chi_t$ and $\chi_{\mathrm{global}}=\chi_t/N^2$ provide a precise and testable diagnostic.

It is useful to formalize the orbifold projection as a linear averaging on the partition function and as an infimal convolution on the energy. 
Define the discrete projector $\mathsf{P}_N$ acting on $2\pi$ periodic functions by
\begin{equation}
(\mathsf{P}_N f)(\theta)\;\equiv\;\frac{1}{N}\sum_{m=0}^{N-1} f\!\left(\theta+\frac{2\pi m}{N}\right),
\qquad
Z_{N}(\theta)=(\mathsf{P}_N Z_{\rm YM})(\theta).
\label{eq:PN}
\end{equation}
In the $V\to\infty$ limit,
\begin{equation}
E(\theta)\;=\;-\frac{1}{V}\log (\mathsf{P}_N e^{-V E_{\rm YM}})(\theta)
\;\xrightarrow[V\to\infty]{}\;\inf_{m\in\mathbb{Z}}\,E_{\rm YM}\!\left(\theta+\frac{2\pi m}{N}\right),
\label{eq:infconv}
\end{equation}
that is, the physical energy is the lower envelope of the shifted YM branches. 
The global curvature suppression \eqref{eq:globalchi} can be read either from the charge-variance statement or directly from the rescaling $E(\hat\theta)=E(\hat\theta/N)$ near the origin; both viewpoints are equivalent in the projected ensemble~\cite{Bonati:2016tvi}. 
The effect of the discrete $\theta$ projection on the Yang-Mills vacuum energy is illustrated in Fig.~\ref{fig:envelope1} for a toy model with
$E(\theta)=1-\cos\theta$, which is $2\pi$ periodic, even, and convex near the origin, as in the large-$N_c$ picture~\cite{tHooft:1973alw,Witten:1979vv}. 
Gauging a $\mathbb{Z}_N$ subgroup of the $\theta$-shift symmetry produces an $N$-branch structure $E(\theta+2\pi m/N)$, $m=0,\dots,N-1$, and the physical vacuum energy $E_N(\theta)$ is given by the lower envelope of these shifted branches. 
For $N=4$, the blue curve shows the original Yang-Mills branch, while the colored dashed curves represent the shifted branches $E(\theta+2\pi m/4)$. 
The thick black curve is $E_4(\theta)$, obtained by taking, at each $\theta$, the minimum over $m$ of these branch energies, in agreement with the projected partition function construction in Eq.~\eqref{eq:ZN}.

Near $\theta=0$ the $m=0$ branch is uniquely minimal, so $E_4(\theta)$ coincides with the original $E(\theta)$ and reproduces the usual local curvature $\chi_t$ in the microscopic cell. 
At $|\theta|=\pi/4$ the $m=0$ and $m=1$ branches become degenerate, leading to a cusp in $E_4(\theta)$; for larger $|\theta|$ the envelope follows the next branch, corresponding to an effective angle $\theta_{\mathrm{eff}}=\theta-2\pi/4$. 
All cusps occur at $|\theta|=(2\ell+1)\pi/4$, and between them the envelope is analytic, tracking a single translated Yang-Mills branch. 
This construction makes the projection bound $|\bar{\theta}|\le\pi/N$ geometrically manifest. The effective vacuum angle $\bar{\theta}=\theta_{\mathrm{eff}}$ that labels the occupied branch always lies in the fundamental interval $[-\pi/N,\pi/N]$, while the global curvature in the orbifold coordinate is suppressed by $1/N^2$ compared to the microscopic Yang-Mills curvature, as quantified in Eqs.~\eqref{eq:localchi}–\eqref{eq:globalchi}.

Let $\langle\cdot\rangle_{\rm YM}$ denote YM expectation at $\theta=0$. 
Using Eq.~\eqref{eq:ZN} and the Gaussian saddle for large $V$, the $Q$-distribution in the $\mathbb{Z}_N$ ensemble is the YM distribution conditioned to $Q\in N\mathbb{Z}$. 
Writing $Q=Nk$ and expanding the YM weight around $k=0$ gives $Z_{Nk}\propto\exp[-N^2 k^2/(2V\chi_t)]$, hence
\begin{equation}
\big\langle Q^2\big\rangle_{N,\theta=0}
= N^2\,\big\langle k^2\big\rangle_{N,\theta=0}
= N^2\times \frac{V\chi_t}{N^2}
= \big\langle Q^2\big\rangle_{\rm YM,\theta=0}.
\label{eq:varQ}
\end{equation}
Equation \eqref{eq:varQ} shows that the microscopic variance of $Q$ is unchanged by the projection, while the diagonal variance,
\begin{equation}
\big\langle K^2\big\rangle_{N,\theta=0}=\frac{1}{N^2}\,\big\langle Q^2\big\rangle_{\rm YM,\theta=0},
\label{eq:varK}
\end{equation}
is reduced by $N^2$. 
Combining \eqref{eq:varQ} – \eqref{eq:varK} with \eqref{eq:cellcurv2} – \eqref{eq:orbicurv} reproduces \eqref{eq:chilocal} – \eqref{eq:chiglobal}.

We now record the diagnostic formulas that separate the two curvatures. 
The local susceptibility is the standard connected two-point function of the YM topological density,
\begin{equation}
\chi_{\rm local}\;\equiv\;\frac{1}{V}\int_V d^4x\;\langle q(x)\,q(0)\rangle_c
\;=\;\left.\frac{\partial^2 E}{\partial\theta_{\rm cell}^{\,2}}\right|_{\theta_{\rm cell}=0}
\;=\;\chi_t\,.
\label{eq:chilocal}
\end{equation}
The global susceptibility is the curvature with respect to the orbifold coordinate (the angle of circumference $2\pi$ after the $\mathbb{Z}_N$ identification),
\begin{equation}
\chi_{\rm global}\;\equiv\;\left.\frac{\partial^2 E}{\partial\hat\theta^{\,2}}\right|_{\hat\theta=0}
\;=\;\frac{1}{V}\,\langle K^2\rangle_{N,\theta=0}^{c}
\;=\;\frac{1}{N^2}\,\chi_{\rm local}\,.
\label{eq:chiglobal}
\end{equation}
Equations \eqref{eq:chilocal}–\eqref{eq:chiglobal} are exact in the projected theory and hold at any $V$. 
They provide a clean way to quote Eq.~\eqref{eq:10.2} without ambiguity. The suppressed curvature is the curvature with respect to the orbifold angle, that is the fluctuation of the diagonal charge $K=Q/N$. 
The microscopic curvature inside a cell remains $\chi_t$ and saturates the anomalous Ward identity~\cite{Luscher:1981zq,Bilal:2008qx}.

Gauging the discrete shift enforces the selection rule $Q\in N\mathbb{Z}$ and identifies all angles separated by $2\pi/N$~\cite{Kapustin:2013qsa,Hsin:2020nts}. 
Instanton physics inside a cell is unaltered, the local correlator $\langle q(x)q(0)\rangle_c$ and hence $\chi_{\rm local}$ retain their YM values. 
The gauging removes are global fluctuations of $Q$ modulo $N$. 
Measuring topological response with respect to the orbifold angle is tantamount to probing the variance of $K=Q/N$, which is $N^{-2}$ of the YM variance. 
Thus large-scale curvature is suppressed even though local topological fluctuations are intact.

A lattice implementation that distinguishes the two curvatures follows directly from \eqref{eq:chilocal}–\eqref{eq:chiglobal}. 
To measure $\chi_{\rm local}$, simulate the projected theory at $\theta_{\rm cell}=0$ with the discrete two-form sector included (or project a YM ensemble to $Q\in N\mathbb{Z}$ in analysis) and integrate the connected correlator $\langle q(x)q(0)\rangle_c$~\cite{DelDebbio:2004ns,Luscher:1981zq,FlavourLatticeAveragingGroupFLAG:2021npn}. 
Equivalently, extract $\partial_{\theta_{\rm cell}}^2 E|_{0}$ by standard small-$\theta$ reweighting restricted to the principal cell $|\theta_{\rm cell}|<\pi/N$. 
To measure $\chi_{\rm global}$, couple the source to the orbifold angle $\hat\theta\in[-\pi,\pi]$ and determine $\partial_{\hat\theta}^2 E|_0$. 
By \eqref{eq:orbicurv}–\eqref{eq:chiglobal} this is identical to computing the variance of $K=Q/N$,
\begin{equation}
\chi_{\rm global}\;=\;\frac{1}{V}\,\Big\langle \big(Q/N\big)^2\Big\rangle_{N,\theta=0}
\;=\;\frac{1}{N^2}\,\frac{1}{V}\,\langle Q^2\rangle_{\rm YM,\theta=0}\,,
\label{eq:latticerel}
\end{equation}
which provides a robust cross-check of the $1/N^2$ law. 
In practice, both determinations share the same underlying measurements of $Q$ and of the two-point function $q(x)q(0)$, differing only by whether one differentiates with respect to $\theta_{\rm cell}$ or to $\hat\theta$.

The identities above hold at finite $V$. 
As $V\to\infty$, the projected free energy equals the lower envelope \eqref{eq:infconv}, the ground state lies in the principal cell $|\bar\theta_{\rm eff}|\le\pi/N$, and metastable branches are split from the minimum by
\(
\Delta E_k\simeq \frac12\chi_t (2\pi k/N)^2
\),
while the domain wall tension scales linearly in $N$ in the topological sector~\cite{Bonati:2016tvi,Gaiotto:2017yup}. 
As $N\to\infty$ at fixed $\chi_t$, the envelope becomes flat on $\mathcal{O}(1)$ angular scales, $\chi_{\rm global}\to 0$ with $\chi_{\rm global}\sim\chi_t/N^2$, whereas $\chi_{\rm local}=\chi_t$ remains finite.

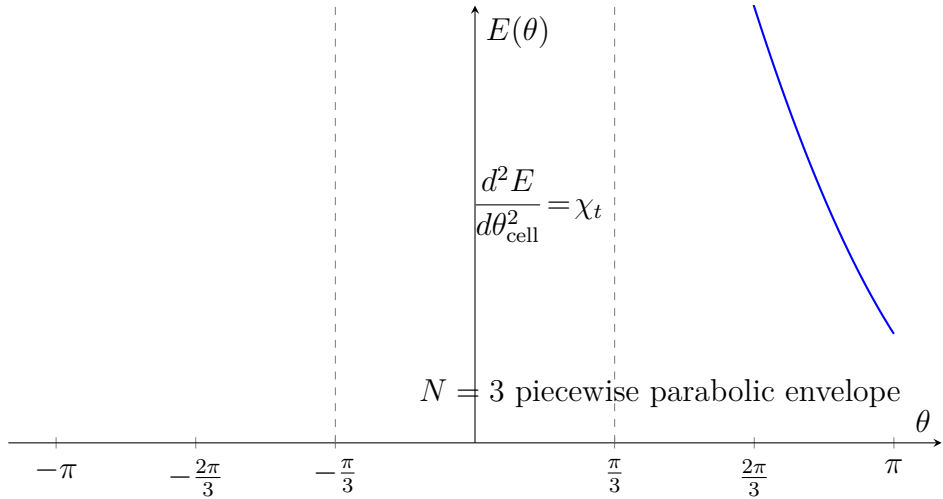
\begin{figure}[t]
\centering
\begin{tikzpicture}
\begin{axis}[
  width=0.9\textwidth,height=7.4cm,
  xmin=-3.5, xmax=3.5, ymin=0, ymax=2.2,
  axis lines=middle, xlabel={$\theta$}, ylabel={$E(\theta)$},
  xtick={-3.1416,-2.0944,-1.0472,0,1.0472,2.0944,3.1416},
  xticklabels={$-\pi$,$-\tfrac{2\pi}{3}$,$-\tfrac{\pi}{3}$,$0$,$\tfrac{\pi}{3}$,$\tfrac{2\pi}{3}$,$\pi$},
  ytick=\empty, domain=-3.1416:3.1416, samples=800, smooth]
\addplot+[no marks,thick,blue] {(x+2*pi)^2 < (pi/3)^2 ? 0.5*(x+2*pi)^2 :
                                (x+4*pi/3)^2 < (pi/3)^2 ? 0.5*(x+4*pi/3)^2 :
                                (x+2*pi/3)^2 < (pi/3)^2 ? 0.5*(x+2*pi/3)^2 :
                                (x)^2 < (pi/3)^2 ? 0.5*(x)^2 :
                                (x-2*pi/3)^2 < (pi/3)^2 ? 0.5*(x-2*pi/3)^2 :
                                (x-4*pi/3)^2 < (pi/3)^2 ? 0.5*(x-4*pi/3)^2 :
                                0.5*(x-2*pi)^2};
\draw[dashed,gray] (axis cs:-1.0472,0) -- (axis cs:-1.0472,2.2);
\draw[dashed,gray] (axis cs: 1.0472,0) -- (axis cs: 1.0472,2.2);
\node at (axis cs:-0.1,1.2) [anchor=west] {$\displaystyle \frac{d^2E}{d\theta_{\rm cell}^2}\!=\!\chi_t$};
\node at (axis cs:-0.5,0.25) [anchor=west] {$N=3$ piecewise parabolic envelope};
\end{axis}
\end{tikzpicture}
\caption{Schematic envelope $E(\theta)=\min_m \tfrac12\chi_t(\theta+2\pi m/N)^2$ for $N=3$. The local curvature inside a cell (solid parabolic segment) equals $\chi_t$. Measuring curvature with respect to the orbifold angle $\hat\theta=N\theta_{\rm cell}$ yields a global curvature suppressed by $N^{-2}$.}
\label{fig:local-vs-global}
\end{figure}

Equations \eqref{eq:chilocal}–\eqref{eq:chiglobal} supply the precise operational meaning of “curvature suppression” used in Sec.~\ref{sec:diagnostics}. 
It is suppression in the orbifold variable, i.e. in the diagonal response to the projected, $2\pi$-periodic angle, while the microscopic cell-wise curvature is unsuppressed and controlled by the YM anomaly. 
This is the unique way to read both Eq.~\eqref{eq:axion-on-shell} and the global scaling quoted in Sec.~\ref{sec:diagnostics} as simultaneously true. 
The definitions above also match lattice practice, $\chi_{\rm local}$ is the standard integrated correlator of $q$, and $\chi_{\rm global}$ is the variance of $Q/N$ or, equivalently, the second derivative of the free energy with respect to the orbifold angle (the angle of the reduced circle)~\cite{Vicari:2008jw,DelDebbio:2004ns,FlavourLatticeAveragingGroupFLAG:2021npn}. 
Together with the exact projector \eqref{eq:ZN} and the envelope limit \eqref{eq:infconv}, they provide a complete and testable set of diagnostics for the discrete projection mechanism~\cite{Veneziano:1979ec,Witten:1998uka,Gaiotto:2014kfa,Cordova:2019uob,Hason:2020yqf}.

\section{Conclusions}
\label{sec:conclusions}

We have shown that gauging a finite subgroup $Z_{N}$ of the $2\pi$ $\theta$-shift in QCD produces an orbifolded path integral that selects, in the thermodynamic limit, the lowest branch of the multi-valued Yang--Mills vacuum energy. This yields
\(
E(\theta)=\min_{m\in\mathbb Z}E_{\rm YM}\!\left(\theta+\frac{2\pi m}{N}\right),
\)
which projects the physical theory onto the principal cell $|\bar\theta_{\rm eff}|\le\pi/N$ and solves the strong CP problem without introducing a light axion. 
The bound is the dynamics. It is protected by a discrete gauge symmetry rather than an approximate global symmetry. It persists under renormalization and in curved space because the topological couplings are integer quantized and therefore nonrenormalizable in the sense that they are immutable under small deformations. We established the underlying selection rule in a local effective field theory with compact higher-form fields,
\(
Q+\kappa_{G}Q_{G}=NK, \, \mathrm{with} \, K\in\mathbb Z,
\)
and we demonstrated that the integer data $(N,\kappa_{G})$ follow from higher-form gauge invariance and from five-dimensional anomaly inflow.  This ensures gauge-gravity consistency while enforcing that only sectors with $Q\in N\mathbb Z$ survive. Equivalently, $\theta$ is a coordinate on a circle orbifolded by $Z_{N}$, with the vacuum always realigning to the representative closest to $0$. We quantified the infrared consequences of this projection. The envelope suppresses the curvature of $E(\theta)$ by $N^{-2}$ in the small-angle regime, with $\chi_{N}\sim \chi_{t}/N^{2}$. 
It lifts would-be $\theta\sim2\pi k/N$ vacua to exponentially nearby metastable branches with gaps
\(
\Delta E_{k}\simeq\frac{1}{2}\,\chi_{t}\left(\frac{2\pi k}{N}\right)^{2},
\)
and it replaces axionic domain walls by membrane-like excitations whose tension scales as $\sigma\sim N\Lambda^{3}$ and whose nucleation is catastrophically suppressed for large $N$. Vacuum selection therefore completes long before Big Bang Nucleosynthesis once $\chi_{t}(T)$ turns on at the QCD epoch. We exhibited ultraviolet realizations that deliver exponentially large suppression with modest microscopic inputs via discrete clockwork chains of compact $BF$ sectors. In these models the diagonal gauge symmetry seen by QCD has
\(
N_{\rm eff}=\prod_{i}N_{i}
\)
and thus $|\bar\theta_{\rm phys}|\le\pi/N_{\rm eff}$. 
This construction preserves all local and mixed anomalies site by site and in the diagonal by design. It can be engineered in higher-dimensional inflow or brane-inspired settings. These constructions are intended as controlled four-dimensional effective descriptions of QCD coupled to compact topological sectors, rather than as complete embeddings of the full Standard Model plus gravity. Identifying such fully microscopic realizations in a given ultraviolet framework is an interesting problem that we leave for future work. Phenomenologically the framework makes sharp, falsifiable predictions. 
The QCD-induced neutron EDM scales as
\(
|d_{n}|\sim (1 \text{ to } 3)\times10^{-16}\,\frac{\pi}{N}\,e\cdot{\rm cm}
\)
and becomes unobservably small for $N\gtrsim 10^{10}$. CKM-induced contributions remain as in the Standard Model. All axion-like signals are absent because the shift is gauged and not global, so no pseudo Nambu-Goldstone mode exists and no $1/p^{2}$ pole appears in $\langle q,q\rangle$. On the lattice one should observe a flattened $\theta$ dependence with cusps at $\theta=(2\ell{+}1)\pi/N$, a suppressed curvature consistent with $\chi_{N}\propto N^{-2}$, and a branch envelope matching the projected multi-branch energy.  Conceptually, discrete $\theta$ projection reframes strong CP as a question of generalized gauge structure rather than of approximate global symmetries. 
It therefore avoids axion quality issues and makes the smallness of $\bar\theta$ a corollary of integer quantized topological data $(N,\kappa_{G})$. Looking ahead, it will be valuable to systematize anomaly-safe $Z_{N}$ embeddings across QCD-like theories with dynamical flavors, to map the space of inflow completions and two-group structures compatible with the selection rule, and to develop precision lattice diagnostics that separate envelope curvature from higher-cumulant contributions. It will also be valuable to integrate the mechanism in unified or composite ultraviolet models where discrete clockwork arises naturally. In parallel, sustained null results in axion searches and continued tightening of EDM bounds will progressively corner axionic solutions while remaining entirely compatible with the gauge-protected, axionless dynamics advocated here.

\section*{Acknowledgements}
We are grateful to 

\appendix

\section{Proof of the projection bound}
\label{app:proof}

Let $E(\theta)$ denote the vacuum energy density as a function of the CP angle $\theta$. 
In a pure Yang-Mills theory (or QCD in the $m_{q}\to 0$ limit), $E(\theta)$ is known to be a $2\pi$-periodic, continuous function. 
It is even, in the sense that $E(-\theta)=E(\theta)$, and it has a global minimum at $\theta=0$ by CP symmetry. 
Moreover, $E(\theta)$ is convex for $|\theta|$ sufficiently small. 
Indeed, $E''(0)=\chi_t>0$, where $\chi_t$ is the topological susceptibility, given by the anomalous Ward identity. 
Thus, near $\theta=0$ one has $E(\theta)\approx \frac{1}{2}\chi_t\,\theta^2+\dots$ with positive curvature. 

Analyses at large $N_c$ further reveal that $E(\theta)$ can be written as the lower envelope of an infinite family of $2\pi$-shifted branches. 
In particular, one may introduce an auxiliary $2\pi$-periodic even function $f(x)$, smooth for $x\in\mathbb{R}$, such that
\begin{equation}\label{E_multi_branch}
E(\theta) = \min_{k\in\mathbb{Z}}\, A\, f\!\Big(\frac{\theta+2\pi k}{c}\Big),
\end{equation}
for some constants $A,c>0$ (with $A$ scaling like $N_c^2$ and $c\sim N_c$ in Yang-Mills). 
This multi-branch representation, first derived by Veneziano, Di Vecchia, Witten and collaborators, makes the convex and periodic structure explicit. 
Each branch $E_k(\theta)\equiv A\,f((\theta+2\pi k)/c)$ is even and convex on $[-\pi,\pi]$, and $E(\theta)$ is their pointwise minimum. 
In particular, $E(\theta)$ is continuous and convex on $[-\pi,\pi]$, with possible non-analytic kinks (cusps) only where two branches intersect. 
These intersections occur at the half-period points $\theta=(2\ell+1)\pi$ for integers $\ell$. 
At those points, by symmetry, $E(\pi^-)=E(\pi^+)$ and a first-order transition can arise. 
Indeed, for $N_c>2$ one expects a cusp at $\theta=\pi$ related to spontaneous CP breaking, as argued by Witten and observed in lattice-inspired models. 
Away from such points, $E(\theta)$ is smooth (infinitely differentiable) and strictly increasing for $\theta>0$, since the slope $E'(\theta)$ goes from $0$ at $\theta=0$ to $E'(\pi^-)>0$. 

We now gauge a discrete $\mathbb{Z}_N$ subgroup of the shift symmetry $\theta\mapsto\theta+2\pi$. 
Physically, this means identifying $\theta$ with $\theta+2\pi/N$ as equivalent angles, an orbifold of the $\theta$-circle. 
Equivalently, one introduces an $N$-fold covering of the $\theta$ parameter space by coupling to a discrete topological sector, as in a $\mathbb{Z}_N$ $0$-form or $1$-form gauge theory. 
This procedure results in a modified vacuum energy function, which we denote $E_N(\theta)$. 
It is defined as the minimum of $E(\theta)$ over all $\mathbb{Z}_N$-related angles:
\begin{equation}\label{E_N_def}
E_N(\theta) = \min_{m\in\{0,1,\dots,N-1\}}\, E\!\Big(\theta+\frac{2\pi m}{N}\Big).
\end{equation}
By construction, $E_N(\theta)$ inherits $2\pi$-periodicity from $E(\theta)$ and is an even function as well, since the original $E(\theta)$ is even and the set of shifts $\{2\pi m/N\}$ is symmetric. 
Equation~\eqref{E_N_def} can be interpreted as follows. 
For any given $\theta$, one considers the $N$ distinct angles $\theta+\frac{2\pi m}{N}$ (modulo $2\pi$) that are related by the discrete gauge symmetry. 
One evaluates $E(\theta)$ on each of these angles and then takes the lowest value. 
This orbifold projection enforces that the theory relaxes to the vacuum branch whose $\theta$-value lies closest to $0$, the CP-conserving point. 

It is useful to define the principal value operation $\mathrm{PV}(\phi)$, which yields the unique representative of any angle $\phi$ in the interval $(-\pi,\pi]$. 
For each $m=0,1,\dots,N-1$, let us denote
\begin{equation}
x_m(\theta) \equiv \mathrm{PV}\!\Big(\theta + \frac{2\pi m}{N}\Big),
\end{equation}
so that each $x_m(\theta)\in(-\pi,\pi]$ and represents the angle $\theta+\frac{2\pi m}{N}$ brought to the principal $[-\pi,\pi]$ range. 
Because adding $2\pi/N$ successively generates a complete set of $N$ evenly spaced points on the circle, the $N$ values $x_0(\theta),x_1(\theta),\dots,x_{N-1}(\theta)$ partition the circle. 
They satisfy $x_{m+1}(\theta)=x_m(\theta)+2\pi/N$ (mod $2\pi$), differing pairwise by multiples of $2\pi/N$. 
Geometrically, if we divide the interval $(-\pi,\pi]$ into $N$ consecutive subintervals of length $2\pi/N$, for example $[-\pi/N,\pi/N]$, $[\pi/N,3\pi/N]$, up to $[(2N-1)\pi/N,\pi]$ and their negatives, exactly one of the $x_m(\theta)$ will fall into each such subinterval. 
In particular, there is exactly one representative $x_{m}(\theta)$ that lies in the central interval $[-\pi/N,\pi/N]$. 
We denote this special representative by
\begin{equation}
\theta_{\mathrm{eff}} \equiv \theta_{\mathrm{eff}}(\theta) = \text{the unique } x_m(\theta) \text{ with } x_m(\theta)\in[-\pi/N,\pi/N].
\end{equation}
By construction $|\theta_{\mathrm{eff}}|\le \pi/N$. 
We emphasize that $\theta_{\mathrm{eff}}(\theta)$ is simply $\theta$ reduced modulo $2\pi/N$ to the range $[-\pi/N,\pi/N]$. 
Intuitively, one shifts $\theta$ by an appropriate multiple of $2\pi/N$ to bring it as close to zero as possible on the circle. 

Now, since $E(\theta)$ is an even function and is monotonically increasing for $\theta>0$ on the interval $[0,\pi]$, owing to convexity and the minimum at $0$, the value of $E(\theta)$ grows as $|\theta|$ increases from $0$ up to $\pi$. 
In particular, if $\phi_1,\phi_2\in[0,\pi]$ with $\phi_1<\phi_2$, then $E(\phi_1)<E(\phi_2)$, and similarly $E(-\phi_1)<E(-\phi_2)$ by evenness. 
Among the $N$ candidates $x_m(\theta)$, the one with the smallest magnitude $|\theta_{\mathrm{eff}}|$ therefore yields the lowest energy. 
For any other $x_{m'}(\theta)$, we have $|x_{m'}(\theta)|\ge|\theta_{\mathrm{eff}}|$, hence
\begin{equation}
E(x_{m'}(\theta))\ge E(|x_{m'}(\theta)|)\ge E(|\theta_{\mathrm{eff}}|)=E(\theta_{\mathrm{eff}}).
\end{equation}
Here we used evenness to drop the sign inside $E$, and the fact that $E(\theta)$ is non-decreasing on $[0,\pi]$ to compare $E(|x_{m'}|)$ with $E(|\theta_{\mathrm{eff}}|)$. 
It follows that the minimum in Eq.~\eqref{E_N_def} is achieved precisely by $m=m_*(\theta)$ such that $x_{m_*}(\theta)=\theta_{\mathrm{eff}}$. 
In other words, the projected energy can be written in the simple form
\begin{equation}\label{E_projected_eff}
E_N(\theta) = E\!\big(\theta_{\mathrm{eff}}(\theta)\big), \qquad \text{with } |\theta_{\mathrm{eff}}(\theta)| \le \frac{\pi}{N}.
\end{equation}
This establishes that the effective $\theta$-angle in the $\mathbb{Z}_N$-gauged theory can always be chosen, by a gauge transformation in the topological sector, to lie in the interval $[-\pi/N,\pi/N]$. 
In particular, the absolute size of the physical vacuum angle is bounded by $\pi/N$:
\begin{equation}\label{theta_bound}
|\bar{\theta}| \equiv |\theta_{\mathrm{eff}}| \le \frac{\pi}{N}.
\end{equation}
This inequality is the projection bound. 
It holds for any input value of $\theta$ (the theory’s bare $\theta$-parameter) and for any $2\pi$-periodic, even, convex $E(\theta)$ with minimum at $0$. 
The bound is saturated in scenarios where $\theta$ lies exactly halfway between two adjacent multiples of $2\pi/N$. 
For example, if $\theta=(2\ell+1)\pi/N$ for some integer $\ell$, then $\theta_{\mathrm{eff}}=\pi/N$ (up to a sign) and $|\bar{\theta}|=\pi/N$. 
At those special points, two different choices of $m$ (namely $m=\ell$ and $m=\ell+1$) yield $\theta_{\mathrm{eff}}=+\pi/N$ and $\theta_{\mathrm{eff}}=-\pi/N$ respectively, which are equal in magnitude. 
Consequently $E_N(\theta)$ is continuous at $\theta=(2\ell+1)\pi/N$, since $E(\pi/N)=E(-\pi/N)$ by evenness, but it is not differentiable there. 
The system has two degenerate vacua, related by $\theta_{\mathrm{eff}}\to -\theta_{\mathrm{eff}}$, and the slope of $E_N(\theta)$ jumps, forming a cusp (a kink in the energy profile). 
These cusp points $|\theta|=(2\ell+1)\pi/N$ for $\ell=0,1,\dots$ are the only non-analytic features of $E_N$. 
Away from those points, a single branch $x_{m_*}(\theta)$ dominates and $E_N(\theta)$ exactly coincides with a smooth piece of the original function $E(\theta+2\pi m_*/N)$. 
Thus $E_N(\theta)$ is an even, $2\pi$-periodic function that is piecewise smooth and convex, with a sequence of cusps that reflect the discrete $\mathbb{Z}_N$ identification. 
Figure~\ref{fig:envelope1} illustrates this construction for a representative example.

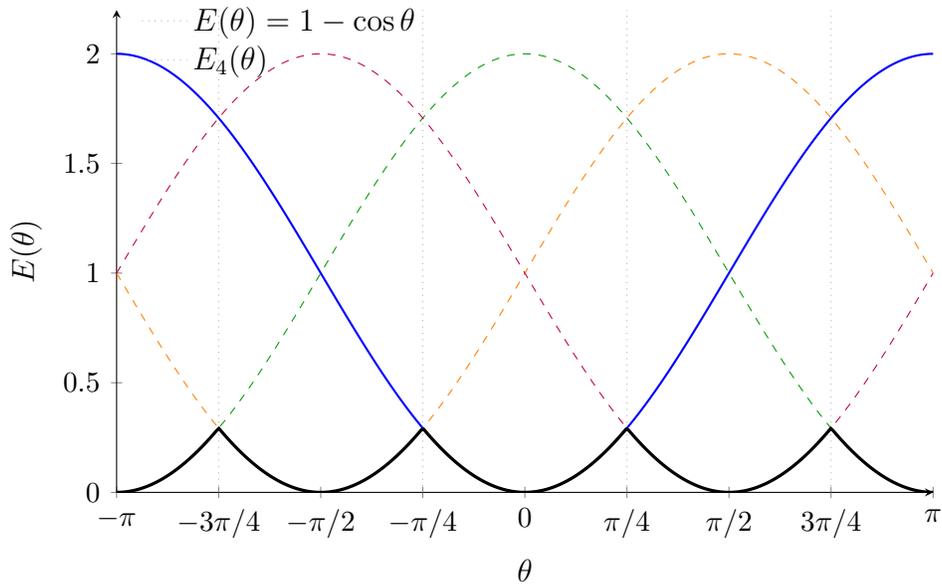
\begin{figure}[h]
\centering
\begin{tikzpicture}
  \begin{axis}[
    width=0.8\textwidth,
    height=8cm,
    xmin=-3.1416, xmax=3.1416,
    ymin=0, ymax=2.2,
    axis x line=bottom,
    axis y line=left,
    xlabel={$\theta$},
    ylabel={$E(\theta)$},
    xtick={-3.1416,-2.3562,-1.5708,-0.7854,0,0.7854,1.5708,2.3562,3.1416},
    xticklabels={$-\pi$,$-3\pi/4$,$-\pi/2$,$-\pi/4$,0,$\pi/4$,$\pi/2$,$3\pi/4$,$\pi$},
    legend cell align=left,
    legend style={at={(0.02,1.03)},anchor=north west,draw=none,fill=none},
    ticklabel style={font=\small},
  ]

    \addplot[gray!60,dotted] coordinates {(-0.7854,0) (-0.7854,2.2)};   
    \addplot[gray!60,dotted] coordinates {( 0.7854,0) ( 0.7854,2.2)};   
    \addplot[gray!60,dotted] coordinates {(-2.3562,0) (-2.3562,2.2)};   
    \addplot[gray!60,dotted] coordinates {( 2.3562,0) ( 2.3562,2.2)};   

    \addplot[blue,thick,domain=-pi:pi,samples=400]
      {1 - cos(deg(x))};
    \addlegendentry{$E(\theta)=1-\cos\theta$}

    \addplot[orange,dashed,domain=-pi:pi,samples=400]
      {1 - cos(deg(x + pi/2))};        
    \addplot[green!60!black,dashed,domain=-pi:pi,samples=400]
      {1 - cos(deg(x + pi))};          
    \addplot[purple,dashed,domain=-pi:pi,samples=400]
      {1 - cos(deg(x + 3*pi/2))};      

    \addplot[black,very thick,domain=-pi:pi,samples=400]
      {min(
         min(1 - cos(deg(x)),       1 - cos(deg(x + pi/2))),
         min(1 - cos(deg(x + pi)),  1 - cos(deg(x + 3*pi/2)))
       )};
    \addlegendentry{$E_{4}(\theta)$}

  \end{axis}
\end{tikzpicture}
\caption{Vacuum energy $E(\theta)$ (blue curve) for a $2\pi$-periodic, even, convex function (here $E(\theta)=1-\cos\theta$ for illustration) and its projected counterpart $E_{4}(\theta)$ after gauging a $\mathbb{Z}_4$ subgroup of the $\theta$-shift symmetry (black curve). The projected function $E_{4}(\theta)$ is obtained as the lower envelope of the $N$ shifted branches $E(\theta+2\pi m/N)$ (colored dashed curves, for $m=0,1,2,3$) that compose the multi-valued structure of $E(\theta)$. At each $\theta$, the physical energy $E_{4}(\theta)$ is the minimum of these branch values, which geometrically corresponds to dropping down to the lowest lying curve. For small $|\theta|$, the $m=0$ branch (blue) is clearly the lowest, so $E_{4}(\theta)=E(\theta)$ in the central region. However, beyond the first cusp at $\theta=\pi/4$, the branch corresponding to $\theta_{\mathrm{eff}}=\theta-2\pi/4$ (orange dashed curve) becomes the lowest. At $\theta=\pi/4$ exactly, one has two degenerate choices $\theta_{\mathrm{eff}}=\pm\pi/4$ giving the same energy, hence a non-smooth kink in $E_{4}(\theta)$. All such cusps occur at $|\theta|=(2\ell+1)\pi/4$ (vertical dotted lines), between which $E_{4}(\theta)$ follows a single analytic branch of the original function. The projection bound $|\bar{\theta}| \le \pi/N$ is visually evident: the effective vacuum angle $\bar{\theta}=\theta_{\mathrm{eff}}$ (the horizontal distance from the origin to the rightmost point of the black curve) is constrained to the fundamental domain $[-\pi/N,\pi/N]$.}
\label{fig:envelope1}
\end{figure}

Hence, gauging the $\mathbb{Z}_N$ shift symmetry forces the theory onto the vacuum branch closest to $\theta=0$. 
This ensures that the physical CP angle $\bar{\theta}$ is at most $\pi/N$ in magnitude. 
This result holds nonperturbatively and is protected by the discrete gauge symmetry, which forbids any continuous interpolation between branches. 
As $N$ increases, the bound~\eqref{theta_bound} becomes more restrictive. 
In fact, in the $N\to\infty$ limit one recovers $\bar{\theta}\to 0$ as the unique selection, solving the strong CP problem without requiring an axion. 
We stress that the projection mechanism is robust against radiative or gravitational corrections, since $E_N(\theta)$ remains the minimum of $E(\theta+2\pi m/N)$ to all orders. 
Any perturbation affecting $E(\theta)$ simply feeds into each branch equally and cannot upset the minimum selection rule. 
Thus, under the mild and well-motivated assumptions of periodicity, evenness, and convexity of $E(\theta)$, all supported by large-$N_c$ and lattice evidence and by general anomaly arguments, the discrete-$\theta$ projection yields a nonperturbative guarantee that $|\bar{\theta}| \le \pi/N$. 
This is the desired bound, which completes the proof~\cite{Witten:1998uka,Veneziano:1979ec,DiVecchia:1980yfw,Gaiotto:2014kfa,Cordova:2019uob,Hason:2020yqf}.

\section{Topological EFT details}
\label{app:teft}

We begin with the extended topological action that couples a new 1-form and 2-form gauge field to QCD:
\begin{equation}
S \;=\; \frac{N}{2\pi}\,\int B\wedge d a \;+\; \frac{1}{2\pi}\,\int a\wedge C_{\mathrm{QCD}}\,. 
\end{equation} 
Here $a$ is a compact $U(1)$ one-form gauge field and $B$ is a compact two-form gauge field (sometimes called a $B$-field or gerbe connection). 
The gauge transformations are $a\to a+d\lambda$ for $\lambda$ a $2\pi$-periodic $0$-form and $B\to B+d\Lambda$ for $\Lambda$ a $2\pi$-periodic $1$-form \cite{Gaiotto:2014kfa,HsinLam2021}. 
Varying $B$ enforces the equation of motion $\frac{N}{2\pi}d a=0$, which implies that $d a$ is a flat 2-form field strength. 
In a topologically nontrivial sector this means that $d a$ represents a cohomology class of finite order. 
Since $a$ is compact, its flux is quantized in integer units, so $\frac{1}{2\pi}\int_{\Sigma_2}d a\in\mathbb{Z}$ for any closed 2-surface $\Sigma_2$. 
Together with $N\,d a=0$ we infer that $\frac{1}{2\pi}\int_{\Sigma_2}d a$ is in fact an integer multiple of $\frac{1}{N}$. 
In other words, $[d a]$ lies in the torsion subgroup of $H^2(M,\mathbb{Z})$ of order $N$: the field strength $d a$ is not only closed but also $\mathbb{Z}_N$ quantized in such a way that $N\,[d a]=0$ in cohomology. 
Thus integrating out $B$ restricts the path integral to field configurations where $d a$ is a discrete $\mathbb{Z}_N$ flux. 
Equivalently, $d a$ is a 2-form $\mathbb{Z}_N$ gauge field. 
This achieves a discrete quantization of the $a$-field field strength and reduces the continuous $\theta$-parameter freedom to a discrete set of effective angles, as we will see in detail below \cite{Tanizaki:2018wtg}. 

The action $S$ is invariant under the local gauge symmetries of $a$ and $B$ up to boundary terms. 
Under $a\to a+d\lambda$, one finds $B\wedge d a\to B\wedge d a + B\wedge d^2\lambda = B\wedge d a$, since $d^2\lambda=0$, and $a\wedge C_{\mathrm{QCD}}\to a\wedge C_{\mathrm{QCD}} + d\lambda\wedge C_{\mathrm{QCD}}$. 
The latter variation is a total derivative because $d\lambda\wedge C_{\mathrm{QCD}} = d(\lambda\,C_{\mathrm{QCD}}) - \lambda\,dC_{\mathrm{QCD}}$, and $dC_{\mathrm{QCD}}$ is (by definition of $C_{\mathrm{QCD}}$) proportional to the topological charge density $\mathrm{Tr}(F\wedge F)$, which integrates to $0$ on a closed manifold or yields a boundary term on manifolds with boundary. 
Similarly, under the 1-form gauge transformation $B\to B+d\Lambda$, the variation $S\to S+\frac{N}{2\pi}\int d\Lambda\wedge d a$ equals $\frac{N}{2\pi}\int d(\Lambda\wedge d a)$, a surface term which vanishes for appropriate boundary conditions. 
These gauge invariances ensure that $a$ and $B$ enter only through $d a$ and $dB$, so that only the integer cohomology classes of their field strengths have physical significance, as expected for compact electrodynamic form fields. 

More subtle is the behavior under large gauge transformations of the QCD gauge field, which shift the Chern-Simons 3-form $C_{\mathrm{QCD}}$ by an exact piece. 
If $g$ is a gauge transformation in $SU(N)$ of winding number $1$, that is a transformation carrying one unit of topological charge, then the Chern-Simons term changes as $C_{\mathrm{QCD}}\to C_{\mathrm{QCD}}+d\Omega_2$, where $\Omega_2$ is a 2-form defined on coordinate patches that encodes the transition function for the Chern–Simons 3-form \cite{Alvarez-Gaume:1983ihn}. 
By construction $d\Omega_2$ represents the generator of $H^4(M,\mathbb{Z})$ corresponding to one unit of instanton number. 
More concretely, $d\Omega_2$ integrates to $2\pi$ on a 4-volume whose boundary is a 3-sphere on which $g$ is defined, reproducing the $2\pi\nu$ shift of the $\theta$-term for an instanton number $\nu=1$ configuration. 
Under this large gauge transformation $g$, the second term in $S$ varies as 
\begin{equation}
\Delta S \;=\; \frac{1}{2\pi}\int a\wedge d\Omega_2 \;=\; \frac{1}{2\pi}\int d(a\wedge \Omega_2)\;-\;\frac{1}{2\pi}\int d a\wedge \Omega_2\,.
\end{equation} 
The first term $\frac{1}{2\pi}\int d(a\wedge \Omega_2)$ is a boundary term, vanishing on closed spacetime or canceling between patches, while the second term gives the bulk contribution. 
Now, using the constraint from the $B$-field, we know that $d a$ is $\mathbb{Z}_N$ valued; in particular, $\frac{1}{2\pi}\int_{\Sigma_2}d a = k/N$ for some integer $k$ on any closed $\Sigma_2$. 
Taking $\Sigma_2$ to be a surface linking the worldvolume where $\Omega_2$ is supported, for example on a manifold $M=\mathbb{R}^4$ where $\Omega_2$ can be thought of as localized in the neighborhood of a three-sphere $S^3$ surrounding the instanton insertion and $\Sigma_2$ is any two-sphere linking that $S^3$, we have
\[
\int d a\wedge \Omega_2 = \left(\int_{\Sigma_2}d a\right)\left(\int_{S^2} \Omega_2\right) = \frac{2\pi k}{N}\cdot 1,
\]
since $\int_{S^2}\Omega_2=2\pi$ for a unit winding. 
Therefore $\Delta S = -\,\frac{1}{2\pi}\int d a\wedge \Omega_2 = -\,\frac{2\pi k}{N}$. 
The path integral thus acquires a phase $e^{i\Delta S} = e^{-2\pi i k/N}$. 
For a single instanton, that is for $k=1$ in the minimal sector where $\frac{1}{2\pi}\int d a = 1/N$, this phase is $e^{-2\pi i/N}$, which is a nontrivial $N$th root of unity. 
This indicates that under the large $SU(N)$ gauge transformation $g$, the functional integrand is not strictly invariant but picks up a phase. 
However, this phase precisely reflects the fact that we have promoted the shift $\theta\to\theta+\frac{2\pi}{N}$ to a gauge symmetry. 
In the original theory without the $a,B$ fields, a $2\pi$ shift of the $\theta$ angle is an invariance because the instanton number $Q\in\mathbb{Z}$. 
In an $SU(N)/\mathbb{Z}_N$ theory where $Q$ can be fractional (in units of $1/N$), a $2\pi$ shift is no longer invariant and only a $2\pi N$ shift would be. 
This is exactly why we have introduced the discrete gauge symmetry in the first place. 
The phase $e^{-2\pi i/N}$ is the hallmark of this discrete gauge symmetry. 
It says that performing a large $SU(N)$ gauge transformation (instantonic transition) is equivalent, in the enlarged theory, to performing a gauge transformation in the $0$-form $\mathbb{Z}_N$ symmetry (shifting $\theta$ by $2\pi/N$) that compensates for this phase. 

In the path integral one must sum over all gauge-equivalent field configurations. 
Thus one must include not only the original sector but also the sector after an instanton accompanied by the $2\pi/N$ shift, and so on. 
Summing over these $\mathbb{Z}_N$ related sectors, an operation often described as an orbifold sum over $\theta$-angles, yields 
\begin{equation}
Z_{\mathrm{orbifold}} \;=\; \frac{1}{N}\sum_{m=0}^{N-1} Z\!\Big(\theta + \frac{2\pi m}{N}\Big)\,,
\end{equation} 
so that the only contributions which survive are those invariant under $\theta\to\theta+\frac{2\pi}{N}$. 
Equivalently, writing the original partition function as $Z(\theta)=\sum_Q e^{i\theta Q}\,Z_Q$ where $Q\in\mathbb{Z}$ labels the total topological charge sector, the orbifold-projected partition function is 
\begin{equation}
Z_{\mathrm{orbifold}} \;=\; \frac{1}{N}\sum_{m=0}^{N-1}\sum_{Q\in\mathbb{Z}} e^{i(\theta+\frac{2\pi m}{N})Q}\,Z_Q \;=\; \sum_{Q\in\mathbb{Z}} e^{i\theta Q}\left(\frac{1}{N}\sum_{m=0}^{N-1}e^{2\pi i m Q/N}\right)Z_Q\,.
\end{equation} 
The inner sum $\frac{1}{N}\sum_{m=0}^{N-1}e^{2\pi i m Q/N}$ vanishes for any $Q$ not divisible by $N$, and equals $1$ if $Q=N\ell$ for some $\ell\in\mathbb{Z}$. 
Therefore
\[
Z_{\mathrm{orbifold}} = \sum_{\ell\in\mathbb{Z}} e^{i\theta (N\ell)}\,Z_{Q=N\ell}.
\]
In particular, all sectors with $Q\notin N\mathbb{Z}$ are projected out, establishing the selection rule $Q\in N\mathbb{Z}$. 
Physically, this means that any gauge-field configuration contributing to the path integral must have instanton number a multiple of $N$. 
This is an exact enforcement of “$\theta$ can only be $0$ mod $2\pi/N$” at the nonperturbative level. 
The surviving $\theta$ dependence is only through the combination $N\theta$. 
Indeed, under $\theta\to\theta+\frac{2\pi}{N}$ the theory is now fully gauge equivalent, so the fundamental domain of the effective vacuum angle is $\theta_{\mathrm{eff}}\in[0,\frac{2\pi}{N})$. 
If CP is also imposed, mapping $\theta\to-\theta$, one may further restrict to $|\theta_{\mathrm{eff}}|\le \pi/N$. 
Thus the strong CP angle is rendered effectively $N$ times smaller. 
The theory has been discretely gauged such that $\theta$ and $\theta+\frac{2\pi}{N}$ describe the same vacuum. 
Consequently any $\theta$ within this range, in particular $\theta$ sufficiently close to $0$, is physically indistinguishable from $0$. 
This is the mechanism by which the discrete $\theta$-projection solves the strong CP problem without axions: the dangerous $\theta$ parameter is quantized to a small discrete subgroup and can naturally relax to zero (mod $2\pi/N$), eliminating CP-violating effects beyond the $\mathcal{O}(\theta^{2})$ level expected for $|\theta_{\mathrm{eff}}|\le \pi/N$. 

We now connect this BF-type construction to a more group-theoretic description in terms of the gauge structure $SU(N)/\mathbb{Z}_N$ and higher-form symmetries. 
Gauging the $\mathbb{Z}_N$ 1-form center symmetry of an $SU(N)$ gauge theory is known to produce an $SU(N)/\mathbb{Z}_N$ theory. 
In that description, the $SU(N)$ principal bundle is modified by a $\mathbb{Z}_N$ valued second Stiefel–Whitney class $w_2$ that obstructs lifting to an $SU(N)$ bundle \cite{Gaiotto:2014kfa}. 
In the continuum, this procedure introduces a 2-form gauge field $B$ for the center and a coupling $\frac{N}{2\pi}\int B\wedge \mathrm{Tr}(F)$, which is essentially the first term of $S$ above (here $\mathrm{Tr}(F)$ is the generator of $H^2$ for the center symmetry current) \cite{Gaiotto:2014kfa,Hsin:2020nts}. 
However, in a pure Yang-Mills theory one may also consider a $\mathbb{Z}_N$ discrete $\theta$ angle: a coupling of the form $i\,p\int w_2\smile w_2$ in the action (with $p\in\{0,1,\dots,N-1\}$) which weights gauge field configurations according to the $\mathbb{Z}_N$ characteristic class $w_2$. 
Equivalently, one can use a 4d topological term $\frac{2\pi p}{N}\int B\wedge B$ when one uses a 2-form $B$ \cite{Tanizaki:2018wtg,Hsin:2020nts}. 
Different choices of this discrete $\theta$ parameter $p$ define physically distinct $SU(N)/\mathbb{Z}_N$ theories that are not continuously connected. 

In our construction, the second term $\frac{1}{2\pi}\int a\wedge C_{\mathrm{QCD}}$ together with the BF coupling can be understood as a concrete gauge-theoretic realization of such a discrete $\theta$ term. 
Integrating out $a$ yields $dB=-\frac{1}{N}C_{\mathrm{QCD}}$, or equivalently $N\,dB + C_{\mathrm{QCD}}=0$. 
Taking an exterior derivative, this condition becomes $N\,d^2B + dC_{\mathrm{QCD}}=N\,d^2B + \mathrm{Tr}(F\wedge F)=0$. 
This is precisely the modified Bianchi identity that enforces that the fractional instanton number carried by an $SU(N)/\mathbb{Z}_N$ bundle, given by $\frac{1}{8\pi^2}\int \mathrm{Tr}(F\wedge F)$, is correlated with the $\mathbb{Z}_N$ 2-form flux $dB$ in such a way that the total topological charge is an integer. 
In fact, one can show that $e^{iS}$ is exactly the partition function of an $SU(N)/\mathbb{Z}_N$ theory with a particular discrete $\theta$ choice, specifically the one that ensures a $\mathbb{Z}_N$ identification of $\theta$ vacua \cite{HsinLam2021}. 

This equivalence is encapsulated in the language of higher-form symmetries and their extensions. 
The theory exhibits an interplay between a 0-form $\mathbb{Z}_N$ symmetry, the discrete shift of $\theta$, and a 1-form $\mathbb{Z}_N$ symmetry, the center symmetry. 
The BF coupling introduces a 2-group structure that unifies these into a single algebraic framework \cite{Cordova:2019uob,Hsin:2020nts}. 
The Postnikov class that classifies this 2-group extension is an element of $H^3(B\mathbb{Z}_N,\mathbb{Z}_N)\cong \mathbb{Z}_N$, which can be identified with the integer $N$ (mod $N$) in our construction. 
Concretely, this means that performing a 0-form transformation (shifting $\theta$ by $2\pi/N$) followed by a 1-form transformation (acting on a Wilson loop by a center element) is not exactly the same as doing them in the opposite order. 
Instead, the difference is a nontrivial transformation in the overlap, quantified by the cohomology class in $H^3(\mathbb{Z}_N,\mathbb{Z}_N)$. 
This is the hallmark of a two-group: the 0-form and 1-form symmetries are fused by an extension, here protected by the discrete topological term. 

In physical terms, the mixed anomaly that would otherwise relate $\theta$-dependence and center symmetry, for example the well-known obstruction to having both CP symmetry and unbroken center symmetry at $\theta=\pi$ when $N$ is even, as discussed by Gaiotto and collaborators and others \cite{Gaiotto:2014kfa,Cordova:2019uob}, is canceled by the inflow from the bulk five-dimensional topological action that one can associate to $S$. 
Our BF term can be seen as deriving from a five dimensional protected topological phase, an invertible TQFT, with action $\frac{N}{2\pi}\int B\wedge d a$ whose boundary is the four dimensional gauge theory, analogous to how a Green-Schwarz mechanism cancels continuous anomalies via a five dimensional Chern-Simons term \cite{Alvarez-Gaume:1983ihn}. 
The quantization of the Postnikov class ensures that all gauge transformation phases $e^{-2\pi i/N}$ are accounted for and made harmless by the enlarged gauge symmetry. 

The integration of the $B$-field forces $d a$ to be an $N$ torsion flux, which in turn guarantees that large $SU(N)$ gauge transformations shift $\theta$ by $2\pi/N$ and produce a path-integral phase that is an $N$th root of unity. 
By interpreting this phase as a gauge redundancy and summing over it, that is performing the orbifold projection, the theory allows only configurations with total instanton number $Q$ divisible by $N$ and identifies all $\theta$ angles differing by $2\pi/N$. 
This implements the discrete $\theta$-gauging condition: effectively, $\theta$ becomes an angular variable with period $2\pi/N$. 
The physical $\theta_{\mathrm{eff}}$ is confined to $[-\pi/N,\pi/N]$, and CP is automatically preserved when $\theta_{\mathrm{eff}}=0$, since $\theta_{\mathrm{eff}}$ can only vanish modulo $2\pi$ in the gauged theory. 
The entire construction is dynamically neutral, because it involves no propagating light fields and only topological degrees of freedom. 
It is equivalent to formulating QCD as an $SU(N)/\mathbb{Z}_N$ gauge theory with a suitable discrete topological term. 
The bound $|\theta_{\mathrm{eff}}|\le \pi/N$ arises naturally as the fundamental domain of the orbifolded $\theta$. 
Thus a small effective $\theta$, solving the strong CP problem, is enforced by the theory’s topological consistency and gauge structure rather than by fine-tuning.

\section{Lattice implementation strategy}
\label{app:lattice}

We discretize the four dimensional Euclidean spacetime on a hypercubic lattice and introduce two new gauge fields. 
The first is an Abelian one form $a_\ell$ on each oriented link $\ell$ (representing a $U(1)$ gauge connection). 
The second is a discrete two form $B_p$ on each plaquette $p$ (representing a $\mathbb{Z}_N$ “BF’’ flux variable). 
In continuum language, $a$ and $B$ are lattice 1- and 2-cochains, respectively. 
The variable $a_\ell \in [0,2\pi)$ is a compact $U(1)$ phase on link $\ell$, and $B_p \in \{0,1,\dots,N-1\}$ is an integer valued $\mathbb{Z}_N$ flux on plaquette $p$. 

Their gauge transformations are defined as follows. 
For any 0-form gauge function $\Lambda_x \in \mathbb{R}/2\pi\mathbb{Z}$ on sites $x$, the 1-form $a$ transforms as 
\begin{equation}
a_{\ell: x\to y} \to a_{\ell} + \Lambda_y - \Lambda_x,
\end{equation}
which is analogous to a usual $U(1)$ lattice gauge field. 
For any 1-form gauge parameter $\Xi_\ell \in \{0,\dots,N-1\}$ (an integer on each link), the 2-form $B$ transforms as 
\begin{equation}
B_p \to B_p + \sum_{\ell\in\partial p}\Xi_\ell \quad (\text{mod } N),
\end{equation}
that is, the $\mathbb{Z}_N$ plaquette variable shifts by the integer sum of $\Xi$ on its bounding links. 
By construction, these gauge symmetries ensure that $a$ and $B$ only appear in the action through invariant combinations (essentially the lattice field strength and its dual). 
In particular, the BF coupling introduced below will be exactly invariant under $a\to a+d\Lambda$ and $B\to B+\delta\Xi$, where $d$ and $\delta$ are the lattice exterior derivative and coboundary operators, acting as $\partial$ on chains. 

We now formulate the lattice BF action that couples $a$ and $B$. 
Following the continuum prescription for a $B_2\wedge f_2$ term $\frac{iN}{2\pi}\int B\wedge d a$ (with $N$ a normalization chosen for later convenience), we define the discrete action as a sum over three cells (elementary cubes) $c$ of the lattice:
\begin{equation}
S_{BF} \;=\; \frac{2\pi i}{N}\;\sum_{c}\;B(\partial c)~,
\label{eq:C1}
\end{equation}
where $B(\partial c)\equiv \sum_{p\in \partial c} B_p$ is the total $\mathbb{Z}_N$ flux through the faces of the three cell $c$. 
In algebraic topology terms, if $B$ is viewed as a 2-cochain on the lattice, then $B(\partial c)=\langle B,\partial c\rangle$ is the natural pairing of $B$ with the boundary of the three chain $c$. 
Equivalently, using the lattice coboundary operator $\delta$ (dual to $\partial$), one can write
\begin{equation}
S_{BF}=\frac{2\pi i}{N}\langle B,\delta a\rangle,
\end{equation}
since $\delta a$ is a 2-cocycle on plaquettes that captures the lattice field strength of $a$. 
This $S_{BF}$ is the exact lattice analog of the continuum topological action $S_{\text{BF}}=\frac{iN}{2\pi}\int B\wedge d a$. 
Expanding $B(\partial c)$ in local coordinates, one finds 
\begin{equation}
B(\partial c)\approx \frac{1}{4}\epsilon^{\mu\nu\rho\sigma}B_{\mu\nu}(\partial_\rho a_\sigma)
\end{equation}
on each cell (with $\epsilon^{\mu\nu\rho\sigma}$ the Levi-Civita symbol), so that 
\begin{equation}
\sum_c B(\partial c)\to \int d^4x\,B_{\mu\nu}\partial_{\rho}a_{\sigma}\,\epsilon^{\mu\nu\rho\sigma}=\int B\wedge d a
\end{equation}
in the continuum limit. 
The factor $2\pi/N$ in \eqref{eq:C1} is chosen such that $B_p$ being an integer reproduces the standard $2\pi$ periodicity of the $B$-field in the continuum action. 
Thus, $S_{BF}$ provides a gauge invariant and dimensionless lattice implementation of the $B_2\wedge f_2$ coupling. 
By inspection, $S_{BF}$ is invariant under the 1-form $a$ gauge symmetry. 
Any shift $a\to a+d\Lambda$ contributes $\delta a\to\delta a$ and hence $B(\partial c)$ is unchanged, since $\sum_{p\in\partial c}(\delta\Lambda)_p=0$ for a closed surface. 
The action is also invariant under the 2-form $B$ gauge symmetry. 
Any shift $B\to B+\delta\Xi$ yields $B(\partial c)\to B(\partial c)+(\delta^2\Xi)(c)$, and $\delta^2=0$ on any three chain so the action is unchanged. 
These invariances confirm that our lattice BF term correctly gauges a discrete $\mathbb{Z}_N$ 1-form symmetry, as expected from continuum analyses \cite{Gaiotto:2014kfa,Kapustin:2013uxa}. 

The BF term \eqref{eq:C1} acts as a Lagrange multiplier that enforces a quantization condition on the $U(1)$ field $a$. 
Because $B_p$ enters $S_{BF}$ linearly, summing over the $\mathbb{Z}_N$ variable $B_p$ imposes a delta function constraint. 
In fact, for each three cell $c$ one has 
\begin{equation}
\frac{1}{N}\sum_{B(\partial c)=0}^{N-1} e^{\frac{2\pi i}{N}B(\partial c)\,k} \;=\; \delta_{k,\,0\;(\text{mod }N)}~,
\end{equation}
for any integer $k$. 
In the path integral, this means that only field configurations with $B(\partial c)\equiv 0\pmod N$ for every $c$ will contribute nonzero weight. 
But $B(\partial c)=\langle B,\delta a\rangle_c = \langle \delta B, a\rangle_c$ by lattice duality, which is analogous to integration by parts, using $\langle B,\delta a\rangle=\langle \delta B,a\rangle$ on a periodic lattice. 
The equation of motion obtained by varying $a_\ell$ in $S_{BF}$ is therefore $\delta B=0$ (mod $N$) on every link. 
This means the $B$-field is a closed 2-form: $\delta B=0$ is the lattice Bianchi identity for $B$. 
It implies that the $\mathbb{Z}_N$ flux through any closed surface vanishes mod $N$, so $B$ can be regarded as the field strength of an emergent $\mathbb{Z}_N$ 1-form gauge field (sometimes called a 2-form gauge field in continuum parlance \cite{Gaiotto:2014kfa}). 
Dually, the equation of motion obtained by summing over $B_p$ (or varying $B_p$ viewed as an integer Lagrange multiplier) enforces $\delta a \equiv 0\pmod{2\pi/N}$ on each plaquette. 
In other words, the $U(1)$ holonomy around every plaquette is an integer multiple of $2\pi/N$. 
The $a$-field flux is quantized in units of $2\pi/N$. 
We can write this constraint as
\begin{equation}
\delta a \;=\; \frac{2\pi}{N}\,m_{p}~, \qquad m_{p}\in\mathbb{Z}
\label{eq:C2}
\end{equation}
for each plaquette $p$. 
Thus $d a = 0$ in the continuum sense, except that configurations carrying fractional $U(1)$ flux (fractions of the $2\pi$ quantum) are permitted when $m_{p}\neq 0$. 
This is the hallmark of gauging a $\mathbb{Z}_N$ subgroup of a $U(1)$ gauge theory. 
The field strength is no longer unconstrained, as in a pure $U(1)$ theory, but must take values in the subgroup $\{0,\frac{2\pi}{N},\frac{4\pi}{N},\dots\}$, implementing a discrete flux quantization. 
In our context, since $a$ will couple to the QCD topological charge, the condition \eqref{eq:C2} ensures that only bundles with total flux divisible by $N$ are summed over, a mechanism essential to the $\theta$-angle projection. 
We emphasize that $S_{BF}$ by itself is a discrete topological action, a $\mathbb{Z}_N$ two form gauge theory, also known as a $\mathbb{Z}_N$ SPT phase in four dimensions, with no dependence on the metric or local dynamics \cite{Kapustin:2013uxa}. 
Its effect is to enforce a $\mathbb{Z}_N$ flux constraint on $a$, which can be viewed as a coupling of $a$ to a background $\mathbb{Z}_N$ 1-form symmetry current \cite{Gaiotto:2014kfa}. 
This $\mathbb{Z}_N$ gauged sector is what will enforce the discrete $\theta$-periodicity on the QCD vacuum. 

Next, we incorporate the QCD gauge fields and the $\theta$ term into this lattice framework. 
Let $U_\ell \in \mathrm{SU}(N_c)$ be the usual lattice gauge link variables for color gauge theory (we suppress color indices), and let $S_{\text{YM}}[U]$ be the Wilson action (or any suitable lattice action) for the Yang–Mills field. 
The topological term in the continuum is $S_\theta = i\,\theta\,Q = i\,\theta \int d^4x\,q(x)$, where $q(x)=\frac{1}{16\pi^2}\Tr(F_{\mu\nu}\widetilde{F}^{\mu\nu})$ is the topological charge density and $Q=\int q(x)d^4x\in \mathbb{Z}$ is the instanton number. 
On the lattice, an exactly gauge-invariant local definition of $q(x)$ is challenging, but $Q$ can be defined globally (for example by cooling or through the index of the Dirac operator) such that in the continuum limit it approaches the continuum $Q$ (for reviews, see \cite{Luscher:1981zq,Creutz:1983njd}). 
We do not need the explicit form of $q(x)$ on the lattice. 
Instead, we introduce a Chern–Simons three form $K(c)$ on each elementary three cell $c$, which serves as a local representative (potential) for the topological charge. 
In continuum terms, $K_{\text{cont}}(x)$ is a three form with $dK = \Tr(F\wedge F)/8\pi^2 = q(x)\,d^4x$. 
On the lattice, we require that the coboundary of the three cochain $K(c)$ yields the topological charge on the surrounding four cell(s). 
Concretely, for each elementary four dimensional hypercube (four cell) $C$ in the lattice, we demand
\begin{equation}
\delta K (C) \;=\; \sum_{c\,\subset\,\partial C} K(c) \;=\; q(C)~,
\label{eq:C3}
\end{equation}
where $q(C)$ is an integer-valued lattice topological charge assigned to that four cell, such that $\sum_C q(C)=Q$ over the whole lattice. 
The existence of such an integer $q(C)$ and of $K(c)$ is guaranteed for smooth gauge fields in the continuum limit by the fact that $\Tr(F\wedge F)$ is exact (the instanton number can be seen as the difference of Chern-Simons invariants of the three dimensional boundaries). 
In practice one can construct $K(c)$ from the plaquettes and links of the cube $c$, for example by a lattice transcription of the continuum Chern-Simons current
\begin{equation}
K_\mu=\frac{1}{8\pi^2}\epsilon_{\mu\nu\rho\sigma}\Tr\!\left[A_\nu F_{\rho\sigma} - \frac{i}{3}A_\nu A_\rho A_\sigma\right]
\end{equation}
in an axial gauge \cite{Luscher:1981zq}. 
We assume $K(c)$ has been chosen such that $\delta K(C)=q(C)$ exactly. 
This can always be done up to trivial gauge transformations on the boundaries of four cells, which will be accounted for by $a$-field gauge freedom shortly. 

With these ingredients, we can now couple the $U(1)$ field $a_\ell$ to the QCD topological density in a gauge invariant way. 
We add to the action a term
\begin{equation}
S_{\text{top}} \;=\; i \sum_{C} a(\partial C)\;K(C)~,
\label{eq:C4}
\end{equation}
where the sum is over all four cells $C$ and $a(\partial C)$ denotes the circulation of the 1-form $a$ around the edges of the four cell $C$. 
More precisely, if $\partial C$ is the closed three dimensional boundary of the hypercube $C$, consisting of three cells $c$, and if those three cells have oriented link loops $\ell\in \partial c$, then $a(\partial C)\equiv \sum_{\ell\in \partial C} a_\ell$ is the total $a$-flux along the boundary. 
This is well defined since $\partial(\partial C)=0$. 
The form \eqref{eq:C4} is a discretized version of the coupling $\int a\wedge \frac{\Tr(F\wedge F)}{8\pi^2}$ in the continuum. 
To see this, note that on each four cell $C$, $\sum_{\ell\in \partial C} a_\ell$ is analogous to $\int_{\partial C} a = \int_C d a$, and $\sum_{C}K(C)$ with $\delta K = q$ is analogous to $\int \Tr(F\wedge F)$. 
Thus $S_{\text{top}} = i\int a\wedge (\Tr(F\wedge F)/8\pi^2)$ in the continuum limit. 
Importantly, $S_{\text{top}}$ is invariant under all gauge symmetries, provided we accompany the usual $\theta$-shift with the appropriate transformation of $a$. 

We now verify the gauge invariances in turn. 
Under an $\mathrm{SU}(N_c)$ gauge transformation, $K(c)$ shifts by a coboundary while $q(C)$ remains invariant, and large gauge transformations shift $Q$ by integers. 
This shift is compensated by large $\mathbb{Z}_N$ transformations of $a$ of the form $a_\ell\to a_\ell+2\pi/N$. 
The shift of $S_{\text{top}}$ is $\Delta S_{\text{top}}=i(2\pi/N)Q$, which cancels a $\theta$ shift $\Delta S_\theta=-i(2\pi/N)Q$ for $\Delta\theta=2\pi/N$. 
Hence, the transformation $\theta\to\theta+2\pi/N$, $a_\ell\to a_\ell+2\pi/N$ leaves the full action invariant. 
This effectively gauges the discrete $\theta$-shift symmetry and reduces the physical periodicity to $2\pi/N$. 
Similarly, small 0-form gauge transformations $a\to a+d\Lambda$ shift $S_{\text{top}}$ by boundary terms that are integer multiples of $2\pi i$, leaving the Boltzmann weight unchanged. 
The term $S_{\text{top}}$ is trivially invariant under $B\to B+\delta\Xi$, since $S_{\text{top}}$ does not depend on $B$. 
Therefore, the total lattice action 
\begin{equation}
S_{\text{YM}}+S_{\theta}+S_{BF}+S_{\text{top}}
\end{equation}
preserves $\mathrm{SU}(N_c)$, 1-form, and 2-form gauge invariance exactly. 

The full partition function reads
\begin{multline}
Z(\theta) = \frac{1}{N^{N_p}}\sum_{\{B_p\}\in (\mathbb{Z}_N)^{N_p}}\;\int [DU][Da]\;\exp\Big\{-S_{\mathrm{YM}}[U] - i\,\theta \sum_{C} q(C)\;+\;\\ S_{BF}[B,a] \;+\; S_{\text{top}}[a,K(U)]\Big\}~,
\label{eq:C5}
\end{multline}
where $N_p$ is the number of plaquettes. 
Performing the sums over $B_p$ and the integrals over $a_\ell$ enforces the constraints $\delta a = \frac{2\pi}{N}m_p$ and $\delta B=0$, ensuring that only values of $Q$ divisible by $N$ contribute. 
This leads to
\begin{equation}
Z(\theta) \;=\; \frac{1}{N}\sum_{m=0}^{N-1} \; Z_{\text{QCD}}(\theta + 2\pi m/N)~,
\label{eq:C6}
\end{equation}
where $Z_{\text{QCD}}(\theta)$ is the partition function of ordinary Yang–Mills theory at vacuum angle $\theta$. 
Equation \eqref{eq:C6} expresses the orbifold sum that projects the theory onto $\theta$-sectors modulo $2\pi/N$. 
The gauging of the discrete $\theta$-shift symmetry thus manifests as a sum over $N$ shifted images of $\theta$, removing redundant branches of the vacuum. 

From \eqref{eq:C6}, the vacuum energy density follows as
\begin{equation}
E(\theta) \;=\; \min_{m\in\mathbb{Z}}\;E_{\text{QCD}}(\theta + 2\pi m/N)~,
\label{eq:C7}
\end{equation}
which represents the multi-branched energy envelope. 
For $|\theta|\le \pi/N$, $E(\theta)=E_{\text{QCD}}(\theta)$, and for larger $\theta$ the theory transitions between adjacent branches at $\theta=(2\ell+1)\pi/N$. 
Expanding $E(\theta)$ for small $\theta$ gives
\begin{equation}
E(\theta) \;\approx\; E(0)\;+\;\frac{1}{2}\,\chi_N\,\theta^2 \;+\; \mathcal{O}(\theta^4)~, \qquad 
\chi_N \;=\; \frac{\chi_t}{N^2}~,
\label{eq:C8}
\end{equation}
where $\chi_t=\partial^2 E_{\text{QCD}}/\partial\theta^2|_{\theta=0}$ is the conventional Yang–Mills topological susceptibility. 
The factor $1/N^2$ reflects the suppression of topological fluctuations due to the projection. 
The $\mathbb{Z}_N$ gauging requires $Q$ to appear in multiples of $N$, exponentially suppressing instantons of charge below $N$ and flattening the $\theta$ potential near $\theta=0$. 
Consequently, CP violation observables such as the neutron EDM are suppressed by $\sim 1/N$, yielding an effective dynamical relaxation of $\bar\theta$ toward zero. 

The lattice theory defined by $S_{\mathrm{YM}}[U] + S_{BF}[B,a] + S_{\text{top}}[a,K]$ provides a concrete implementation of discrete $\theta$-projection without axions. 
By gauging the $\theta$-shift symmetry $\mathbb{Z}_N$, we explicitly derived how the path integral sums over $\theta$-images to enforce $Q\in N\mathbb{Z}$ and to select the CP-conserving branch of the vacuum ($m=0$) nonperturbatively. 
The correspondence between the lattice action \eqref{eq:C1} and \eqref{eq:C4} and its continuum counterpart $i\frac{N}{2\pi}\int B\wedge d a + i\int a\wedge\frac{F\tilde F}{8\pi^2}$ is manifest in our construction, including the normalization of $B$ and $a$ to yield the standard $\theta$ periodicity. 
All gauge invariances, including the new 1-form and 2-form symmetries, are preserved exactly on the lattice, protecting the discrete topological term from any local quantum corrections. 
The extraction of $E(\theta)$ from the partition function via $E(\theta) = -\frac{1}{V}\ln Z(\theta)$ reproduces the expected orbifold structure and the $1/N^2$ suppression of $\chi_N$. 
These results demonstrate that coupling QCD to a $\mathbb{Z}_N$ two form gauge field via a BF term and a discrete topological interaction nonperturbatively projects the theory onto the $\theta$-vacuum nearest zero, achieving an exponentially small $\bar\theta$ without introducing an axion.

\section{Discrete clockwork for large $N$}

We now give a complete derivation of the clockwork enhancement of the discrete $\theta$ gauging advertised in the main text. 
We start from the multi-sector topological action
\begin{equation}
\label{eq:D1}
S \;=\; \sum_{i=1}^{K}\frac{N_i}{2\pi}\int_{M_4} B_i\wedge \mathrm{d}a_i \;+\; \frac{1}{2\pi}\int_{M_4}\big(a_1+q_2 a_2+\cdots+q_K a_K\big)\wedge C_{\rm QCD},
\end{equation}
where $a_i$ are compact $U(1)$ one-form gauge fields, $B_i$ are compact two-form gauge fields, $C_{\rm QCD}$ is the Chern–Simons three form of QCD with $\mathrm{d}C_{\rm QCD}= {\rm Tr}\,F\wedge F$, and $N_i,q_i\in\mathbb{Z}_{>0}$. 
Throughout, integrals are over a closed, oriented four-manifold $M_4$ and wedge products are implicit. 
The first set of terms implements $K$ independent $\mathbb{Z}_{N_i}$ one-form gauge symmetries (discrete BF theories), while the second term couples the diagonal linear combination
\begin{equation}
a_{\rm eff}\;\equiv\;a_1+q_2 a_2+\cdots+q_K a_K
\end{equation}
to QCD. 
This is the discrete, higher-form analogue of clockwork couplings familiar from axion and $p$-form constructions \cite{KaloperSorbo2009}, now embedded in the generalized symmetry framework \cite{Gaiotto:2014kfa,HsinLam2021,Cordova:2019uob}. 

Varying \eqref{eq:D1} we obtain, for each sector $i$,
\begin{equation}
\label{eq:EOMs}
\begin{aligned}
\delta_{B_i}S: N_i\,\mathrm{d}a_i=0,\\
\delta_{a_i}S: N_i\,\mathrm{d}B_i+\alpha_i\,C_{\rm QCD}=0,\\
\alpha_1=1,\;\alpha_i=q_i\;(i\ge2).
\end{aligned}
\end{equation}
Thus $a_i$ are flat $U(1)$ connections whose field strengths $f_i\equiv \mathrm{d}a_i$ represent $\mathbb{Z}_{N_i}$ valued torsion classes. 
On any closed two cycle $\Sigma_2$ one has
\begin{equation}
\label{eq:fluxquant}
\oint_{\Sigma_2} f_i \;=\; \frac{2\pi}{N_i}\,\ell_i,\qquad \ell_i\in\mathbb{Z}\;\;(\mathrm{mod}\,N_i).
\end{equation}
The second equation in \eqref{eq:EOMs} is the local constraint that enforces that $C_{\rm QCD}$ is trivial in cohomology modulo all $N_i$, in a manner that we make precise below. 


Consider a large $\mathrm{SU}(3)_c$ gauge transformation of unit winding. 
Under such a transformation $C_{\rm QCD}\to C_{\rm QCD}+\mathrm{d}\Omega_2$ with $\int_{S^2}\mathrm{d}\Omega_2=2\pi$ on any $S^2$ linking the instanton insertion. 
Using \eqref{eq:fluxquant} and integrating by parts, the topological action \eqref{eq:D1} shifts by
\begin{equation}
\Delta S \;=\; \frac{1}{2\pi}\sum_{i=1}^K \alpha_i\int a_i\wedge \mathrm{d}\Omega_2
\;=\; -\,\sum_{i=1}^K \alpha_i \int f_i \frac{\Omega_2}{2\pi}
\;=\; -\,2\pi\sum_{i=1}^K \frac{\alpha_i\,\ell_i}{N_i},
\end{equation}
so the weight of a configuration in the instanton sector of topological charge $Q\in\mathbb{Z}$ acquires the phase
\begin{equation}
\exp\!\left[i Q\,\Delta S\right]\;=\;\exp\!\left(-\,i\,2\pi Q \sum_{i=1}^K \frac{\alpha_i\,\ell_i}{N_i}\right).
\end{equation}
The path integral sums over all discrete flux sectors $\ell_i\in\mathbb{Z}_{N_i}$. 
Using the elementary character identity
\begin{equation}
\frac1{N}\sum_{\ell=0}^{N-1}e^{-2\pi i r\,\ell/N}=1
\end{equation}
if and only if $N\mid r$ and $0$ otherwise, the multiple sum factorizes and enforces, for each $i$,
\begin{equation}
N_i\mid \alpha_i\,Q.
\end{equation}
At this stage the standard clockwork choice
\begin{equation}
\label{eq:qchoice}
q_i\;=\;\prod_{j=1}^{i-1}N_j\qquad (i\ge 2),
\end{equation}
together with the mild arithmetic hypothesis that the $N_i$ are pairwise coprime,\footnote{If the $N_i$ share common prime factors, the argument yields the order $N_{\rm eff}=\mathrm{lcm}(N_1,\dots,N_K)$. The coprime case, which we adopt henceforth for definiteness, maximizes the enhancement and gives the product. One can also engineer the product without pairwise coprimality by slightly enlarging the chain with auxiliary prime factors absorbed into the $q_i$; the Smith normal form analysis then again returns the product, see below.} implies $\gcd(\alpha_i,N_i)=1$ for all $i$. 
Hence $N_i\mid Q$ for all $i$, i.e.
\begin{equation}
\label{eq:NeffSelRule}
Q\in N_{\rm eff}\,\mathbb{Z},\qquad N_{\rm eff}\;\equiv\;\prod_{i=1}^{K} N_i.
\end{equation}
Equivalently, the discrete projector in the gauged path integral is
\begin{multline}
\frac{1}{\prod_i N_i}
\sum_{\ell_1=0}^{N_1-1}\cdots\sum_{\ell_K=0}^{N_K-1}
\exp\!\left(-i\,2\pi Q \sum_{i=1}^K \frac{\alpha_i \ell_i}{N_i}\right)
\;=\;\\
\frac{1}{N_{\rm eff}}\sum_{m=0}^{N_{\rm eff}-1} e^{-i\,2\pi Q m/N_{\rm eff}}
\;=\;
\begin{cases}
1, & Q\in N_{\rm eff}\mathbb{Z},\\[2pt]
0, & \text{otherwise},
\end{cases}
\end{multline}
where the last equality uses the Chinese remainder theorem to reparametrize the tuple $(\ell_1,\ldots,\ell_K)$ as a single $m$ modulo $N_{\rm eff}$. 
The discrete gauge group that survives is therefore a single diagonal subgroup
\begin{equation}
\label{eq:ZNeff}
\mathbb{Z}_{N_1}\times\cdots\times \mathbb{Z}_{N_K}\;\longrightarrow\; \mathbb{Z}_{N_{\rm eff}},
\end{equation}
acting by $\theta\mapsto \theta+2\pi/N_{\rm eff}$ on the QCD vacuum angle. 
This is the clockwork enhancement: the order grows multiplicatively with the length of the chain, $N_{\rm eff}=\prod_i N_i$, so modest microscopic inputs $N_i={\cal O}(10\text{ to }10^3)$ and $K={\cal O}(5\text{ to }10)$ generate exponentially large $N_{\rm eff}$. 

A complementary local view follows from \eqref{eq:EOMs}. 
The equations $N_i\,\mathrm{d}B_i+\alpha_i C_{\rm QCD}=0$ imply that $C_{\rm QCD}$ is exact in integral cohomology modulo each $N_i$. 
When the $N_i$ are pairwise coprime these conditions assemble to $C_{\rm QCD}=\mathrm{d}{\cal B}$ with $[\mathrm{d}{\cal B}]$ an $N_{\rm eff}$ torsion class, i.e. ${\rm Tr}\,F\wedge F=\mathrm{d}C_{\rm QCD}$ integrates to multiples of $2\pi N_{\rm eff}$ on $M_4$. 
Integrating over $M_4$ reproduces \eqref{eq:NeffSelRule}. 
In group-theoretic terms, the abelian lattice defined by the integer matrix with Smith normal form ${\rm diag}(N_1,\dots,N_K)$ and charge vector $(1,q_2,\dots,q_K)$ has cokernel of order $N_{\rm eff}$. 
This is precisely the order of the residual discrete gauge symmetry \cite{HsinLam2021,Cordova:2019uob,Gaiotto:2014kfa}. 


If each sector also carries a quantized mixed gravitational coupling
\begin{equation}
\label{eq:gravcoups}
S_{\rm grav}\;=\;\sum_{i=1}^{K}\frac{\kappa_{G,i}}{2\pi}\int_{M_4} a_i\wedge R,
\end{equation}
with $R$ the Lorentz–Chern–Simons three form ($\mathrm{d}R={\rm tr}\,{\cal R}\wedge{\cal R}$) and $\kappa_{G,i}\in\mathbb{Z}$ fixed by anomaly inflow \cite{Cordova:2019uob}, the equations of motion become $N_i\,\mathrm{d}B_i+\alpha_i C_{\rm QCD}+\kappa_{G,i}R=0$. 
Repeating the character-projection argument shows that only sectors satisfying
\begin{equation}
\label{eq:mixedSel}
N_i\mid \alpha_i\,(Q+\kappa_{G,i}Q_G)\qquad\text{for all }i
\end{equation}
contribute, where $Q_G\equiv \frac{1}{2\pi}\int_{M_4}\mathrm{d}R\in\mathbb{Z}$ is the Pontryagin index of the background metric. 
By the Chinese remainder theorem there exists a unique $\kappa_G\;(\mathrm{mod}\,N_{\rm eff})$ solving $\kappa_G\equiv \kappa_{G,i}\alpha_i^{-1}\,(\mathrm{mod}\,N_i)$, and \eqref{eq:mixedSel} is equivalent to the single diagonal selection rule
\begin{equation}
\label{eq:diagSel}
Q+\kappa_G\,Q_G\in N_{\rm eff}\,\mathbb{Z}.
\end{equation}
Thus the mixed gauge–gravity anomaly cancels both site by site and in the diagonal. 
The combination that couples to QCD and gravity is anomaly free in the infrared, as required by consistency \cite{Gaiotto:2014kfa,Cordova:2019uob,HsinLam2021}. 


The orbifold sum over $\theta$-images that is enforced by the residual $\mathbb{Z}_{N_{\rm eff}}$ symmetry projects the vacuum onto the branch closest to $\theta=0$ \cite{Witten:1998uka}. 
Denote by $E_{\rm YM}(\vartheta)$ the Yang–Mills vacuum energy at angle $\vartheta$. 
The physical energy is the lower envelope
\begin{equation}
E(\theta)\;=\;\min_{m\in\mathbb{Z}}\,E_{\rm YM}\!\left(\theta+\frac{2\pi m}{N_{\rm eff}}\right),
\end{equation}
so the dynamically selected effective angle $\theta_{\rm eff}=\mathrm{PV}\!\big(\theta+\frac{2\pi}{N_{\rm eff}}m_\ast(\theta)\big)$ always lies in the principal cell $[-\pi/N_{\rm eff},\pi/N_{\rm eff}]$ and obeys the nonperturbative bound
\begin{equation}
\label{eq:thetabound}
|\bar\theta_{\rm eff}|\;\le\;\frac{\pi}{N_{\rm eff}}.
\end{equation}
For small angles on a fixed branch one has $E_{\rm YM}(\vartheta)=\frac12\,\chi_t\,\vartheta^2+O(\vartheta^4)$ with $\chi_t=\int {\rm d}^4x\,\langle q(x)q(0)\rangle_c>0$ the topological susceptibility. 
Therefore the curvature of the envelope around the origin within the principal cell is that of the underlying branch. 
Globally, however, the envelope varies only over a window of width $2\pi/N_{\rm eff}$ in each cell. 
This flattening implies that all $\bar\theta$-dependent observables, such as the neutron EDM, are suppressed parametrically by the clockwork factor when $\bar\theta$ is scanned over the fundamental period \cite{Witten:1998uka}. 
A convenient way to summarize the scaling that controls global fits and lattice diagnostics is
\begin{equation}
\label{eq:chiscaling}
\chi_{N_{\rm eff}}\;\sim\;\frac{\chi_t}{N_{\rm eff}^2},
\end{equation}
which reflects the fact that the $\theta$-dependence of the projected theory is confined to an interval that is $N_{\rm eff}$ times narrower than $[-\pi,\pi]$ \cite{Witten:1998uka}. 
Equations \eqref{eq:thetabound}–\eqref{eq:chiscaling} form the large-$N_{\rm eff}$ cornerstone. 
The envelope becomes piecewise analytic with cusps at $\theta=(2\ell+1)\pi/N_{\rm eff}$, and the theory is virtually CP-conserving for any microscopic $\theta$ once $N_{\rm eff}\gg1$. 


Metastable branches correspond to choosing $m\neq m_\ast(\theta)$. 
Expanding $E_{\rm YM}$ to quadratic order near the origin gives the analytic energy gaps
\begin{equation}
\label{eq:metagaps}
\Delta E_k \;\equiv\; E_{\rm YM}\!\left(\frac{2\pi k}{N_{\rm eff}}\right)-E_{\rm YM}(0)\;=\;\frac12\,\chi_t\,\left(\frac{2\pi k}{N_{\rm eff}}\right)^2 + O\!\left(\frac{1}{N_{\rm eff}^4}\right),
\qquad \frac{\Delta E_k}{E(0)}\;\sim\;\frac{\pi^2 k^2}{2\,N_{\rm eff}^2},
\end{equation}
valid for fixed $k\ll N_{\rm eff}$. 
Adjacent branches are separated by membranes of codimension one, across which $\theta_{\rm eff}$ jumps by $2\pi/N_{\rm eff}$. 
The membrane worldvolume supports the anomaly inflow that is required by the discrete gauging and carries one unit of the diagonal topological charge. 
Dimensional analysis and matching to the single-sector result give the scaling
\begin{equation}
\label{eq:tension}
\sigma \;\sim\; N_{\rm eff}\,\Lambda_{\rm QCD}^3,
\end{equation}
which may be understood either from the BF description (the minimal charged membrane is a bound state of the $K$ site membranes) or from the large-$N_{\rm eff}$ envelope. 
The action for critical bubbles scales as $B\sim \sigma^3/\Delta E^2\sim N_{\rm eff}^7$, so that false vacuum decay is catastrophically suppressed for large $N_{\rm eff}$. 
This renders the selected ground state cosmologically safe. 
Equations \eqref{eq:metagaps} and \eqref{eq:tension} generalize the single-sector formulas to the clockwork-enhanced diagonal symmetry. 


The clockwork chain \eqref{eq:D1} with the choice \eqref{eq:qchoice} yields a single diagonal $\mathbb{Z}_{N_{\rm eff}}$ that couples to QCD, with $N_{\rm eff}=\prod_i N_i$. 
The topological charge projector becomes $Q\in N_{\rm eff}\mathbb{Z}$ (or $Q+\kappa_G Q_G\in N_{\rm eff}\mathbb{Z}$ in the mixed case), and the vacuum selection bound $|\bar\theta_{\rm eff}|\le\pi/N_{\rm eff}$ follows immediately. 
The large-$N_{\rm eff}$ limit flattens the envelope $E(\theta)$ and scales its curvature over the fundamental period as in \eqref{eq:chiscaling}. 
At the same time, the metastable spectrum and domain wall sector exhibit \eqref{eq:metagaps} and \eqref{eq:tension}. 
All steps are local, gauge-invariant, and compatible with generalized symmetries and anomaly inflow \cite{Gaiotto:2014kfa,HsinLam2021,Cordova:2019uob}. 
The nonperturbative vacuum selection reproduces the large-$N$ picture of $\theta$-dependence \cite{Witten:1998uka} in the presence of a discrete, gauge-protected orbifolding of the $\theta$-circle.

\begin{figure}[t]
\centering
\begin{tikzpicture}[>=latex,thick,scale=0.9]
  \tikzset{
    anode/.style={circle,draw=blue!60,fill=blue!10,minimum size=13pt,inner sep=2pt},
    bnode/.style={rectangle,draw=orange!80!black,fill=orange!10,minimum width=16pt,minimum height=11pt,inner sep=2pt}
  }

  \node[anode] (a1)  at (0,0)      {$a_1$};
  \node[anode] (a2)  at (3.5,0)    {$a_2$};
  \node[anode] (a3)  at (7.0,0)    {$a_3$};
  \node         (adots) at (10.5,0)   {$\cdots$};
  \node[anode] (aK)  at (14.0,0)   {$a_K$};

  \node[bnode] (b1)  at (0,-2.1)   {$B_1$};
  \node[bnode] (b2)  at (3.5,-2.1) {$B_2$};
  \node[bnode] (b3)  at (7.0,-2.1) {$B_3$};
  \node         (bdots) at (10.5,-2.1) {$\cdots$};
  \node[bnode] (bK)  at (14.0,-2.1) {$B_K$};

  \draw[<->,gray!70] (a1) -- node[right=4pt,text=gray!70] {$N_1$} (b1);
  \draw[<->,gray!70] (a2) -- node[right=4pt,text=gray!70] {$N_2$} (b2);
  \draw[<->,gray!70] (a3) -- node[right=4pt,text=gray!70] {$N_3$} (b3);
  \draw[<->,gray!70] (aK) -- node[right=4pt,text=gray!70] {$N_K$} (bK);

  \draw[->,dashed,purple!70!black] 
    (a2) to[bend left=20] node[above=2pt,pos=0.6,text=purple!80!black] {$q_2$} (a1);
  \draw[->,dashed,purple!70!black] 
    (a3) to[bend left=35] node[above=3pt,pos=0.65,text=purple!80!black] {$q_3$} (a1);
  \draw[->,dashed,purple!70!black] 
    (aK) to[bend left=50] node[above=4pt,pos=0.7,text=purple!80!black] {$q_K$} (a1);

  \node[draw=red!70!black,fill=red!5,rounded corners,minimum width=20mm,minimum height=8mm,align=center] 
      (C) at (18.0,0.0) {$C_{\rm QCD}$};

  \draw[->,green!60!black] (a1) -- node[above=2pt,pos=0.4,text=green!60!black] {$1$} (C);
  \draw[->,green!60!black] 
    (a2) to[bend left=10] node[above=2pt,pos=0.45,text=green!60!black] {$q_2$} (C);
  \draw[->,green!60!black] 
    (a3) to[bend left=20] node[above=2pt,pos=0.5,text=green!60!black] {$q_3$} (C);
  \draw[->,green!60!black] 
    (aK) to[bend left=30] node[above=2pt,pos=0.55,text=green!60!black] {$q_K$} (C);

  \node[align=left,text=black] at (8.8,2.3) {\small $a_{\rm eff}=a_1+q_2 a_2+\cdots+q_K a_K$};
  \node[align=left,text=black] at (8.8,1.6) {\small $\Rightarrow\ \theta\sim\theta+\tfrac{2\pi}{N_{\rm eff}},\quad N_{\rm eff}=\prod_i N_i$};

\end{tikzpicture}
\caption{Clockwork chain of BF sectors. 
Each site $(a_i,B_i)$ realizes a $\mathbb{Z}_{N_i}$ one-form gauge symmetry with BF coupling $N_i\,B_i\wedge \mathrm{d}a_i$. 
The weighted diagonal field $a_{\rm eff}$ couples to $C_{\rm QCD}$ with coefficients $(1,q_2,\ldots,q_K)$, chosen as in \eqref{eq:qchoice} to produce a single diagonal $\mathbb{Z}_{N_{\rm eff}}$ acting on $\theta$.}
\end{figure}
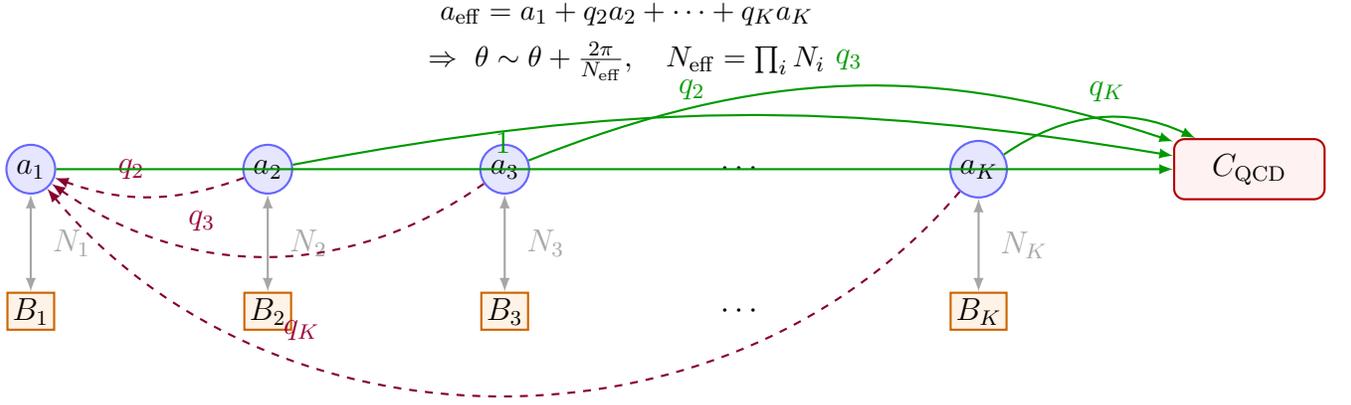

\bigskip

\section{Numerics and sample scales}

The discrete projection derived above acts on the QCD vacuum functional by averaging over $N$ evenly spaced images of the microscopic $\theta$ angle. 
Denoting by $Z_{\rm YM}(\vartheta)$ the pure Yang–Mills partition function at angle $\vartheta$, the projected theory is
\begin{equation}
\label{eq:E1}
Z_{N}(\theta) \;=\; \frac{1}{N}\sum_{m=0}^{N-1} Z_{\rm YM}\!\left(\theta+\frac{2\pi m}{N}\right),
\end{equation}
which implies that the physical vacuum energy is the lower envelope of the $N$ branches $E_{\rm YM}(\theta+2\pi m/N)$ \cite{Witten:1998uka}. 
Throughout this section we set $N\equiv N_{\rm eff}$ of Appendix~D for notational simplicity. 
The envelope construction enforces that the dynamically selected effective angle lies inside the principal cell of width $2\pi/N$ around the CP-symmetric point. 
Hence,
\begin{equation}
\label{eq:E2}
|\bar\theta| \;\le\; \frac{\pi}{N}.
\end{equation}
Near the minimum on any given branch the vacuum energy admits the small-angle expansion $E_{\rm YM}(\vartheta)=\tfrac12\,\chi_t\,\vartheta^2+O(\vartheta^4)$ in terms of the zero-temperature topological susceptibility $\chi_t=\int {\rm d}^4x\,\langle q(x)q(0)\rangle_c>0$ \cite{DiVecchia:1980yfw,Witten:1998uka}. 
Equation \eqref{eq:E1} then yields, for the curvature of the envelope across its full period, the scaling law
\begin{equation}
\label{eq:E3}
\chi_{N} \;=\; \frac{\partial^2 E}{\partial\theta^2}\bigg|_{\rm period} \;\sim\; \frac{\chi_t}{N^2},
\end{equation}
while the energy density difference to the nearest metastable branch at $\Delta\theta=2\pi/N$ follows as
\begin{equation}
\label{eq:E4}
\Delta E \;=\; E_{\rm YM}\!\left(\frac{2\pi}{N}\right)-E_{\rm YM}(0) \;=\; \frac12\,\chi_t\left(\frac{2\pi}{N}\right)^{2} \;=\; \frac{2\pi^2\,\chi_t}{N^2}+O\!\left(\frac{1}{N^4}\right).
\end{equation}
The codimension-one objects that interpolate between adjacent branches carry one unit of the diagonal topological charge and have tension fixed by dimensional analysis and anomaly inflow. 
Their tension scales as
\begin{equation}
\label{eq:E5}
\sigma \;\sim\; N\,\Lambda_{\rm QCD}^{3},
\end{equation}
which is the $p$-form analog of the clockwork scaling (the diagonal membrane is the bound state of $K$ site membranes whose combined charge sources the jump $\Delta\theta=2\pi/N$). 
This scaling matches the large-$N$ picture of domain walls at special angles adapted to the discrete projection \cite{Gaiotto:2017yup,Witten:1998uka}. 
The interplay among \eqref{eq:E2}–\eqref{eq:E5} produces the characteristic hierarchy: the $\theta$ sensitivity of observables and the spectroscopy of metastable states are suppressed as powers of $1/N$, while the tension of the extended objects that protect the selection rule grows linearly with $N$. 

For numerical baselines we adopt natural units $\hbar=c=k_B=1$, the standard conversion $1~{\rm GeV}^{-1}=1.97327\times 10^{-14}\,{\rm cm}$, and a reference susceptibility $\chi_t(0)\simeq (75~{\rm MeV})^4=3.16406\times 10^{-5}\,{\rm GeV}^4$ consistent with continuum-extrapolated lattice results near $T=0$ \cite{Borsanyi2016ksw}. 
For the neutron electric dipole moment we use the standard chiral/QCD determination
\begin{equation}
\label{eq:E6}
|d_n| \;\simeq\; \kappa_n\,|\bar\theta|,\qquad \kappa_n\simeq (2.4\pm 1.0)\times 10^{-16}\,e\cdot{\rm cm},
\end{equation}
which reproduces the leading $\bar\theta$-induced CP-odd pion–nucleon couplings and their embedding in the nucleon EDM \cite{Crewther:1979pi,DiVecchia:1980yfw}. 
Combining \eqref{eq:E2} and \eqref{eq:E6} gives
\begin{equation}
\label{eq:E7}
|d_n|_{\rm max}(N) \;=\; \kappa_n\,\frac{\pi}{N},
\end{equation}
which we shall quote both in $e\cdot{\rm cm}$ and in ${\rm GeV}^{-1}$ using $1~e\cdot{\rm cm}=5.06773\times 10^{13}\,e\,{\rm GeV}^{-1}$. 

To quantify the scales that enforce \eqref{eq:E2}, it is useful to invert the bound for a phenomenological target. 
The canonical naturalness criterion $|\bar\theta|\lesssim 10^{-10}$ is achieved once
\begin{equation}
\label{eq:E8}
N \;\gtrsim\; \pi\times 10^{10} \;\simeq\; 3.141592654\times 10^{10},
\end{equation}
which is readily realized by a modest clockwork chain with order ten site factors as discussed in Appendix~D. 

The scaling relations are now explicit. 
Equation \eqref{eq:E7} makes the EDM prediction scale as $|d_n|\propto \pi/N$, so multiplying $N$ by a factor of ten suppresses $|d_n|$ by an order of magnitude at fixed $\kappa_n$. 
Equations \eqref{eq:E3}–\eqref{eq:E4} show that the global $\theta$ curvature of the projected theory and all inter-branch energy splittings scale as $1/N^2$. 
Equation \eqref{eq:E5} exhibits the complementary linear growth of the domain-wall tension with $N$. 
These joint scalings guarantee radiative and gravitational stability. 
Radiative corrections cannot violate the discrete gauging that underlies \eqref{eq:E1}. 
Loop effects merely renormalize $\kappa_n$, $\chi_t$ and the precise envelope shape, but all $\bar\theta$-dependent effects remain constrained by \eqref{eq:E2} and hence are parametrically small. 
Background gravitational effects are encoded by the mixed selection rule $Q+\kappa_G Q_G\in N\mathbb{Z}$ derived in Appendix~D, which shows that any wormhole or gravitational-instanton contribution can shift the branch only by integer multiples of $2\pi/N$ \cite{Gaiotto:2017yup,Witten:1998uka}. 
The minimal nontrivial shift is precisely the inter-branch separation that defines the principal cell, so the inequality \eqref{eq:E2} is preserved nonperturbatively. 
At the same time, the exponentially suppressed inter-branch tunneling rate,
\begin{equation}
\label{eq:E9}
\Gamma/V \;\sim\; \mu^4\,\exp\!\left[-\frac{27\pi^2}{2}\frac{\sigma^4}{(\Delta E)^3}\right] \;\sim\; \mu^4\,\exp\!\left[-c\,N^{7}\,\frac{\Lambda_{\rm QCD}^{12}}{\chi_t^{3}}\right],
\end{equation}
with $\mu$ a hadronic scale and $c$ an order-one constant, is catastrophically small already for $N\gtrsim 10^{10}$ because $\sigma\propto N$ while $\Delta E\propto 1/N^2$. 
Thus the discrete projection is robust against radiative destabilization and against gravitational interference across the entire phenomenologically relevant range of $N$. 

The neutron-EDM prediction as a function of $N$ can be visualized on logarithmic axes by inserting \eqref{eq:E7} with the central value $\kappa_n=2.4\times 10^{-16}\,e\cdot{\rm cm}$. 
For reference we overlay horizontal guides near present and next-generation sensitivities. 
\begin{figure}[t]
\centering
\pgfplotsset{compat=1.18}
\begin{tikzpicture}
\begin{loglogaxis}[
width=0.9\linewidth, height=0.8\linewidth,
xlabel={$N$}, ylabel={$|d_n|~[e\,\mathrm{cm}]$},
xmin=1e5, xmax=1e14, ymin=1e-30, ymax=1e-22,
legend style={at={(0.38,0.97)},anchor=north west,draw=none,fill=none,font=\small}
]
\addplot+[domain=1e5:1e14,samples=400,thick] {2.4e-16*pi/x};
\addlegendentry{$|d_n|=\kappa_n \pi/N$ with $\kappa_n=2.4\times 10^{-16}$}
\addplot+[thick] coordinates {(1e5,1.8e-26) (1e14,1.8e-26)};
\addlegendentry{current sensitivity $\sim 1.8\times 10^{-26}$}
\addplot+[thick,dashed] coordinates {(1e5,1.0e-28) (1e14,1.0e-28)};
\addlegendentry{next-generation target $\sim 10^{-28}$}
\end{loglogaxis}
\end{tikzpicture}
\caption{Predicted neutron EDM versus $N$ from \eqref{eq:E7}. 
The discrete projection drives $|d_n|$ below present and projected sensitivities once $N\gtrsim 10^{10}$, corresponding to $|\bar\theta|\lesssim 10^{-10}$.}
\end{figure}
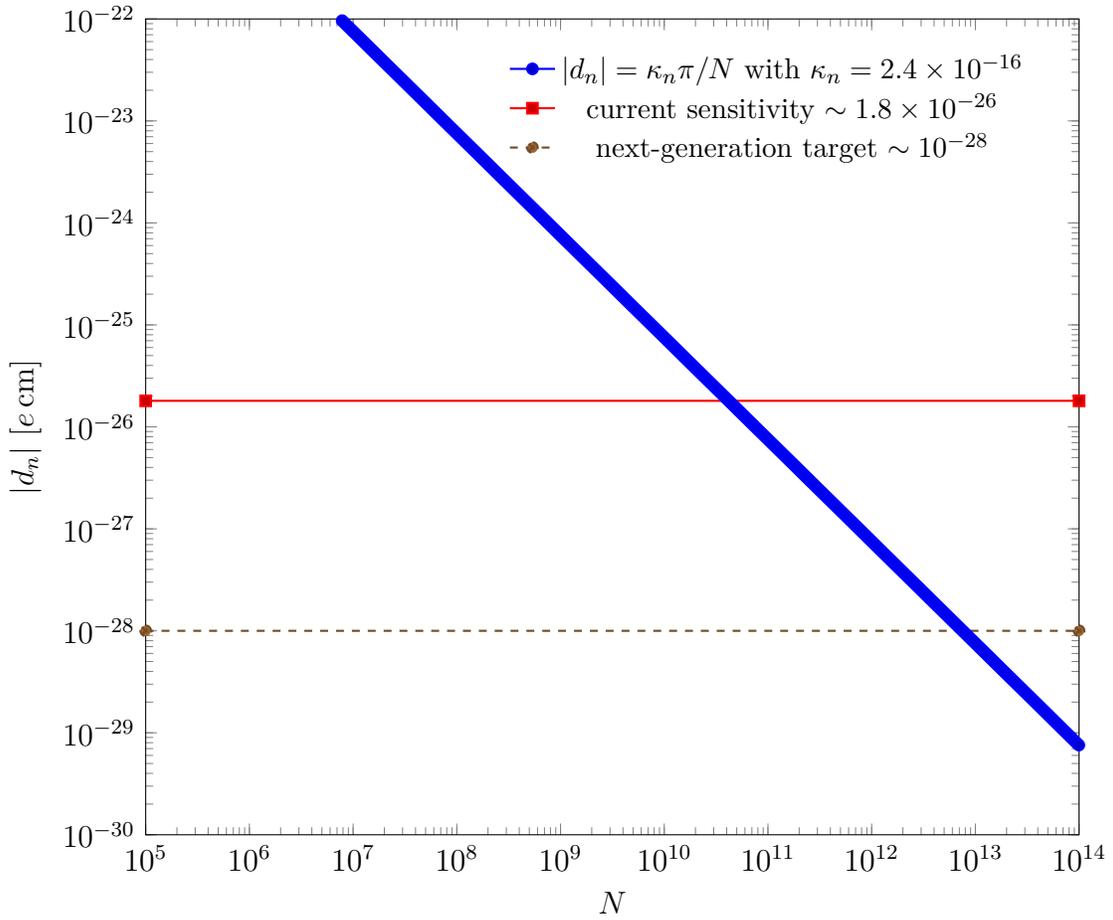

The foregoing relations make the parametric domain transparent. 
Imposing $|\bar\theta|\lesssim 10^{-10}$ requires $N\gtrsim 3.14\times 10^{10}$ as in \eqref{eq:E8}, which already pushes the EDM prediction to $|d_n|\lesssim 7.54\times 10^{-26}\,e\cdot{\rm cm}$ and reduces the global susceptibility by ten orders of magnitude relative to $\chi_t$. 
Taking $N$ an order of magnitude larger drives $|d_n|$ to the $10^{-27}$ to $10^{-28}\,e\cdot{\rm cm}$ decade while further suppressing $\chi_N$ and $\Delta E$ by two additional orders of magnitude. 
At the same time it increases the protecting membrane tension into the range $\sigma\sim (10^9$ to $10^{11})\,{\rm GeV}^3$ for $\Lambda_{\rm QCD}\simeq 0.33~{\rm GeV}$. 
Such values are readily attainable with multi-sector clockwork architectures in Appendix~D, for example with ten sites of order ten charge or with a mild distribution of co-prime $N_i$ whose product saturates \eqref{eq:E8}. 
The discrete projection thus provides a quantitatively controlled and radiatively as well as gravitationally stable mechanism by which the effective $\bar\theta$ of QCD is dynamically confined to $|\bar\theta|\lesssim 10^{-10}$. 
The precise scaling relations \eqref{eq:E3}–\eqref{eq:E5} map directly onto lattice inputs for $\chi_t$ and onto the experimental program on the neutron electric dipole moment \cite{Crewther:1979pi,DiVecchia:1980yfw,Witten:1998uka,Gaiotto:2017yup,Borsanyi2016ksw}.


\bibliographystyle{ytphys.bst}
\bibliography{refs}

\end{document}